\newtheorem{maintheorem}{Theorem}
\newtheorem{theorem}{Theorem}[section]
\newtheorem{corollary}[theorem]{Corollary}
\newtheorem{lemma}[theorem]{Lemma}
\newtheorem{claim}[theorem]{Claim}
\newtheorem{proposition}[theorem]{Proposition}
\newtheorem{definition}[theorem]{Definition}
\begin{document}

\title{Simulated Tempering and Swapping on Mean-field Models }

%    Information for first author
\author{Nayantara Bhatnagar}

%    Current address
%\curraddr{}
%    Address of record for the research reported here
\address{Department of Mathematical Sciences,  University of Delaware, Newark, DE, 19716.}
\email{nayantara.bhatnagar@gmail.com}
%    \thanks will become a 1st page footnote.
%\thanks{Supported by NSF Grant DMS-1261010.}

%    Information for second author
\author{Dana Randall}
\address{School of Computer Science and School of Mathematics, Georgia Institute of Technology, Atlanta, GA 30332.}
\email{randall@cc.gatech.edu}
%\thanks{Supported in part by NSF CCR-0105639 and an Alfred P. Sloan research
%fellowship.
%Part of this work was done while visiting Microsoft Research and the
%Isaac Newton
%Institute for Mathematical Sciences in Cambridge, UK.}

%    General info
%\subjclass[2000]{Primary 54C40, 14E20; Secondary 46E25, 20C20}
%
%\date{}
%
%\dedicatory{This paper is dedicated to our advisors.}
%
%\keywords{Differential geometry, algebraic geometry}

\maketitle

\begin{abstract}

Simulated and parallel tempering are families of Markov Chain Monte
Carlo algorithms where a temperature parameter is varied during the
simulation to overcome bottlenecks to convergence due to multimodality.

In this work we introduce and analyze the convergence for a set of new
tempering distributions which we call {\em entropy dampening}. For asymmetric
exponential distributions and the mean field Ising model with and
external field simulated tempering is known to converge
slowly. We show that tempering with entropy dampening distributions
mixes in polynomial time for these models.

Examining slow mixing times of tempering more closely, we show that for the
mean-field 3-state ferromagnetic Potts model, tempering converges
slowly regardless of the temperature schedule chosen. On
the other hand, tempering with entropy dampening distributions
converges in polynomial time to stationarity.
Finally we show that the slow mixing can be very
expensive practically. In particular, the mixing time of simulated
tempering is an exponential factor longer than the mixing time at the
fixed temperature.

\end{abstract}

\section{Introduction}
Markov Chain Monte Carlo (MCMC) methods for sampling from a target
distribution have become ubiquitous in Bayesian statistics
\cite{RobCas}, fields such
as machine learning \cite{AdDJ}, and in simulations of large physical
systems \cite{RubKro}. MCMC has also played an important role in several central
results in the theory of algorithms \cite{JSV,LV}.

It is usually straightforward to design an MCMC algorithm
which converges to the desired target distribution. Unfortunately, a common
difficulty in applications from statistics and statistical physics is
multimodality in the distribution which can cause the algorithm to take
an impractical amount of time to converge.

{\em Simulated tempering} \cite{MarPar,GeyTho} and Metropolis-coupled
MCMC (or {\em
  parallel tempering}, or {\em swapping}) \cite{Gey} are Markov chain
samplers related to
simulated annealing which are widely used for sampling in the presense of
multimodality. Their popularity in practice makes it important to
understand their convergence rates theoretically.  In these algorithms,
a temperature
parameter is randomly updated over a range of values during the
simulation. The idea is to speed up sampling at low temperatures,
circumventing the bottlenecks of the multimodal distribution by
sampling some of the time at higher temperatures. In the following
discussion, by ``converges quickly'' or that there is ``fast convergence" we mean that the Markov chain
converges to within a small distance of the equilibrium stationary
distribution in time
that is polynomial in the size of the states. ``Slow convergence'' means that
even after an exponential amount of time in this parameter, the chain is far from the
equilibrium distribution.

%It was shown by Zheng
%\cite{Zh} that under an appropriate choice of temperatures, if
%swapping converges rapidly then so does simulated tempering. This type
%of result is useful as it allows us to perform analysis on one
%algorithm while implementing another in practice.

Obstructions to the fast
convergence of the dynamics often arise in models from statistical physics which exhibit phase transitions.
The {\em Ising model} and its generalization, the {\em Potts model} are models
from statistical physics of large numbers of interacting particles
where the equilibrium distribution exhibits multimodality. Due to
this, local MCMC algorithms for simulating such systems converge slowly.
Madras and Zheng \cite{MZ} analyzed simulated tempering
for two symmetric bimodal distributions - the mean field Ising model
and an ``exponential valley'' distribution. In both cases they
showed that tempering and swapping converge quickly at any
temperature. Their analysis makes use of the decomposition theorems of
\cite{MR} which say that it is sufficient to bound the convergence time
of the chain within each mode and of the macro-chain over all the
temperatures by a polynomial.

In contrast, in \cite{BR}, we showed that for the 3-state mean-field
ferromagnetic Potts model (which is not bimodal, but rather, has three
modes at low temperature), simulated tempering converges prohibitively
slowly. This is caused by a 
phase transition in the Potts model which is of a different type than the
phase transition in the Ising model. In
fact due to the nature of the phase transition, simulated tempering
converges slowly regardless of the intermediate temperatures chosen
for tempering\footnote{Some of these results appear in an extended abstract \cite{BR} and in
  thesis form \cite{thesis}. This is the full version which contains complete proofs of all the results.}. The proof of this theorem appears in unpublished form in \cite{thesis}  and we include the full proof here.

Woodard, Huber and Schmidler \cite{Woodard,Woodard2}
generalized the above examples to give frameworks for polynomial
and slow convergence of simulated and parallel tempering for more general
measures and state spaces. In particular in \cite{Woodard}, they give sufficient
conditions on a distribution and a sequence of temperatures for
simulated and parallel tempering to converge polynomially. In \cite{Woodard2} they
show several cases where if the above conditions are violated,
simulated and parallel tempering will converge slowly. In particular,
one property that plays an important role is what they term the ``persistence'' of a
distribution. Roughly, they show that if there is a single mode of the
distribution which is very narrow or ``spiky'' compared to the other
modes but has about the same probability mass then tempering  and
swapping converge slowly. Woodard et al. use this property to explain
the slow convergence for the 3-state Potts model as well.

The main results of the current paper touch upon these last points.
In \cite{BR}, we also extended the results of
Madras and Zheng showing polynomial convergence of swapping for symmetric
exponential distributions to the case of an assymetric exponential
distribution. This more general result leads to the insight that it is
possible to choose the distributions for tempering more advantageously
if we do not restrict to distributions that are paramterized by temperature. As an
application in \cite{BR} we cited polynomial convergence of swapping (and
hence simulated tempering) for the mean-field Ising model with an
external field. In this paper we present the full proof of this
result. We define certain ``entropy dampening distributions''
which make use of properties of the stationary measure. We show
that if entropy dampening distributions are used for the 3-state
mean-field Potts model, then in fact tempering mixes polynomially. These
examples of polynomial mixing do not fall under the sufficient conditions
given in \cite{Woodard} since we make use of more general
tempered densities. 

Lastly, we show that there are cases when the mixing
time of the tempering algorithm can be significantly slower than that
of the fixed temperature Metropolis algorithm; i.e., even if we use a
polynomial number of distributions, the mixing time of the tempering chain
may be exponentially larger than that of the chain at the fixed low
temperature. This contradicts the conventional
wisdom that tempering can be in the worst case slower by a factor that is
polynomial in the number of temperatures. Our proof makes use of
sharper results about the slow mixing beyond the conditions in
\cite{BR,Woodard}.

We point out that the examples considered here are tractable by other
means and one does not need a Markov
chain to sample configurations. Neverthless, we feel the methods
presented here offer some insight
into how to design more robust tempering algorithms in general.

The remainder of this paper is organized as follows. In Section \ref{sec:spinsys} we
present some preliminaries on spin systems and Markov chains. In
Section \ref{sec:sim-temp-swap} we define the simulated tempering and swapping
algorithms formally. The statements of the main theorems can be found in Section \ref{sec:mainthms}.
In Section \ref{fast} we analyze the convergence time of the swapping
algorithm for asymmetric exponential distributions. In Section
\ref{newswap} we show that the swapping algorithm using a modified
entropy dampening distribution mixes polynomially for sampling from the
mean-field Ising model with any external field.
In Section \ref{torpid} we show that
there is a temperature at which simulated tempering mixes
exponentially slowly for the 3-state
mean-field ferromagnetic Potts model. Finally, in Section \ref{slower}
we show that in certain cases tempering can slow down the convergence
of the Metropolis algorithm by an exponential factor.

\section{Spin Systems and Markov Chains}
\label{sec:spinsys}
The {\em $q$-state Potts model} \cite{potts} on a finite graph $G=(V,E)$ at
inverse temperature $\beta \ge 0$ with an external magnetic field is
defined as follows. The set of possible {\em configurations} $\Omega$ is
$\{1,\ldots,q\}^{V}$ where a configuration $x$ is an assignment of one
of $q$ {\em spins} to each vertex of $G$ and $x_i$ denotes the spin of $i \in V$. The case $q=2$ corresponds to
the classical special case of the {\em Ising model}.
In this case, the set of spins is conventionally taken to be $\{+1,-1\}$ and we will follow this notation. Spins may also be referred to as colors and we use the two interchangeably. For a spin configuration $x$, let $\sigma(x) = \sigma = (\sigma_1,\ldots,\sigma_q)$ where $\sigma_i$ denotes the number of vertices on $x$ with spin $i$ for $1 \le i \le q$. The {\em Hamiltonian} of a configuration $x$ is
defined by
\begin{align*}
H(x) = \sum_{(i,j) \in E}\delta_{x_i,x_j}+\sum_{m=1}^q\sum_{i \in V}h_m\delta_{x_i,m}
\end{align*}
where $\delta_{x_i,x_j}$ is the Kronecker delta function and
$h=\{h_m\}_{m=1}^q$ are real numbers representing the external fields.
The probability that the system has a given configuration $x$ at
inverse temperature $\beta$ is given by the {\em Gibbs distribution}:
\begin{eqnarray*}
\pi_{{\beta}}(x) = \frac{e^{{\beta} H(x)}}{Z(\beta,h)},
\end{eqnarray*}
where $Z({\beta,h})$ is a normalizing constant known as the {\em
  partition function}. In the case that the external fields are 0, we denote the partition function by $Z(\beta)$. The higher the inverse temperature $\beta$ the
more the distribution favors configurations which have many
neighboring vertices with the same spin. At $\beta=0$, i.e. infinite
temperature, the Gibbs distribution is uniform over all
configurations. Here we are concerned with $\beta \ge 0$ which is the
{\em ferromagnetic} Potts model. In contrast, in the {\em anti-ferromagnetic} case $\beta<0$, neighbors in the underlying
graph prefer to have different spins. 

\subsection{Markov Chains}
In the MCMC method, the Markov chain performs
a random walk on the {\em Markov kernel}, which is a graph defined on
the space of configurations. One such Markov chain for sampling from a
Gibbs distribution is the {\em heat
  bath Glauber dynamics}. Starting at a state $x_0 \in \Omega$, at each
time step a vertex is chosen at random from $V$ and its spin is updated by
choosing it according to $\pi_\beta$ conditioned on the spins of the
other vertices. Thus the kernel for this chain is the graph on
$\Omega$ where there is an edge between two configurations if they
differ by the spin of one vertex. It can be checked that the chain is
reversible with respect to $\pi_\beta$ and ergodic and thus $\pi_\beta$ is the stationary
distribution.

In general, given a connected kernel, it is straightforward to sample from a desired
distribution $\pi$ on $\Omega$ using the Metropolis-Hastings algorithm \cite{met}. Suppose
that $Q$ is the transition matrix of irreducible, symmetric Markov
chain over the state space $\Omega$; this will be the {\em proposal
  chain}. The transition matrix $P$ of the Metropolis Markov chain is
given by
\begin{align*}P(x,y) = \left\{
\begin{array}{ll}
Q(x,y) \min\left( 1, \frac{\pi(y)}{\pi(x)} \right) & \textrm{ if } y
\neq x \\
1-\sum_{z \neq x}P(x,z) & \textrm{ if } y = x.
\end{array}
\right.
\end{align*}
The chain $P$ is irreducible and reversible with respect to $\pi$ and therefore $\pi$ is the stationary distribution of $P$ (see e.g. \cite{DiaSal98}).

The convergence of a Markov chain can be measured by the {\em mixing time}, the
time for the chain to come within a small distance of the equilibrium
distribution. Let $(X_t)$ be an ergodic (i.e., irreducible and
aperiodic), reversible Markov chain with finite state space $\Omega$,
transition probability matrix $P$, and stationary distribution $\pi$.
Let $P^t(x,y)$ denote the $t$-step transition probability from $x$ to $y$.
\begin{definition}
The {\em total variation distance} at time $t$ of $(X_t)$ to
stationarity is
\[
\|P^t,\pi\|_{tv}=\max_{x\in\Omega}\frac{1}{2}\sum_{y\in\Omega}|P^t(x,y)-\pi(y)|.
\]
\end{definition}
\begin{definition}
Let $0<\varepsilon<1$, then the {\em mixing time} $\tau(\varepsilon)$
is defined to be
\[
\tau(\varepsilon) := \min\{t : \|P^{t'},\pi\|_{tv} \leq\varepsilon,
       \forall t'\geq t\}.
\]
\end{definition}
We say that the Markov chain $(X_t)$ {\em mixes polynomially} if
the mixing time is bounded above by a polynomial in $n$ and
$\log\frac{1}{\varepsilon}$, where $n$ is the number of coordinates
of each configuration in the state space.  When the mixing time is exponential
in $n$, we say the chain {\em mixes torpidly} or {\em slowly} or {\em exponentially slowly}.

There are several methods to obtain a bound on the mixing time of a
Markov chain. The inverse of the {\em spectral gap} of the transition matrix
of a Markov chain characterizes the mixing time as follows.
Let $\lambda_0,\lambda_1,\ldots,\lambda_{|\Omega|-1}$
be the eigenvalues of an ergodic reversible Markov chain with
transition matrix $P$, so that
$1=\lambda_0 > |\lambda_1| \geq |\lambda_i|$ for all $i\geq 2$.
Let the spectral gap be $Gap(P): = \lambda_0 - |\lambda_1|$.

\begin{theorem}[\cite{LevPerWil09}]\label{thm:spec_gap_thm}
For any $\varepsilon>0$,
\[  \left(\frac{1}{Gap(P)}-1 \right) \log\left(\frac{1}{2\varepsilon}\right)     \le \tau(\varepsilon)
  \leq \frac{1}{Gap(P)}\log \left(\frac{1}{\pi^*\varepsilon} \right) \]
\end{theorem}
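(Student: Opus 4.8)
The plan is to obtain both inequalities from the spectral decomposition of $P$, in the standard way. Since $P$ is reversible with respect to $\pi$ it is self-adjoint on the real inner-product space $\ell^2(\pi)$, so we may fix an orthonormal basis of real eigenfunctions $f_0 \equiv 1, f_1, \dots, f_{|\Omega|-1}$ with $Pf_j = \lambda_j f_j$, the $\lambda_j$ ordered as in the statement; write $\lambda_* = |\lambda_1|$, so that $Gap(P) = 1 - \lambda_*$, and recall $\pi^* = \min_{x \in \Omega} \pi(x)$. Expanding the kernel in this basis gives, for all $x,y$,
\begin{align*}
\frac{P^t(x,y)}{\pi(y)} \;=\; \sum_{j=0}^{|\Omega|-1} \lambda_j^{\,t}\, f_j(x)\, f_j(y), \qquad\text{so}\qquad \frac{P^t(x,y)}{\pi(y)} - 1 \;=\; \sum_{j \ge 1} \lambda_j^{\,t}\, f_j(x)\, f_j(y).
\end{align*}
We also record the identity $\sum_{j} f_j(x)^2 = P^0(x,x)/\pi(x) = 1/\pi(x)$.

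For the upper bound I would first control the $\chi^2$-distance and pass to total variation by Cauchy--Schwarz. Indeed, writing $\|P^t(x,\cdot),\pi\|_{tv}$ for $\tfrac12\sum_y|P^t(x,y)-\pi(y)|$,
\begin{align*}
2\,\|P^t(x,\cdot),\pi\|_{tv} \;=\; \sum_{y} \pi(y)\Bigl|\tfrac{P^t(x,y)}{\pi(y)} - 1\Bigr| \;\le\; \Bigl( \sum_{y} \pi(y)\bigl(\tfrac{P^t(x,y)}{\pi(y)} - 1\bigr)^2 \Bigr)^{1/2} \;=\; \Bigl( \sum_{j \ge 1} \lambda_j^{\,2t} f_j(x)^2 \Bigr)^{1/2},
\end{align*}
the last step by orthonormality. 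Since $|\lambda_j| \le \lambda_*$ for $j \ge 1$, the right-hand sum is at most $\lambda_*^{\,2t} \sum_{j \ge 1} f_j(x)^2 \le \lambda_*^{\,2t}/\pi(x) \le \lambda_*^{\,2t}/\pi^*$, so $\|P^t(x,\cdot),\pi\|_{tv} \le \tfrac12 \lambda_*^{\,t}/\sqrt{\pi^*}$ for every $x$. This is at most $\varepsilon$ once $t\log(1/\lambda_*) \ge \log\bigl(1/(2\varepsilon\sqrt{\pi^*})\bigr)$; using $\log(1/\lambda_*) \ge 1 - \lambda_* = Gap(P)$ together with $\log\bigl(1/(2\varepsilon\sqrt{\pi^*})\bigr) \le \log\bigl(1/(\pi^*\varepsilon)\bigr)$ (as $\pi^* \le 1$) shows $t \ge \frac{1}{Gap(P)}\log\frac{1}{\pi^*\varepsilon}$ suffices, which is the asserted upper bound on $\tau(\varepsilon)$.

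For the lower bound I would test the chain against the eigenfunction $f_1$. Since $f_1$ is orthogonal to $f_0 \equiv 1$ in $\ell^2(\pi)$ we have $\E_\pi[f_1] = 0$, while $\E_x[f_1(X_t)] = (P^t f_1)(x) = \lambda_1^{\,t} f_1(x)$; also $\bigl|\E_x[f_1(X_t)] - \E_\pi[f_1]\bigr| = \bigl|\sum_y (P^t(x,y) - \pi(y)) f_1(y)\bigr| \le 2\|f_1\|_\infty\,\|P^t(x,\cdot),\pi\|_{tv}$. Choosing $x$ to maximize $|f_1|$ gives $\|P^t(x,\cdot),\pi\|_{tv} \ge \tfrac12 \lambda_*^{\,t}$, so whenever $\lambda_*^{\,t} > 2\varepsilon$ the chain fails to be $\varepsilon$-close to $\pi$ at time $t$ and hence $\tau(\varepsilon) > t$; pushing $t$ up to $\log(1/(2\varepsilon))/\log(1/\lambda_*)$ yields $\tau(\varepsilon) \ge \log(1/(2\varepsilon))/\log(1/\lambda_*)$. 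Finally the elementary bound $\log(1/\lambda_*) = -\log(1 - Gap(P)) \le Gap(P)/(1 - Gap(P)) = Gap(P)/\lambda_*$ converts this into $\tau(\varepsilon) \ge \frac{\lambda_*}{Gap(P)}\log\frac{1}{2\varepsilon} = \bigl(\frac{1}{Gap(P)} - 1\bigr)\log\frac{1}{2\varepsilon}$.

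This is a classical statement, so I do not expect a genuine obstacle. The only points needing care are: the reduction from total variation to the $\ell^2(\pi)$ (i.e.\ $\chi^2$) norm in the upper bound, together with the mild slack in replacing $1/(2\varepsilon\sqrt{\pi^*})$ by $1/(\pi^*\varepsilon)$ to match the stated form; checking that an eigenfunction realizing $|\lambda_1|$ can be taken real, which holds because $P$ is self-adjoint on real $\ell^2(\pi)$; and the routine off-by-one bookkeeping when turning the per-time total-variation lower bound into the stated bound on $\tau(\varepsilon)$.
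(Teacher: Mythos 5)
Your proof is correct and follows essentially the same route as the source the paper cites for this theorem (the paper itself quotes it from \cite{LevPerWil09} without proof): the spectral decomposition of the reversible kernel, a Cauchy--Schwarz passage from total variation to the $\ell^2(\pi)$ norm with $\sum_j f_j(x)^2 = 1/\pi(x)$ for the upper bound, and testing against a real eigenfunction achieving $|\lambda_1|$ for the lower bound, with the elementary inequalities $\log(1/\lambda_*) \ge 1-\lambda_*$ and $-\log(1-Gap(P)) \le Gap(P)/(1-Gap(P))$ used correctly to reach the stated forms. No gaps to report.
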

where $\pi^*=\min_{x} \pi(x)$.

\subsection{Mean-field Models}

The {\em Curie-Weiss} or {\em mean-field} Potts model corresponds to
the case when the graph $G$ is the complete graph on $n$
  vertices. Mean-field models are studied (see e.g. \cite{BisCha} and references therein) because
  often in higher dimensions they share
  characteristics of the model on lattices.

For the mean-field Potts at low enough temperatures, local dynamics such as
Glauber dynamics mix exponentially slowly
\cite{bcfktvv,DFJ,Mar,Ran,Tho}. This is because at low temperature,
the distribution is multimodal, consisting of ordered modes
corresponding to configurations which are predominantly of one
spin. These modes are separated by configurations which are
exponentially unlikely in the Gibbs distribution. As the temperature
is raised, there is a critical temperature beyond which a single mode of
disordered configurations dominates since the contribution of the
entropy of configurations dominates the energy, or
Hamiltonian, term. For more details on the mixing time of Glauber dynamics for
mean-field models, see \cite{DLP,CDLLPS}.

The Swendsen-Wang (SW) algorithm \cite{SweWan} is another algorithm proposed as an
alternative to local dynamics for sampling from configurations of
the $q$-state Potts model. Cooper, Dyer, Frieze and Rue \cite{CDFR} considered the
mean-field Ising model and showed that the SW algorithm mixes
polynomially at
all temperatures except possibly near the critical point. Gore and
Jerrum \cite{GJ} showed that the SW algorithm mixes
torpidly on the mean-field Potts model for $q \ge 3$ at the critical
temperature. Long, Nachmias and Peres \cite{LongNachmiasPeres} have resolved
the order of the mixing time of SW at the critical point for the Ising model. Recently, Galanis, Stefankovic and Vigoda and have studied the mixing time of the Swendsen-Wang algorithm for the mean-field Potts model when $q \ge 3$ and shown four different regimes depending on the inverse temperature \cite{GalSteVig15}.

\section{Simulated Tempering and Swapping}\label{sec:sim-temp-swap}

Simulated and parallel tempering are families of Markov chain
algorithms that have been proposed for sampling from multimodal
distributions. They are used widely in practice and their convergence
behavior for mean-field models has led to a better understanding of when
these algorithms can speed up mixing of local Markov chains
\cite{MZ,Z,BR,Woodard,Woodard2}.
The simulated and parallel tempering Markov chains are built on top of
a fixed temperature Metropolis-Hastings Markov chain. We define these
chains in the context of sampling Gibbs distributions below, although it will be clear from the definitions that the chains may be defined for more general distributions, each on the space $\Omega$.

\subsection{Simulated tempering. }\label{sec:simtemp}
Suppose that we wish to sample from a Gibbs distribution
$\pi_\beta$ over $\Omega$ at
inverse temperature $\beta$. The simulated
tempering Markov chain is defined as follows \cite{MarPar,GeyTho}.
Fix $0 = \beta_0 <  \ldots < \beta_M=\beta$, a sequence of inverse
temperatures.
The state space of the simulated tempering chain is given by
\[
\Omega_{st} = \Omega \times \{0,\ldots,M\}.
\]
Define the $i^{th}$ tempering distribution $\pi_i$ as the Gibbs
distribution at $\beta_i$
\[\pi_i := \pi_{\beta_i}, \ \ 0 \leq i \leq M. \]
Denote by $M_i$ a Metropolis-Hastings chain for sampling from
$\pi_{i}$ where at each time, independently, the proposal chain chooses a vertex in $V$ where $|V|=n$ and spin in $\{1,\ldots,q\}$ independently and uniformly at random.

The tempering Markov chain consists of two types of transitions:
{\it level moves}, which
update the configuration while keeping the temperature fixed,
and {\it temperature moves,} which update the temperature
while remaining at the same configuration. In each step of the chain,
we randomly choose with equal probability one of the two transitions to
perform (c.f. \cite{MZ} for other ways to define the chain). 

\vskip.1in
\noindent $\bullet$ A \underbar{\bf level move} connects $(x,i)$ and
$(x',i)$ with the transition probability given by
\[
P_{st}((x,i),(x',i)):= \frac{1}{2}M_i(x,x').
\]

\vskip.1in
\noindent $\bullet$ A \underbar{\bf temperature move}
connects $(x,i)$ to $(x,i\pm1)$. If the current state is $(x,i)$, choose an inverse temperature $j = i\pm 1$ with probability
$r_{i,j}$, where $r_{0,1}=r_{M,M-1}=1/2$ and $r_{i,j} =1/2$ for
$0<i<M$. The move to $(x,j)$ is accepted with the appropriate Metropolis
probability. Thus, the transition probabilities are given by
\begin{align*}
P_{st}((x,i),(x,j)):=
\begin{cases}
\frac{1}{2}r_{i,j} \min
\left(1,\frac{\pi_j(x)}{\pi_i(x)}\right) & |j-i|=1\\
0 & |j-i|>1\\
\frac 12- \displaystyle\sum_{j=\pm 1} \frac{1}{2} r_{i,j} \min
\left(1,\frac{\pi_j(x)}{\pi_i(x)}\right) & j=i
\end{cases}
\end{align*}

It is straightforward to verify that the chain is reversible with respect to $\pi_{st}$. The transition probabilities ensure that the stationary distribution $\pi_{st}$ is uniform over all temperatures
and the conditional distributions $\pi_{st}(\cdot,i)$ for $0 \le i \le M$ are proportional to the fixed
temperature Gibbs distributions $\pi_i$. That is,
\[
\pi_{st}(\cdot,i) = \frac{1}{M+1} \pi_i(\cdot).
\]

If $M$ is chosen to be a
polynomial in $n$, the stationary weight of
set of states at each fixed inverse temperature $\beta_i$ is at least an inverse polynomial
fraction of the state space. A common choice of inverse temperatures is to take $\beta_i = i\beta/M$. It can be
verified that in this case if $M$ is at least polynomial in $n$, the transition
probabilities are non-negligible, by bounding the size of the ratio
$\frac{Z(\beta_i)}{Z(\beta_{i\pm1})}$.
Notice that while the exponential factor
is simple to calculate given $x$ and $i$,
it is not clear that we can compute the ratio of partition functions in order to implement the simulated tempering algorithm.
The swapping algorithm is designed to avoid this difficulty in
implementing temperature moves.

\subsection{Swapping. } The swapping algorithm was defined by Geyer
\cite{Gey}. Let $\pi_\beta$ be the Gibbs distribution from which we wish
to sample. Fix $0 = \beta_0 <  \ldots < \beta_M=\beta$, a sequence of inverse
temperatures. The state space is the product space $\Omega_{sw} =
\Omega^{(M+1)}$, the product of $M + 1$
copies of the original
state space, where each coordinate corresponds to an inverse temperature.
A configuration in the swapping chain
is denoted by an $(M + 1)$-tuple ${x}=(x_0,\ldots,x_M)$.
As before, define $\pi_i := \pi_{\beta_i}$ for $0 \leq i \leq M$ and
let $M_i$ be a Metropolis-Hastings chain for sampling from
$\pi_{i}$ where at each time proposal chain chooses a vertex and spin
independently and uniformly at random.
Define $\pi_{sw}$ to be the product measure of the distributions $\pi_i$
\[\pi_{sw}(x) := \prod_{i=0}^M \pi_i(x_i).\]
In each step, the swapping Markov chain chooses an inverse temperature $\beta_i$
uniformly at random and chooses uniformly from the following two types of transitions.

\vskip.1in
\noindent $\bullet$ A \underbar{\bf level move} connects
${x}=(x_0,\ldots,x_i,\ldots,x_M)$
and ${x'}=(x_0,\ldots,x_i',\ldots,x_M)$ if ${x}$ and
${x'}$ agree in all but the $i^{th}$ components, and $x_i$ and $x_i'$
are connected by one-step transitions of the Metropolis algorithm on $\Omega$.
In this case, the transition from $x$ to $x'$ has transition probability
\[
P_{sw}(x,x') = \frac{1}{2(M\!+\!1)}M_i(x,x').
\]
\noindent $\bullet$ A \underbar{\bf swap move} connects
${x}=(x_0,\ldots,x_i,x_{i+1},\ldots,x_M)$ to
${x'}=(x_0,\ldots,x_{i+1},x_i,\ldots,x_M)$, i.e., it exchanges
the $i^{th}$ and ${i+1}^{st}$ components with an appropriately chosen
Metropolis probability. From the current state $x$, choose a
coordinate $i$ uniformly at random from $\{0,\ldots,M-1\}$. Let $x'$ be the configuration obtained by exchanging the $i^{th}$ and ${i+1}^{st}$ components of $x$. Then, the probability of the transition from $x$ to $x'$ is given by 
\begin{eqnarray*}
P_{sw}(x,x') & = & \frac{1}{2M}
\min\left(1, \frac{\pi_{sw}({x'})}
{\pi_{sw}({x})}\right) \\
& = & \frac{1}{2M} \min\left(1, \frac{\pi_{i+1}(x_i)
\pi_i(x_{i+1})}{\pi_i(x_i)\pi_{i+1}
(x_{i+1})}\right) \\
& = & \frac1{2M} \min
\left(1, e^{(\beta_{i+1}-\beta_{i})(H(x_i)-H(x_{i+1})}\right).
\end{eqnarray*}
It can be verified that the chain is reversible with respect to $\pi_{sw}$ and thus has stationary distribution $\pi_{sw}$. Since $\pi_{sw}$ is a product measure, samples according to $\pi_M$ can be obtained
by projecting on the last co-ordinate. Notice that in the transition probabilities above, the normalizing constants cancel out.  Hence,
implementing a move of the swapping chain is straightforward, unlike
tempering where good approximations for the partition functions are required.
Zheng proved that fast mixing of the swapping chain
implies fast mixing of the tempering chain \cite{Z}. The converse result is not known.

To define the tempering and swapping Markov chains in the case of the mean-field models that we will study, the base proposal chain for the Metropolis chain at a fixed temperature will be the {\em heat bath Glauber dynamics}. That is, at each time step, a uniformly random vertex and a uniformly random spin is chosen, and the spin of the chosen vertex is updated to the chosen spin.

For both tempering and swapping, we must be careful about how we
choose the number of distributions $M+1$ and the distributions themselves.  It is important that successive
distributions $\pi_i$ and $\pi_{i+1}$ have sufficiently small
variation distance so that temperature moves are accepted with
non-trivial probability.
At the same time, $M$ must be small enough so that 
running time of
the algorithm does not become very large. Setting $M$ to be a polynomial which is $\Omega(n)$ and setting $\beta_i = i\beta/M$ is often a reasonable choice.

In general, one can define a
sequence of distributions $\pi_0,\ldots,\pi_M$ and define the simulated
tempering and swapping chains with these as the fixed temperature
distributions by defining the transition probabilities using the
Metropolis rule as before. In the sequel we will make use of this to
define tempering and swapping chains for the mean-field Ising model
with an external field.

\section{Results for Mean-Field Models}
\label{sec:mainthms}
Although the simulated
tempering and swapping algorithms have been defined above as having fixed
temperature distributions which are Gibbs distributions, in fact the
algorithms are more general. Our first result makes use of this and
bounds the mixing time of the swapping chain which uses a different
set of fixed temperature distributions (defined in Section
\ref{sec:ent-damp}).
\begin{maintheorem}
The swapping Markov chain for the ferromagnetic mean-field Ising model
using {\em entropy dampening
distributions} mixes polynomially for every inverse temperature $\beta>0$
and any external field.
\end{maintheorem}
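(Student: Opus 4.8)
The goal is to show that the spectral gap of the swapping chain is bounded below by an inverse polynomial in $n$; the theorem then follows from Theorem~\ref{thm:spec_gap_thm}, since $\log(1/\pi^*)$ for the product measure $\pi_{sw}$ is at most $(M+1)$ times a polynomial in $n$ and $M+1$ is polynomial (each entropy dampening distribution $\pi_i$ assigns every configuration mass at least $e^{-\mathrm{poly}(n)}$ once $\beta$ is fixed). To bound the gap I would use the decomposition theorem of~\cite{MR}: it suffices to partition $\Omega_{sw}=\Omega^{M+1}$ into blocks so that (i) the swapping chain restricted to each block has inverse-polynomial gap, and (ii) the projected chain on the set of blocks has inverse-polynomial gap.

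Because the Gibbs weight, and by construction each entropy dampening weight, depends on $x$ only through the number $m(x)$ of $+1$ spins, fix a threshold $t$ separating a ``$+$'' region $A^{+}=\{x:m(x)\ge t\}$ from a ``$-$'' region $A^{-}=\{x:m(x)<t\}$; the point of the entropy dampening construction is that one can take a single $t$ at which all of $\pi_0,\dots,\pi_M$ are well-behaved (no mode is narrow relative to the others in a way that would kill a swap). Partition $\Omega_{sw}$ into the $2^{M+1}$ blocks $\Omega_{\vec s}=\prod_{i=0}^{M}A^{s_i}$, $\vec s\in\{+,-\}^{M+1}$. On a fixed block, since a common threshold is used for all coordinates, a swap between coordinates $i$ and $i+1$ stays in the block only if $s_i=s_{i+1}$, so the restricted chain decouples over maximal runs of constant sign; on such a run it is a swapping chain over several copies of a single unimodal mode, whose polynomial mixing is essentially the single-mode analysis of Madras and Zheng~\cite{MZ} (within a mode the magnetization marginal is a birth--death chain on an interval with no interior free-energy barrier, and the level and swap moves equilibrate it in polynomial time). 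The only genuinely new issue is that the two modes carry very different mass; here the entropy dampening property is used precisely to keep the swap acceptance ratios inside a block inverse-polynomially bounded.

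For (ii) the projected chain lives on $\{+,-\}^{M+1}$ with stationary measure $\prod_i\mu_i$, where $\mu_i(-):=\pi_i(A^-)$; a coordinate changes sign when a level move of the original chain carries it across $t$, and adjacent coordinates exchange signs through swap moves, accepted with probability roughly $\mu_{i+1}(\cdot)/\mu_i(\cdot)$. The analysis rests on two estimates. First, the entropy dampening distributions are designed so that consecutive ratios $\mu_i(-)/\mu_{i+1}(-)$ are bounded, hence projected swaps are accepted with inverse-polynomial probability even though $\mu_M(-)$ may itself be exponentially small; this is exactly what temperature scaling alone fails to provide, since $\pi_{\beta_i}(A^-)$ can collapse abruptly near the critical temperature. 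Second, at the highest temperatures $\pi_i$ is spread out, so $\pi_i(m(x)=t)$ is inverse-polynomial and a coordinate there flips sign at inverse-polynomial rate. Given a pair of block configurations, one routes a canonical path that carries each mismatched sign label by swaps to the high-temperature end, flips it there, and carries it back; bounding the congestion of these paths using the two estimates, and checking that the exclusion-type interaction among several labels inflates the congestion by at most a polynomial factor, gives the required inverse-polynomial gap for the projected chain.

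I expect the projected-chain step to be the crux: making the canonical-paths bound go through requires quantitatively marrying the bounded-consecutive-ratio property of the dampened distributions with the saddle-mass estimate at high temperature, and controlling the simultaneous motion of many sign labels. A subsidiary obstacle is to pin down the precise definition of the entropy dampening distributions and the threshold $t$ so that the within-block unimodality of the modes and the between-block bounded-ratio condition hold together, uniformly over all $\beta>0$ and all external fields.
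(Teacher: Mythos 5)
Your overall architecture coincides with the paper's: partition $\Omega_{sw}=\Omega^{M+1}$ by the sign (trace) vector relative to a single threshold (the paper's $t_{\min}$, which is common to all levels precisely because the dampened marginals satisfy $\rho_i(\Omega_\sigma)\propto\rho_M(\Omega_\sigma)^{i/M}$, so all $M+1$ total-spins distributions have the same shape), analyze the restrictions by reducing to Madras--Zheng's unimodal, single-mode analysis, and treat the projection on $\{0,1\}^{M+1}$ by a canonical-path comparison with the coordinate-by-coordinate heat-bath chain. The restriction part of your sketch is acceptable in spirit (the paper compares the restricted chain to the product chain with swap moves suppressed, rather than invoking decoupling over runs, but both routes rest on the unimodality of each dampened level on either side of $t_{\min}$).

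The genuine gap is in the projection step, exactly where you place the crux. The two estimates you propose are not the ones that make the congestion bound work. A bound on consecutive ratios such as $\mu_i(-)/\mu_{i+1}(-)$ controls the acceptance of a single projected swap, but your canonical path carries a mismatched label across up to $\Theta(M)$ levels, and a per-level factor of size $e^{\Theta(n/M)}$ compounds along such a path to $e^{\Theta(n)}$; this is unavoidable, since $\mu_M(-)$ genuinely is exponentially smaller than $\mu_0(-)$, so adjacent-level closeness cannot control the weight of intermediate states relative to the endpoints. What the argument actually requires is a \emph{cross-level} monotonicity property of the dampened marginals: at every level the same side is the heavy one, and for \emph{all} pairs $i>j$, not just adjacent ones, $\overline{\rho}_i(0)\,\overline{\rho}_j(1)\le O(n^2)\,\overline{\rho}_i(1)\,\overline{\rho}_j(0)$. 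This is the content of Lemma~\ref{lem:63} and Corollary~\ref{cor:01}, and it is exactly what the identity $\rho_i(\Omega_\sigma)=\rho_M(\Omega_\sigma)^{i/M}/Z_i$ buys; your proposal never isolates it. Moreover, even granting this inequality, the ``carry to the hot end, flip, carry back'' path produces intermediate states differing from the endpoints in unboundedly many coordinates, so one must either run a lossless blockwise telescoping argument (possible in the exponential warm-up, where the two-index odds inequality holds exactly) or, as the paper does for the Ising case, redesign the path so that every intermediate state differs from the endpoints in $O(1)$ coordinates (moving a $1$ up from the nearest available position through a block of $0$'s), so that the $O(n^2)$ loss from Corollary~\ref{cor:01} is paid only a bounded number of times. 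Without the uniform cross-level inequality and one of these bookkeeping devices, the congestion bound you sketch does not go through.
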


Unlike the Ising model, simulated tempering for the 3-state Potts model mixes slowly.

\begin{maintheorem}\label{thm:slow-potts}
Let $\beta_c = \frac{4\ln2}{n}$. There is a constant $c_1 > 0$ such that for any set of inverse temperatures $\beta_c = \beta_M \ge \cdots \ge \beta_0 \ge 0$ such that $M = n^{O(1)}$, the
tempering and swapping chains with the distributions $\pi_{\beta_i}$ for the $3$-state mean-field ferromagnetic Potts model have mixing time
$\tau(\varepsilon) \geq
e^{c_1n}\ln(1/\varepsilon)$.
\end{maintheorem}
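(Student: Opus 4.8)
The plan is to produce, for each of the tempering and swapping chains, a single bottleneck set $S$ whose stationary weight is at least inverse‑polynomial while the ergodic flow $Q(S,S^{c})=\sum_{x\in S,\,y\notin S}\pi(x)P(x,y)$ out of it is exponentially small in $n$. Plugging the test function $\mathbbm{1}_{S}$ into the variational characterization of the gap gives $\mathrm{Gap}(P)\le Q(S,S^{c})/\bigl(\pi(S)\pi(S^{c})\bigr)$, so the gap is at most $e^{-cn}$ for some $c>0$, and the lower bound of Theorem~\ref{thm:spec_gap_thm} then yields $\tau(\varepsilon)\ge e^{c_{1}n}\ln(1/\varepsilon)$. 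Thus everything reduces to a free‑energy analysis of the $3$‑state mean‑field Potts model at criticality plus a careful choice of $S$.

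First I would record the standard large‑deviations picture of $\pi_{\beta}$. With $\beta=b/n$, a configuration $x$ whose color proportions are $s=\sigma(x)/n$ on the $2$‑simplex satisfies $\binom{n}{\sigma}e^{\beta H(x)}=\exp\!\bigl(n\psi_{b}(s)+O(\log n)\bigr)$, where
\[
\psi_{b}(s)=-\sum_{j=1}^{3}s_{j}\log s_{j}+\tfrac{b}{2}\sum_{j=1}^{3}s_{j}^{2},
\]
so that $Z(\beta)=\exp\!\bigl(n\max_{s}\psi_{b}(s)+O(\log n)\bigr)$ and, uniformly in $\sigma$, $\pi_{\beta}\bigl(\sigma(x)=\sigma\bigr)=\exp\!\bigl(n(\psi_{b}(s)-\max\psi_{b})+O(\log n)\bigr)$. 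The one input I would take from the classical analysis of the mean‑field Potts model is that $b_{c}=4\log 2$ is exactly the inverse temperature at which $\psi_{b}$ acquires, besides the disordered maximizer $u=(\tfrac13,\tfrac13,\tfrac13)$, three further global maximizers $v_{1},v_{2},v_{3}$ — the coordinate permutations of $(a^{*},\tfrac{1-a^{*}}{2},\tfrac{1-a^{*}}{2})$ for some $a^{*}\in(\tfrac13,1)$ — while for every $b<b_{c}$ the point $u$ is the unique global maximizer of $\psi_{b}$.

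Next I would fix a geometric ``cut''. Choosing $\eta\in(0,a^{*}-\tfrac13)$ and disjointifying by a fixed tie‑break rule, set $R_{j}=\{\sigma:\sigma_{j}/n>\tfrac13+\eta\text{ and }j\text{ is the dominant color}\}$, $R=R_{1}\cup R_{2}\cup R_{3}$ and $R^{c}=\Omega\setminus R$. None of $u,v_{1},v_{2},v_{3}$ lies on $\partial R_{1}\cup\partial R_{2}\cup\partial R_{3}$: each such boundary surface forces either a coordinate to equal $\tfrac13+\eta$ or two coordinates exceeding $\tfrac13+\eta$ to coincide, and neither occurs at $u$ (coordinates $\tfrac13$) or at any $v_{i}$ (coordinates $a^{*}$ and $\tfrac{1-a^{*}}{2}$, with $\tfrac{1-a^{*}}{2}<\tfrac13<\tfrac13+\eta<a^{*}$). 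Since $b\mapsto\max_{s}\psi_{b}(s)-\sup_{s\in\partial R_{1}\cup\partial R_{2}\cup\partial R_{3}}\psi_{b}(s)$ is continuous and strictly positive on the compact interval $[0,b_{c}]$, it is bounded below by some $\delta>0$; with the display above this gives a constant $c'>0$ for which the thin set $\mathcal{B}$ of configurations within distance $O(1/n)$ of $\partial R_{1}\cup\partial R_{2}\cup\partial R_{3}$ satisfies $\pi_{\beta}(\mathcal{B})\le e^{-c'n}$ for every $\beta\le\beta_{c}$, whereas Laplace's method at $\beta=\beta_{c}$ (recall $\beta_{M}=\beta_{c}$) gives $\pi_{\beta_{c}}(R_{j})\ge c_{0}$ and $\pi_{\beta_{c}}(R^{c})\ge c_{0}$ for a constant $c_{0}>0$.

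For the tempering chain I would take $S=R\times\{0,\dots,M\}$: a temperature move fixes the configuration and so never crosses $S$, while a level move alters a single vertex and hence changes membership in $R$ only from a configuration in $\mathcal{B}$, so $Q_{st}(S,S^{c})\le\tfrac12\max_{i}\pi_{\beta_{i}}(\mathcal{B})\le e^{-c'n}$, while $\pi_{st}(S),\pi_{st}(S^{c})\ge c_{0}/(M+1)$ are inverse‑polynomial because $M=n^{O(1)}$. For the swapping chain the naive analogue (``$x_{M}\in R$'') fails, since swap moves of the top two coordinates carry $\Theta(1/M)$ flow; instead I would take $S=\{x:N_{1}(x)>N_{2}(x)\}$ with $N_{j}(x)=\#\{i:x_{i}\in R_{j}\}$. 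Because $S$ is a function of the unordered multiset $\{x_{0},\dots,x_{M}\}$, swap moves (adjacent transpositions of coordinates) never cross $S$, and a level move changes some $N_{j}$ only from a configuration in $\mathcal{B}$, so the product form of $\pi_{sw}$ bounds $Q_{sw}(S,S^{c})$ by $e^{-c'n}$ up to polynomial factors; and since the zero external field makes $\pi_{sw}$ invariant under interchanging colors $1$ and $2$, $\pi_{sw}(N_{1}>N_{2})=\pi_{sw}(N_{1}<N_{2})$, and conditioning on $x_{0},\dots,x_{M-1}$ and using $\pi_{\beta_{c}}(R_{1}),\pi_{\beta_{c}}(R_{2})\ge c_{0}$ for $x_{M}$ shows $\pi_{sw}(N_{1}=N_{2})$ is bounded away from $1$, so $\pi_{sw}(S),\pi_{sw}(S^{c})\ge c_{0}$. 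In either case the gap estimate above forces $\mathrm{Gap}(P)\le e^{-c'n}\,\mathrm{poly}(n)\le e^{-c'n/2}$ for large $n$, and Theorem~\ref{thm:spec_gap_thm} gives the claim. I expect the main work — and the main obstacle — to be the free‑energy input, in particular the \emph{uniform in $b$} separation of the region boundaries from the maximizers of $\psi_{b}$ over the whole interval $[0,b_{c}]$, which is what makes the argument robust to an arbitrary temperature schedule below $\beta_{c}$; on the conceptual side, the key point is realizing that the correct bottleneck set for swapping must be simultaneously invariant under swap moves (hence a function of the unordered multiset of coordinates) and symmetric under color permutations (hence of non‑negligible mass).
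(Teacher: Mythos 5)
Your proposal is sound and rests on the same core mechanism as the paper --- a conductance/bottleneck bound exploiting the first-order coexistence of the ordered and disordered phases at $\beta_c=\frac{4\log 2}{n}$, together with the fact that the interface between the phases has exponentially small weight at \emph{every} $\beta\le\beta_c$, so that summing over an arbitrary polynomial temperature schedule costs only polynomial factors --- but it differs from the paper in two genuine ways. First, uniformity over the schedule: you get it by a compactness argument on $b\mapsto\max_s\psi_b-\sup_{\partial R}\psi_b$ over $[0,b_c]$, which forces you to import the full classical landscape of the mean-field Potts free energy (unique disordered global maximizer for all $b<b_c$, exactly four coexisting maximizers at $b_c$) as an unproved input; the paper instead sidesteps knowledge of the landscape at intermediate temperatures via the elementary monotonicity observation (its Lemma~\ref{8}) that the boundary-to-disordered ratio involves only the energy difference and is therefore maximized at $\beta_c$, reducing everything to explicit Stirling computations at the single temperature $\beta_c$ (Lemmas~\ref{diff} and~\ref{lem:exp}); you could adopt the same trick, since your boundary surfaces also have higher energy than the disordered mode, and thereby avoid citing the landscape below $b_c$ altogether. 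Second, the swapping chain: the paper deduces torpid swapping from torpid tempering via Zheng's comparison, whereas you construct a direct bottleneck $\{N_1>N_2\}$ that is invariant under swap moves (a function of the unordered multiset of coordinates) and has constant mass by color symmetry --- this is a nice, self-contained alternative that the paper does not give, and it yields an explicit exponential bound for swapping rather than one inherited through a gap comparison. The remaining soft spots (uniform $O(\log n)$ Stirling errors, Laplace constants at $\beta_c$, the small mismatch between $\log\frac{1}{2\varepsilon}$ and $\ln\frac1\varepsilon$ in the gap-to-mixing-time step) are routine and are handled by computations of exactly the kind the paper carries out, so I would not count them as gaps, but as written your argument does lean on the cited phase-diagram facts where the paper proves its analogues from scratch.
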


The slow convergence of the tempering chain is caused by a {\em first order
phase transition} in the 3-state
ferromagnetic Potts model. First order phase transitions are
characterized by phase-coexistence of ordered and disordered phases at a critical temperature \cite{Geo}. In contrast,
the Ising model has a second-order (continuous) phase transition, and
there is no
phase coexistence, and this distinguishes why
simulated tempering works for one model and not the other. Our
techniques using entropy dampening distributions do not seem to extend
immediately to show polynomial mixing of the swapping algorithm for the Potts model.

Let $\Omega_{RGB}$
denote the subset of the state space of the $3$-state Potts model
$\Omega$ where $\sigma_1 \geq \sigma_2 \geq \sigma_3$. On the restricted space
$\Omega_{RGB}$, we show that tempering can
slow down the Metropolis algorithm at a fixed temperature by an
exponential multiplicative factor.

\begin{maintheorem}\label{thm:temp-vs-metropolis}
There are constants $c,c'$ with $0 < c' < c$ and an inverse temperature $\beta$ 
%(****need to fill in which beta***) 
such that the
Metropolis chain on $\Omega_{RGB}$ at $\beta$ has mixing time
$\tau(\varepsilon) \leq e^{c'n}\ln(1/\varepsilon)$ while the mixing
time of the tempering chain is bounded by $\tau(\varepsilon) \geq
e^{cn}\ln(1/\varepsilon)$.
\end{maintheorem}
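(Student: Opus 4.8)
The plan is to pick an inverse temperature $\beta$ well above $\beta_c$ at which the Gibbs measure restricted to $\Omega_{RGB}$ is unimodal, so that the fixed‑temperature Metropolis chain mixes in polynomial time, and then to show that nevertheless every simulated‑tempering schedule terminating at $\beta$ must transport mass between the disordered configurations typical at high temperature and the ordered ones typical at $\beta$, crossing a region whose density is exponentially small at \emph{every} intermediate temperature. Concretely, write $\beta=\gamma/n$. Stirling's formula for $\binom{n}{\sigma_1,\sigma_2,\sigma_3}$ together with $\beta H(\sigma)=\tfrac{\gamma}{2n}\bigl(\sum_m\sigma_m^2-n\bigr)$ gives $\pi_\beta(\sigma)=\exp\!\bigl(n\,\psi_\gamma(\sigma/n)+O(\log n)\bigr)$, where $\psi_\gamma(\mathbf a)=\tfrac{\gamma}{2}\sum_m a_m^2-\sum_m a_m\ln a_m$ on the $2$‑simplex. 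On $\Omega_{RGB}=\{a_1\ge a_2\ge a_3\}$ the only candidate local maxima of $\psi_\gamma$ are the symmetric point $(\tfrac13,\tfrac13,\tfrac13)$ and the red‑ordered point $\mathbf a_*(\gamma)=\bigl(a_*,\tfrac{1-a_*}{2},\tfrac{1-a_*}{2}\bigr)$; the Hessian of $\psi_\gamma$ at the symmetric point on $\{\sum_m v_m=0\}$ equals $(\gamma-3)I$, so it is a local maximum exactly when $\gamma<3$. Since $\gamma_c:=n\beta_c=4\ln2<3$, I fix $\gamma$ to be any constant with $\gamma>3$ (so $\beta>\beta_c$); by the standard analysis of the mean‑field Potts free energy \cite{DLP,CDLLPS,GJ}, $\psi_\gamma$ then has a unique local maximum on $\Omega_{RGB}$, namely $\mathbf a_*(\gamma)$, the symmetric point having become an unstable corner and the boundary faces $\{a_1=a_2\}$, $\{a_2=a_3\}$, $\{a_3=0\}$ carrying no further maxima.

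For the Metropolis chain at this $\beta$: since $\pi_\beta(x)$ depends on $x$ only through $\sigma(x)$, the heat‑bath Glauber dynamics on $\Omega_{RGB}$ projects to a Markov chain on the type $\sigma/n$, and, conditioned on the type, the configuration is uniform and is rerandomized in $O(n\log n)$ steps. Because $\psi_\gamma$ has the single local maximum $\mathbf a_*(\gamma)$ on $\Omega_{RGB}$ for $\gamma>3$, the drift of the type chain points toward $\mathbf a_*(\gamma)$ from everywhere in the region; the constraint faces of $\Omega_{RGB}$ are either crossed tangentially by the drift or merely reflect it, and the unstable symmetric corner is not a trap. A path‑coupling/contraction estimate for the type chain then yields mixing time $\mathrm{poly}(n)\cdot\ln(1/\varepsilon)$ for the Glauber dynamics on $\Omega_{RGB}$; in particular $\tau(\varepsilon)\le e^{c'n}\ln(1/\varepsilon)$ for every constant $c'>0$ and all large $n$.

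For the tempering chain: fix any schedule $0=\beta_0<\dots<\beta_M=\beta$ with $M=n^{O(1)}$ (for instance $\beta_i=i\beta/M$). Choose a threshold $s^*\in\bigl(\tfrac13,\tfrac12\bigr)$ and set
\[
\mathcal A=\bigl\{(x,i)\in\Omega_{RGB}\times\{0,\dots,M\}\ :\ \sigma_1(x)\le s^*n\bigr\}.
\]
A temperature move leaves $\sigma_1(x)$ unchanged and hence cannot exit $\mathcal A$, while a level move changes $\sigma_1$ by at most $1$; therefore, up to a polynomial factor, the $\pi_{st}$‑weighted flow satisfies $Q_{st}(\mathcal A,\mathcal A^c)\le\frac{1}{M+1}\sum_{i=0}^{M}\pi_i\bigl(\sigma_1(x)=\lceil s^*n\rceil\bigr)$. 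Writing $t_i=n\beta_i$, the same Stirling estimate gives
\[
\pi_i(\sigma_1=k)=\exp\!\Bigl(n\bigl[\max_{\mathbf a\in\Omega_{RGB},\,a_1=k/n}\psi_{t_i}(\mathbf a)-\max_{\mathbf a\in\Omega_{RGB}}\psi_{t_i}(\mathbf a)\bigr]+O(\log n)\Bigr).
\]
The key point is that for every $t\in[0,\gamma]$ the maximizer of $\psi_t$ over $\Omega_{RGB}$ lies at $a_1=\tfrac13$ (when the symmetric point dominates) or at $a_1=a_*(t)\ge a_*(\gamma_c)=\tfrac12$ (when the ordered point dominates), hence never at $a_1=s^*$; by continuity in $t$ and compactness of $[0,\gamma]$ the constant
\[
c_0:=\min_{0\le t\le\gamma}\Bigl(\max_{\mathbf a\in\Omega_{RGB}}\psi_t(\mathbf a)-\max_{\mathbf a\in\Omega_{RGB},\,a_1=s^*}\psi_t(\mathbf a)\Bigr)
\]
is strictly positive (this is where the first‑order nature of the transition enters: near $t=\gamma_c$ the gap is the free‑energy barrier between coexisting phases, away from it it is an ordinary large‑deviation rate). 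Consequently $Q_{st}(\mathcal A,\mathcal A^c)\le e^{-c_0n}\,\mathrm{poly}(n)$, whereas $\pi_{st}(\mathcal A)\ge\frac{1}{M+1}\pi_0(\sigma_1\le s^*n)=\frac{1-o(1)}{M+1}$ and $\pi_{st}(\mathcal A^c)\ge\frac{1}{M+1}\pi_M(\sigma_1>s^*n)=\frac{1-o(1)}{M+1}$, since $\pi_0$ concentrates on types with $a_1$ near $\tfrac13$ and $\pi_\beta$ concentrates near $\mathbf a_*(\beta)$. Hence the bottleneck ratio of the tempering chain is at most $e^{-c_0n}\,\mathrm{poly}(n)$, and Cheeger's inequality together with Theorem~\ref{thm:spec_gap_thm} gives $\tau(\varepsilon)\ge e^{(c_0-o(1))n}\ln(1/\varepsilon)$. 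Since the Metropolis chain needs only $\mathrm{poly}(n)\cdot\ln(1/\varepsilon)$ steps, the two bounds of the theorem hold with, e.g., its constant $c$ equal to $c_0/2$ and its constant $c'$ equal to $c_0/4$.

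The step I expect to be the main obstacle is the polynomial upper bound for the Metropolis chain on the \emph{restricted} space $\Omega_{RGB}$: one must verify that the order constraint $\sigma_1\ge\sigma_2\ge\sigma_3$—in particular the unstable symmetric corner, and the fact that the maximizer $\mathbf a_*(\gamma)$ sits on the face $\{a_2=a_3\}$—introduces no bottleneck, so that the projected type chain contracts uniformly. By comparison, the tempering lower bound is essentially the (routine) large‑deviation bookkeeping displayed above together with the compactness argument giving $c_0>0$; the only place care is needed there is the uniformity in $t$ of the Stirling estimates near the coexistence value $t=\gamma_c$.
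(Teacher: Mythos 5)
Your lower bound for the tempering chain is essentially the paper's own argument: a conductance bound for a cut of the form $\{\sigma_1\le s n\}$, using that the separating slice is exponentially light at \emph{every} temperature in the schedule while both sides retain inverse-polynomial mass (cf.\ Proposition \ref{prop:lowerconductance}, Lemmas \ref{inc} and \ref{geo}); apart from the harmless slip that the ordered maximizer at criticality sits at $a_1=2/3$, not $1/2$, that half is sound. The genuine gap is in the Metropolis upper bound, which is precisely the half of the theorem the paper spends most of Section \ref{slower} on. You move the working temperature from the paper's $\beta^*=\mu/n$, $\mu>4\log 2$ (where the restricted measure on $\Omega_{RGB}$ is still bimodal) up to $\gamma/n$ with $\gamma>3$, so that the symmetric point is no longer a local maximum, and then assert polynomial mixing of the restricted chain via ``a path-coupling/contraction estimate for the type chain.'' That step does not go through as stated: at these temperatures the single-site dynamics is not contracting in any standard path metric even though the restricted free energy has a unique local maximum --- near the unstable symmetric corner and near the order--order saddle on the face $\sigma_1=\sigma_2$ neighboring trajectories separate rather than coalesce, and the reflecting constraints $\sigma_1\ge\sigma_2\ge\sigma_3$ have to be shown not to create metastable boundary traps. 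Making this rigorous requires machinery of the kind the paper actually deploys (decomposition over the equivalence classes $\Omega_\sigma$ plus canonical paths with unimodality along each path, as in Theorems \ref{flatdist} and \ref{thm:Q-projection-rapid}, or a drift/burn-in analysis of the two-dimensional type chain); you flag it yourself as the main obstacle, and as written it is a missing proof, not a routine verification.

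It is also worth noting how your route differs from the paper's even where it could be completed. The paper keeps $\beta^*$ just above $\beta_c$, where the fixed-temperature Metropolis chain is itself exponentially slow; it obtains the upper bound $e^{c'n}$ with $c'=-\alpha=\ln\bigl(\pi_{\beta^*}(\Omega_{1/3})/\pi_{\beta^*}(\Omega_{\lambda_{min}})\bigr)$ by comparing, via the refined comparison theorem \ref{thm:refined-compthm} with the ratio parameter $a$, to a ``flattened'' distribution $\widetilde\pi$ whose chain mixes polynomially, and then shows tempering is slower by an extra exponential factor because the disordered mode grows faster than the barrier as the temperature is raised. Your plan, if the polynomial-mixing claim were proved, would satisfy the literal inequality $\tau(\varepsilon)\le e^{c'n}\ln(1/\varepsilon)$ trivially, but it would establish a different phenomenon (tempering slow while the fixed-temperature chain is fast) rather than the paper's sharper point that tempering is exponentially slower than an \emph{already exponentially slow} fixed-temperature chain, which is what forces the paper to use the comparison-theorem refinement instead of a conductance upper bound.
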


Though the mixing time of the Metropolis chain is exponential, to
obtain this upper bound, it is not sufficient to bound the conductance,
since such a bound is tight only up to quadratic factors. Instead, we
will appeal to a refinement of the comparison theorem of Diaconis and
Saloff-Coste.

\section{Swapping for the Asymmetric Exponential Distribution}
\label{fast}

In this section we show bounds on the mixing time of the swapping
Markov chain on the asymmetric exponential distributions
generalizing the symmetric exponential distribution for which Madras and
Zheng showed swapping mixes in polynomial time \cite{MZ}. This example will also serve as a warm-up for the
analysis of the next section for the mean-field Ising model.
%The idea is to define a new set of distributions that
%do not preserve the bad
%cut which causes slow mixing. We use this insight to design a swapping
%algorithm for the
%mean-field Ising model with an external field in the next
%section.
While we focus
here on the swapping algorithm, which is easily implementable,
the distributions we define can also be used for tempering. Indeed, Zheng has shown that polynomial mixing of
swapping with any distributions implies polynomial mixing of tempering with
the same distributions \cite{Z}.

\subsection{Preliminaries}

The proof makes use of a Markov chain decomposition theorem
\cite{MR2,MarR}. Let $\mathfrak{M}$ be a Markov chain with transition
matrix $P$ and stationary distribution $\pi$. Let $\Omega_1,\ldots,\Omega_m$ be a disjoint partition of the state
space $\Omega$.
For each $i \in [m]$, define the Markov chain $\mathfrak{M}_{i}$ on
$\Omega_i$ whose transition matrix $P_i$, called the {\it
  restriction} of $P$ to
$\Omega_i$ is defined as
\begin{itemize}
\item $P_i(x,y)=P(x,y),$  \ \ \ if $x\ne y$ and $x,y\in \Omega_i$;
\item $P_i(x,x)=1-\displaystyle\sum_{y\in \Omega_i, y\ne x}
P_i(x,y),$  \  \ \ $\forall x\in \Omega_i$.
\end{itemize}
The stationary distribution of $\mathfrak{M}_{i}$ is given by $$\pi_i(x)\!=\!
\frac{\pi(x)}{\pi(\Omega_i)}, \ \ \ \  x \in \Omega_i.$$
Define the {\em projection} $\overline{P}$ to be the transition matrix
on the state space $[m]$
$$\overline{P}(i,j)=
     \frac{1}{\pi(\Omega_i)}\displaystyle\sum_{x\in\Omega_i,y\in\Omega_j}
\pi(x)P(x,y).$$
\vspace{-.1in}

The decomposition theorem bounds the spectral gap of the chain
$\mathfrak{M}$ by the spectral gap of the
projection chain and the gap of the slowest restriction chain.

\begin{theorem}[Martin and Randall \cite{MarR}]\label{dcmp}
$$ Gap(P) \ge \frac{1}{2}Gap(\overline{P})\left(\displaystyle\min_{i
    \in [m]}Gap(P_i)\right).$$
\end{theorem}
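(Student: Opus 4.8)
The plan is to run the Markov chain decomposition method: pass to the variational (Dirichlet-form) description of the spectral gap, split the variance of a test function along the partition via the law of total variance, and control the two pieces that appear by the slowest restriction gap and by the projection gap, respectively. Recall that for a reversible chain $Gap(P)=\min\{\mathcal{E}_P(f,f)/\Var_\pi(f):\Var_\pi(f)>0\}$, where $\mathcal{E}_P(f,f)=\tfrac12\sum_{x,y}\pi(x)P(x,y)(f(x)-f(y))^2$ is the Dirichlet form; so it suffices to fix a test function $f$ and prove $\mathcal{E}_P(f,f)\ge\tfrac12\,Gap(\overline{P})\,\bigl(\min_i Gap(P_i)\bigr)\Var_\pi(f)$. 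Write $b(x)$ for the index of the block containing $x$, set $\bar f(i):=\E_{\pi_i}[f]$, and note that $\bar\pi(i):=\pi(\Omega_i)$ is the stationary distribution of $\overline{P}$ and that $\overline{P}$ is reversible with respect to it. The law of total variance then gives
\[
\Var_\pi(f)\;=\;\sum_i\bar\pi(i)\,\Var_{\pi_i}(f)\;+\;\Var_{\bar\pi}(\bar f).
\]

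First I would handle the within-block term. By the Poincar\'e inequality for each restriction chain, $\Var_{\pi_i}(f)\le Gap(P_i)^{-1}\mathcal{E}_{P_i}(f,f)$; and since $\bar\pi(i)\pi_i(x)P_i(x,y)=\pi(x)P(x,y)$ for $x\ne y$ in $\Omega_i$, summing over only the edge contributions internal to the blocks shows $\sum_i\bar\pi(i)\mathcal{E}_{P_i}(f,f)\le\mathcal{E}_P(f,f)$, hence $\sum_i\bar\pi(i)\Var_{\pi_i}(f)\le\bigl(\min_i Gap(P_i)\bigr)^{-1}\mathcal{E}_P(f,f)$. For the between-block term I would apply the Poincar\'e inequality for the projection chain, $\Var_{\bar\pi}(\bar f)\le Gap(\overline{P})^{-1}\mathcal{E}_{\overline{P}}(\bar f,\bar f)$, and then compare $\mathcal{E}_{\overline{P}}(\bar f,\bar f)$ back to $\mathcal{E}_P(f,f)$. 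Using $\bar\pi(i)\overline{P}(i,j)=\sum_{x\in\Omega_i,\,y\in\Omega_j}\pi(x)P(x,y)$ one checks that $\mathcal{E}_{\overline{P}}(\bar f,\bar f)=\mathcal{E}_P(F,F)$, where $F(x):=\bar f(b(x))$ is the block-averaged version of $f$. Then, for each cross edge $x\in\Omega_i$, $y\in\Omega_j$ with $i\ne j$, I would write $\bar f(i)-\bar f(j)=(f(x)-f(y))+(\bar f(i)-f(x))+(f(y)-\bar f(j))$, square, and apply a (weighted) convexity inequality; summing over cross edges, the true-jump contributions $(f(x)-f(y))^2$ are absorbed into the cross part of $\mathcal{E}_P(f,f)$, while each boundary-oscillation contribution, such as $\sum_i\sum_{x\in\Omega_i}\pi(x)(\bar f(i)-f(x))^2 P(x,\Omega\setminus\Omega_i)\le\sum_i\bar\pi(i)\Var_{\pi_i}(f)$, is exactly the quantity already bounded in the previous step. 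This produces an inequality of the form $\mathcal{E}_{\overline{P}}(\bar f,\bar f)\le C\bigl(\mathcal{E}_P(f,f)+\sum_i\bar\pi(i)\Var_{\pi_i}(f)\bigr)$.

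Substituting both estimates into the total-variance identity and using $Gap(\overline{P})\le1$ to absorb lower-order factors then yields $\Var_\pi(f)\le C'\bigl(Gap(\overline{P})\min_i Gap(P_i)\bigr)^{-1}\mathcal{E}_P(f,f)$ for an absolute constant $C'$, which is the assertion once the constant is pinned down. I expect the cross-edge comparison of the previous paragraph to be the only delicate point: the crude bound $(a+b+c)^2\le 3(a^2+b^2+c^2)$ is too lossy to reach the stated value of the constant, so one must balance the weights in the convexity step against $\min_i Gap(P_i)$ and, crucially, re-use the restriction Poincar\'e bound on the boundary-oscillation terms rather than estimating them from scratch; the careful bookkeeping of this step is what sharpens $C'$ to $2$ and hence gives the factor $\tfrac12$. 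Everything else is routine: reversibility of each $P_i$ with respect to $\pi_i$ and of $\overline{P}$ with respect to $\bar\pi$, together with irreducibility of $\overline{P}$, are immediate from the definitions and are precisely what make the three Poincar\'e inequalities available.
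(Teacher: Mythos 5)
The paper does not prove this statement at all: Theorem \ref{dcmp} is imported verbatim from Martin and Randall \cite{MarR}, so the only meaningful comparison is with the proof in that source. Your skeleton is the standard one and most of its individual steps are fine: the variational characterization, the identity $\Var_\pi(f)=\sum_i\bar\pi(i)\Var_{\pi_i}(f)+\Var_{\bar\pi}(\bar f)$, the bound $\sum_i\bar\pi(i)\mathcal{E}_{P_i}(f,f)\le\mathcal{E}_P(f,f)$, and the identity $\mathcal{E}_{\overline{P}}(\bar f,\bar f)=\mathcal{E}_P(F,F)$ for the block-averaged $F$ are all correct.

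The genuine gap is the last step, which is exactly the one you defer to ``careful bookkeeping.'' Claiming that balancing the weights in the convexity inequality sharpens the constant to $2$ is not an argument, and in fact the mechanism you describe cannot deliver it. If on each cross edge you split $\bar f(i)-\bar f(j)$ into the true jump plus the two boundary oscillations and use a weighted inequality $(u+v)^2\le(1+t)v^2+(1+1/t)u^2$, then after re-using the restriction Poincar\'e bound on the oscillation terms (as you propose) the conclusion $\Var_\pi(f)\le c\,Gap(\overline{P})^{-1}(\min_iGap(P_i))^{-1}\mathcal{E}_P(f,f)$ forces the two constraints $(1+t)\min_iGap(P_i)\le c$ and $Gap(\overline{P})+2+2/t\le c$; no choice of $t>0$ satisfies both with $c=2$ when $Gap(\overline{P})$ and $\min_iGap(P_i)$ are of order one, and optimizing $t$ gives at best $c=2+\sqrt3\approx3.73$ (the crude $(a+b+c)^2\le3(a^2+b^2+c^2)$ split is worse still). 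So your argument, completed honestly, proves $Gap(P)\ge c'\,Gap(\overline{P})\min_iGap(P_i)$ for some absolute $c'<\tfrac12$ --- which, to be fair, would suffice for every application of Theorem \ref{dcmp} in this paper, since only polynomiality is used --- but it does not prove the stated inequality with the factor $\tfrac12$. The proof in \cite{MarR} reaches $\tfrac12$ by a different mechanism: the chain is compared with an auxiliary chain in which the within-block moves are replaced by exact resampling from $\pi_i$, whose gap is controlled via a theorem of Caracciolo, Pelissetto and Sokal, and the factor $\tfrac12$ comes from the even mixture of within-block and between-block move types, not from tightening a Cauchy--Schwarz split. (A smaller point you also gloss over: with the paper's convention $Gap(P)=1-|\lambda_1|$, the Poincar\'e inequalities you invoke control $1-\lambda_2$, so negative eigenvalues need a word, e.g.\ laziness.)
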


We will use a comparison theorem of Markov chains to bound the mixing
time of the projection chain defined below. The following comparison
theorem of Diaconis and
Saloff-Coste can be used to bound the mixing time of a Markov
chain when the mixing time of a related chain on the same space, but
with possibly a different stationary distribution is known.  Let
$\mathfrak{M}_1$ and $\mathfrak{M}_2$ be two Markov chains on
$\Omega$. Let $P_1$ and $\pi_1$ be the transition matrix and
stationary distributions of $\mathfrak{M}_1$ and  let $P_2$ and
$\pi_2$ be those of $\mathfrak{M}_2$. Let $E(P_1) =
\{(x,y) : P_1(x,y)>0\}$ and $E(P_2) = \{(x,y) :
P_2(x,y)>0\}$ be sets of directed edges. For $x,y \in
\Omega$ such that $P_2(x,y)>0$, define a {\it path}
$\gamma_{xy}$, a sequence of states $x=x_0,\ldots,
x_k=y$ such that $P_1(x_i,x_{i+1})>0$. %The length ($=k$) of such a path
                                %will be denoted
%by $|\gamma_{xy}|$.
Finally, let $\Gamma(z,w) = \left\{(x,y)\in E(P_2) :  (z,w)\in
\gamma_{xy}\right\}$ denote the set of endpoints of paths that use
the edge $(z,w)$.
\begin{theorem}[Diaconis and Saloff-Coste
  \cite{DS-C})]\label{thm:refined-compthm}
Let $a=\displaystyle\min_x\left(\frac{\pi_2(x)}{\pi_1(x)}\right).$
Then
\[ Gap(P_1) \geq \frac{a}{A}\cdot Gap(P_2), \]
where
$$A = \max_{(z,w)\in E(P_1)} \left\{{{1}\over{\pi_1(z) P_1(z,w)}}
\sum_{\Gamma(z,w)} |\gamma_{xy}|  \pi_2(x)
P_2(x,y)\right\}.$$
\end{theorem}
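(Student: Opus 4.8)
The plan is to reduce everything to the variational (Dirichlet-form) description of the spectral gap. For a reversible chain $P$ with stationary distribution $\pi$ set $\mathcal{E}_P(f,f) = \tfrac12\sum_{x,y}(f(x)-f(y))^2\,\pi(x)P(x,y)$ and $\Var_\pi(f) = \min_{c}\sum_x (f(x)-c)^2\pi(x)$; then the Poincar\'e gap satisfies $1-\lambda_1(P) = \inf_f \mathcal{E}_P(f,f)/\Var_\pi(f)$, the infimum over non-constant $f$. Thus it suffices to establish, for \emph{every} function $f$, the two comparisons $\mathcal{E}_{P_1}(f,f) \ge A^{-1}\mathcal{E}_{P_2}(f,f)$ and $\Var_{\pi_1}(f) \le a^{-1}\Var_{\pi_2}(f)$: combining them gives $\mathcal{E}_{P_1}(f,f)/\Var_{\pi_1}(f) \ge (a/A)\,\mathcal{E}_{P_2}(f,f)/\Var_{\pi_2}(f) \ge (a/A)\,Gap(P_2)$, and taking the infimum over $f$ yields $Gap(P_1) \ge (a/A)\,Gap(P_2)$.

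The Dirichlet-form comparison is where the paths $\gamma_{xy}$ enter. Fix $f$ and an edge $(x,y)\in E(P_2)$; telescoping along $\gamma_{xy} = (x=x_0,\dots,x_k=y)$ gives $f(y)-f(x) = \sum_{(z,w)\in\gamma_{xy}}(f(w)-f(z))$, so by Cauchy--Schwarz $(f(x)-f(y))^2 \le |\gamma_{xy}|\sum_{(z,w)\in\gamma_{xy}}(f(z)-f(w))^2$. Substituting this bound into $\mathcal{E}_{P_2}(f,f)$ and interchanging the order of summation so as to collect terms according to the edge $(z,w)\in E(P_1)$ they traverse, the coefficient of $(f(z)-f(w))^2$ becomes exactly $\sum_{(x,y)\in\Gamma(z,w)}|\gamma_{xy}|\,\pi_2(x)P_2(x,y)$, which by the definition of $A$ is at most $A\,\pi_1(z)P_1(z,w)$. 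Hence $\mathcal{E}_{P_2}(f,f) \le A\cdot\tfrac12\sum_{z,w}(f(z)-f(w))^2\pi_1(z)P_1(z,w) = A\,\mathcal{E}_{P_1}(f,f)$.

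For the variance comparison, since $a=\min_x \pi_2(x)/\pi_1(x)$ we have $\pi_2(x)\ge a\,\pi_1(x)$ for all $x$, so for every constant $c$, $\sum_x(f(x)-c)^2\pi_2(x) \ge a\sum_x(f(x)-c)^2\pi_1(x) \ge a\,\Var_{\pi_1}(f)$; taking $c=\E_{\pi_2}f$ gives $\Var_{\pi_2}(f)\ge a\,\Var_{\pi_1}(f)$. That completes the bound for the $1-\lambda_1$ part of the gap.

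The one point requiring extra care is that $Gap(P)$ as defined in the statement is $1-|\lambda_1(P)| = \min\{1-\lambda_1(P),\,1+\lambda_{\min}(P)\}$, and the argument above only controls $1-\lambda_1$. The $1+\lambda_{\min}$ part is handled by a parallel argument using the auxiliary form $\widetilde{\mathcal{E}}_P(f,f) = \tfrac12\sum_{x,y}(f(x)+f(y))^2\pi(x)P(x,y)$, for which $1+\lambda_{\min}(P) = \inf_f \widetilde{\mathcal{E}}_P(f,f)/\sum_x f(x)^2\pi(x)$, together with paths of \emph{odd} length so that the sign tracking in the telescoping still closes up; alternatively one assumes the chains are lazy, which forces $\lambda_{\min}\ge0$ and renders that term vacuous at the cost of a harmless constant factor. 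I expect this negative-eigenvalue bookkeeping --- and, far more so, the \emph{construction} of low-congestion paths $\gamma_{xy}$ that make $A$ small in the concrete applications of the theorem --- to be where the real work lies; the inequality $\mathcal{E}_{P_2}\le A\,\mathcal{E}_{P_1}$ itself is a single Cauchy--Schwarz step followed by an interchange of summation order.
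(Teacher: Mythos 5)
Your argument is correct: it is exactly the standard Dirichlet-form/canonical-paths proof of the cited Diaconis--Saloff-Coste comparison theorem (Cauchy--Schwarz along the paths to get the Dirichlet-form comparison with constant $A$, plus the pointwise bound $\pi_2 \ge a\,\pi_1$ to compare variances), which is the argument the paper implicitly relies on since it quotes the theorem without proof. Your flag about the paper's definition of the gap via $1-|\lambda_1|$ (requiring either lazification or the odd-path argument for the smallest eigenvalue) is also the right caveat and matches how the result is used here.
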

Note that in the case that the stationary distributions of the two
chains are the same, the above reduces to the more commonly used
version of the comparison theorem where $a=1$.

\subsection{The Bimodal Exponential Distribution}
Let $C >1$ be a real constant. Let $ N$ and  $N'$ be positive integers.
Define the bimodal exponential distribution over the integers in $[-N,N']$ as
\begin{eqnarray*}
\pi(x):= {{C^{|x|}}\over{Z }},   \ \ \ \ \ \ \ \ \  \ x \in  \{-N,\ldots,N'\},
\end{eqnarray*}
where $Z$ is the normalizing constant.
Define the distributions for the swapping
chain $P_{sw}$ as
\begin{eqnarray*}
\pi_i(x) \ := \ \frac{C^{\frac{i}{M}|x|}}{Z_i}, \ \ 0 \le i \le M, \ x
\in   \{-N,\ldots,N'\}
\end{eqnarray*}
where $Z_i$ is a normalizing constant and $M$ is the number of
distributions which we will assume is a polynomial in $N+N'$.
Let $P_i$ be the Metropolis-Hastings chain for sampling from $\pi_i$ where the base proposal chain is the simple symmetric random walk on $\{-N,\ldots,N'\}$. That is
\begin{eqnarray*}Q(i,j) = \left\{
\begin{array}{ll}
\frac 12 & \textrm{ if } |i-j| = 1 \textrm{ or } i=j=N \textrm{ or } N'. \\
0 & \textrm{ otherwise } 
\end{array}
\right.
\end{eqnarray*}
The state space for the swapping chain is
$\Omega_{sw}= \{-N,\dots,N'\}^{M+1}$ and its stationary distribution
is the product measure $\pi_{sw}$ of the distributions $\pi_i$.
\begin{theorem}
\label{expfast}
The swapping chain $P_{sw}$ with distributions $\pi_0,\ldots,\pi_M$ is
polynomially mixing.
\end{theorem}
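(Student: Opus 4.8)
The plan is to reduce the theorem, via the decomposition theorem (Theorem~\ref{dcmp}), to two much simpler chains, and then to bound the resulting projection chain by comparison (Theorem~\ref{thm:refined-compthm}) with an explicit product-measure chain. By the symmetry $x\mapsto -x$ we may assume $N'\ge N\ge 1$ (the case $N=0$ gives a one-sided unimodal distribution and is handled directly). For $\mathbf{x}=(x_0,\dots,x_M)\in\Omega_{sw}$ write $\eta(\mathbf{x})\in\{-,+\}^{M+1}$ for the sign pattern, $\eta_i=+$ iff $x_i\ge 0$, and partition $\Omega_{sw}=\bigsqcup_\eta\Omega_\eta$ accordingly. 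Theorem~\ref{dcmp} gives $Gap(P_{sw})\ge\tfrac12\,Gap(\overline{P_{sw}})\,\min_\eta Gap((P_{sw})_\eta)$, so it suffices to bound each factor below by $1/\mathrm{poly}(N+N')$. Since $M$ is polynomial in $N+N'$ and $\pi^*_{sw}$ is at worst exponentially small, Theorem~\ref{thm:spec_gap_thm} then gives a mixing time polynomial in $N+N'$ and $\log(1/\varepsilon)$.

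\noindent\textbf{The restriction chains.}
No level move inside $\Omega_\eta$ crosses $0$, and no $\Omega_\eta$-preserving swap exchanges coordinates of opposite sign, so the restriction of $P_{sw}$ to $\Omega_\eta$ is a product of a chain on $\prod_{i:\eta_i=+}\{0,\dots,N'\}$ and a chain on $\prod_{i:\eta_i=-}\{-N,\dots,-1\}$, each of which is a swapping chain for a product of truncated geometric distributions $\pi_i(\cdot)\propto C^{(i/M)|\cdot|}$ (together with whatever swaps survive between $\mathbb{Z}$-adjacent coordinates of the same sign). Each one-dimensional factor is log-concave and unimodal, hence has no interior bottleneck; a canonical-paths estimate in the spirit of \cite{MZ} --- or, when no two same-sign coordinates are $\mathbb{Z}$-adjacent, simply the observation that each coordinate then runs an independent biased birth--death chain --- shows each factor mixes in time polynomial in $N+N'$, so $\min_\eta Gap((P_{sw})_\eta)\ge 1/\mathrm{poly}(N+N')$.

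\noindent\textbf{The projection chain.}
The projection chain lives on $\{-,+\}^{M+1}$ with product stationary measure $\overline\pi(\eta)=\prod_i q_i(\eta_i)$, where $q_i(-)=\pi_i(\{-N,\dots,-1\})\le\tfrac12\le q_i(+)$. I would bound its gap by comparing it, via Theorem~\ref{thm:refined-compthm}, with the heat-bath Glauber chain $\overline{Q}$ for $\overline\pi$, which has gap $\tfrac1{M+1}$ and the same stationary measure, so $a=1$. For a Glauber move flipping coordinate~$i$ of $\eta$, take the canonical path that uses adjacent swap moves to carry the symbol $\eta_i$ from position~$i$ to position~$0$, flips coordinate~$0$, and then carries the new symbol back out to position~$i$ by adjacent swaps; this path has length $O(M)$, and the single flip at coordinate~$0$ has probability at least $1/\mathrm{poly}(N+N')$ because at $\beta_0=0$ the level chain is essentially a simple random walk on $\{-N,\dots,N'\}$, which has no barrier. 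A short computation shows every adjacent swap edge $(\sigma,\sigma')$ used on such a path satisfies $\overline{P_{sw}}(\sigma,\sigma')\ge 1/\mathrm{poly}(N+N')$: this transition probability is an average of $\min\{1,C^{(|x_j|-|x_{j+1}|)/M}\}$ over $x\sim\pi_{sw}|_{\Omega_\sigma}$, and since $N'\ge N$ the colder of the two exchanged coordinates typically carries the larger value of $|x|$, so the average is bounded below.

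\noindent\textbf{The main obstacle.}
The only substantial work is the congestion bound, i.e.\ controlling the quantity $A$ of Theorem~\ref{thm:refined-compthm}: for every edge $(\sigma,\sigma')$ of $\overline{P_{sw}}$ one must check that the total weight $\sum|\gamma_{xy}|\,\overline\pi(x)\,\overline{Q}(x,y)$ of Glauber edges routed across it is at most $\mathrm{poly}(N+N')\cdot\overline\pi(\sigma)\,\overline{P_{sw}}(\sigma,\sigma')$. The delicate edges are the single temperature-$0$ ``gateway'', through which the flow of every flip must pass, and the opposite-sign swap edges, each of which may carry only exponentially small stationary weight; the point is that the gateway has capacity of order $1/N$ while the flow it must carry is of order $1$, and that a low-weight swap edge is crossed only by paths originating at configurations of comparably low weight, so no edge is overloaded. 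This bookkeeping --- the asymmetric analogue of the congestion estimate of \cite{MZ}, and the same decomposition-and-comparison scheme reappears for the mean-field Ising model in Section~\ref{newswap} --- is the heart of the argument. Assembling the three spectral-gap bounds through Theorems~\ref{dcmp} and~\ref{thm:spec_gap_thm} then yields the claim.
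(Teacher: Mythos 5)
Your overall architecture coincides with the paper's: decompose $\Omega_{sw}$ by the sign pattern (the paper's ``trace''), analyze the restrictions by relating them to the swap-suppressed product chain in the spirit of \cite{MZ}, and bound the projection on $\{0,1\}^{M+1}$ by comparison with the coordinate-resampling chain $\widetilde P$, routing each bit flip through the hottest temperature by adjacent swaps. The problem is that the proposal stops exactly where the proof begins. For the projection, the quantity that must be controlled is not a uniform lower bound on the swap transition probabilities $\overline P(z,z')$, but the ratio $\overline\pi(t)\widetilde P(t,t')/\bigl(\overline\pi(z)\,\overline P(z,z')\bigr)$ for every edge $(z,z')$ on the canonical path. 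Since $\overline\pi$ is a product of unequal two-point marginals, permuting the bits of $t$ while carrying $t_i$ toward position $0$ can change the stationary weight of the intermediate patterns by an exponential factor, so your argument that each swap edge has acceptance probability at least $1/\mathrm{poly}$ (``the colder coordinate typically carries the larger $|x|$'') bounds the wrong quantity and says nothing about the congestion constant $A$. The paper's substantive step is precisely here: assuming without loss of generality $N\le N'$, one shows $\overline\pi(t)\widetilde P(t,t')\le\overline\pi(t^*)/(M+1)$, where $t^*$ is $t$ with the $i$th bit set to the unfavourable value $0$, and then proves $\overline\pi(z)\ge\overline\pi(t^*)$ for every intermediate pattern $z$ via the cross-temperature monotonicity $\pi_i(1)\pi_{k+1}(0)\ge\pi_i(0)\pi_{k+1}(1)$, applied block by block to the maximal runs of $1$'s below position $i$. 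You explicitly name this estimate (``the bookkeeping\dots is the heart of the argument'') but do not supply it, so the projection gap is asserted rather than proved.

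A smaller omission of the same kind occurs in the restriction chains: suppressing swaps does reduce to a product of unimodal one-dimensional Metropolis chains (Lemma \ref{lem:product} plus the analysis of \cite{MZ}), but to pass from $\hat P_t$ back to $P_t$ one still needs the comparison argument that replaces each surviving same-sign swap by two blocks of level moves, routed through the larger of the two exchanged values so that the weight ratio along the path is bounded (the paper's Lemma \ref{lem:poly-mixing-fixed-trace}, where the explicit computation shows the relevant ratio is at most $1$); ``a canonical-paths estimate in the spirit of \cite{MZ}'' gestures at this but does not carry it out. In short, the strategy is the right one and matches the paper's, but the two estimates that make the strategy work are missing.
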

Since our goal in this work is not to optimize the running times but
rather to distinguish between models where the mixing of tempering
and swapping are polynomial vs. exponential, we do not state the
precise polynomial upper bounds on the mixing time in the theorem.

\begin{definition} Let ${x}=(x_0,\ldots,x_M)
\in \Omega_{sw}$. The {\rm trace} Tr$({x}) = t:=(t_0,\ldots,t_M)
\in \{0,1\}^{M+1}$ where $t_i = 0$ if $x_i < 0$ and $t_i = 1$ if
$x_i \geq 0$, $i = 0,\ldots,M$.
\end{definition}
The $2^{M+1}$ possible values of the trace characterize the partition
we use to apply the decomposition theorem.  Letting
$\Omega_{sw}^t$ be the set of configurations with
trace $t$, we have the decomposition
\[\Omega_{sw} = \bigcup_{{t} \in \{0,1\}^{M+1}} \Omega_{sw}^t.\]

Let $P_t$ be the restriction of $P_{sw}$ to $\Omega^t_{sw}$, the configurations of fixed
trace $t$.  The state space of the projection $\overline{P}$ is the
$M+1$-dimensional
hypercube, representing the set of possible traces $t$. In \cite{MZ}, the spectral gap for a different version of the swapping chain for the symmetric exponential distribution was analyzed by decomposition. Our analysis of the restriction chains is similar to the analysis in \cite{MZ} for the correspondingly defined restriction chains. Analyzing the projection, however,
becomes more difficult, since in this case the stationary distribution
over the hypercube is highly non-uniform. This reflects the fact that
at ``low temperatures,'' one side of the bimodal distribution
becomes exponentially more favorable.  We will resolve this by an application of the comparison theorem with an auxilliary chain.

\vskip.1in
\noindent \underbar{\bf Mixing time of the restricted chains:}
\label{exisec}
\vskip.05in
%We need to consider how the restricted chain behaves on $\Omega_{{t}}$.
\noindent% Consider the restriction Markov chain with state space $\Omega_{sw}^t$. The effect of a swap move in this chain is to keep the trace fixed 
Note that if we ignore swap moves in the restriction chains $P_t$, the moves at each of the $M+1$ temperatures are independent and according to the Metropolis
probabilities. Let $\hat P_t$ be a modified chain which suppresses swap moves in $P_t$ while the trace is fixed at $t$.
The following lemma allows us to express the spectral gap of $\hat P_t$ in terms of the spectral gaps of the independent chains at each fixed temperature.
\begin{lemma}{\bf(Diaconis and Saloff-Coste \cite{DS-C})}\label{lem:product}
For $i=1,\ldots,m$, let $P_i$ be a reversible Markov chain on a finite
state space ${\Omega}_i$.  Consider the product Markov chain $P$ on the
product space $\Omega_0 \times \ldots \times \Omega_M$, defined by
\[
 P \,=\, \frac{1}{M+1}\sum_{i=0}^M
    \underbrace{I\otimes\ldots\otimes I}_{i} \otimes P_i
    \otimes \underbrace{I\otimes\ldots\otimes I}_{M-i}.
\]
Then  $Gap(P) \ = \ \frac{1}{M+1}\displaystyle\min_{0\leq i \leq
  M}\left\{Gap(P_i)\right\}\,.$
\end{lemma}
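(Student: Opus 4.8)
The plan is to diagonalise $P$ explicitly, exploiting the fact that it is an average of commuting operators each of which acts on a single tensor factor. First I would record that, being reversible, each $P_i$ is self-adjoint on $L^2(\pi_i)$, where $\pi_i$ is its stationary distribution, and hence admits a $\pi_i$-orthonormal basis of real eigenvectors $f^{(i)}_0,\dots,f^{(i)}_{|\Omega_i|-1}$ with eigenvalues $1=\mu^{(i)}_0\ge\mu^{(i)}_1\ge\cdots$; since $P_i$ is irreducible and aperiodic, the eigenvalue $1$ is simple. Identifying $L^2(\pi_0\times\cdots\times\pi_M)$ with $\bigotimes_{i=0}^M L^2(\pi_i)$, the $i$-th summand $A_i:=I^{\otimes i}\otimes P_i\otimes I^{\otimes(M-i)}$ acts as $P_i$ on the $i$-th factor and as the identity on the others.

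The key observation is that the operators $A_0,\dots,A_M$ pairwise commute, because they act on disjoint tensor factors, so they are simultaneously diagonalised by the product vectors $f_{\mathbf k}:=f^{(0)}_{k_0}\otimes\cdots\otimes f^{(M)}_{k_M}$, with $A_i f_{\mathbf k}=\mu^{(i)}_{k_i} f_{\mathbf k}$. As $\mathbf k=(k_0,\dots,k_M)$ ranges over all index tuples the $f_{\mathbf k}$ form an orthonormal basis of the whole space, so they exhaust the eigenvectors of $P=\tfrac1{M+1}\sum_i A_i$, whose eigenvalues are therefore $\lambda_{\mathbf k}=\tfrac1{M+1}\sum_{i=0}^M\mu^{(i)}_{k_i}$. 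Since every $\mu^{(i)}_{k_i}\le 1$ we get $\lambda_{\mathbf k}\le 1$ with equality iff $\mathbf k=\mathbf 0$ (using simplicity of the top eigenvalue of each $P_i$), so $\lambda_0(P)=1$ is simple; and for $\mathbf k\neq\mathbf 0$, writing $\lambda_{\mathbf k}=1-\tfrac1{M+1}\sum_i(1-\mu^{(i)}_{k_i})$, each summand is nonnegative and the one for any coordinate with $k_i\neq0$ is at least $1-\mu^{(i)}_1=Gap(P_i)$, whence $\lambda_{\mathbf k}\le 1-\tfrac1{M+1}\min_{0\le i\le M}Gap(P_i)$, with equality attained by changing only the coordinate $i^*$ realising the minimum (to $k_{i^*}=1$). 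Hence the largest eigenvalue of $P$ strictly below $1$ equals $1-\tfrac1{M+1}\min_iGap(P_i)$, which gives the lemma.

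The one point to treat with care — the step I would flag as the main subtlety rather than an obstacle — is reconciling this with the definition $Gap(P)=\lambda_0-|\lambda_1|$: a priori some $P_i$ could have a strongly negative eigenvalue, which would make $|\lambda_1(P)|$ exceed $1-\tfrac1{M+1}\min_iGap(P_i)$ and spoil the equality. This does not occur for the chains to which the lemma is applied: the heat-bath Glauber chains are nonnegative-definite (each single-site update is an $L^2(\pi)$-orthogonal projection), and for the Metropolis chains on $\{-N,\dots,N'\}$ one may pass to the lazy chain $(I+P_i)/2$, which has nonnegative spectrum and changes the bound only by a factor of two — immaterial for the polynomial-mixing conclusions. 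In either case $\lambda_{\mathbf k}\ge0$ for all $\mathbf k$, so $|\lambda_1(P)|$ coincides with the second-largest eigenvalue of $P$ and the stated identity holds.
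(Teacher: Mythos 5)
The paper does not prove this lemma at all --- it is quoted from Diaconis and Saloff-Coste --- so there is no internal proof to compare against; what you have written is the standard tensorization argument, and it is correct. Each $P_i$ is self-adjoint on $L^2(\pi_i)$, the summands act on disjoint tensor factors and hence commute, the product eigenfunctions $f^{(0)}_{k_0}\otimes\cdots\otimes f^{(M)}_{k_M}$ exhaust the spectrum, and the eigenvalues of $P$ are the averages $\frac{1}{M+1}\sum_i\mu^{(i)}_{k_i}$; identifying the second-largest one gives the identity. One refinement to your final paragraph: the direction you flag as the danger cannot actually occur. For any $\mathbf{k}\neq\mathbf{0}$ there is a coordinate with $|\mu^{(i)}_{k_i}|\leq 1-Gap(P_i)\leq 1-\min_j Gap(P_j)$ while every other factor contributes at most $1$ in absolute value, so
\[
|\lambda_{\mathbf{k}}|\;\leq\;\frac{1}{M+1}\Bigl(M+1-\min_{0\leq j\leq M}Gap(P_j)\Bigr)\;=\;1-\frac{1}{M+1}\min_{0\leq j\leq M}Gap(P_j),
\]
even under the paper's absolute-value convention $Gap(P)=1-|\lambda_1|$. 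Thus negative eigenvalues can never push $|\lambda_1(P)|$ above the stated threshold; they can only make $Gap(P)$ \emph{strictly larger} than $\frac{1}{M+1}\min_i Gap(P_i)$ (e.g.\ two two-state chains with spectrum $\{1,-1+\varepsilon\}$ give a product gap of $\varepsilon$ rather than $\varepsilon/2$), so exact equality as displayed requires either nonnegative spectra, as you arrange via laziness, or the Diaconis--Saloff-Coste convention in which the gap is $1$ minus the second-largest eigenvalue. Since the paper only ever uses the lower bound on $Gap(P)$ to bound mixing times, the inequality in the useful direction holds unconditionally and your positivity discussion, while harmless, is not needed for the application.
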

The distribution over $\Omega_{sw}^t$ restricted to each of the $M+1$
temperatures is unimodal, suggesting that $\hat P_t$
should be polynomially mixing at each temperature.
Madras and Zheng formalize this in \cite{MZ} and show that the
Metropolis chain restricted to the positive or negative parts of
$\Omega_i$ mixes polynomially. By Lemma \ref{lem:product}
and following the arguments as in \cite{MZ},
it can be shown that the Markov chains $\hat {P}_{{t}}$ are polynomially mixing for each $t \in \{0,1\}^{M+1}$. We omit these calculations here since they are exactly along the lines of those in \cite{MZ}. Next we show $P_t$ mixes in polynomial time by comparing it with $\hat P_t$.

\begin{lemma}\label{lem:poly-mixing-fixed-trace}
For each trace $t \in \{0,1\}^{M+1}$, the restriction chain $P_t$ mixes in polynomial time.
\end{lemma}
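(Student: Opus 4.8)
The plan is to apply the comparison theorem (Theorem \ref{thm:refined-compthm}) with $P_1 = P_t$ (the chain we want to bound) and $P_2 = \hat P_t$ (the chain suppressing swap moves, known to be polynomially mixing by Lemma \ref{lem:product} and the Madras--Zheng arguments). Since $P_t$ and $\hat P_t$ have the \emph{same} stationary distribution $\pi_{sw}$ restricted to $\Omega_{sw}^t$ — suppressing the swap moves does not change the reversible measure — we are in the case $a = 1$, so it suffices to bound the congestion parameter $A$. First I would note that the edge sets nearly coincide: every level move of $\hat P_t$ is a level move of $P_t$ with the same probability, so for those edges we can route each edge to itself (path length $1$) and the contribution to $A$ is $O(1)$. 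The only edges of $\hat P_t$ \emph{not} present in $P_t$ are the ``holding'' increments on the diagonal coming from suppressed swaps, which do not need paths. Conversely, the swap edges of $P_t$ are extra edges in $P_1$ that carry no $P_2$-flow, so they only help. Hence the real content is: every $\hat P_t$-transition is already a $P_t$-transition, paths have length $1$, and $A = O(1)$, giving $Gap(P_t) \ge c\, Gap(\hat P_t)$ for a constant $c$.

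Wait — there is a subtlety I must check before declaring victory: within a fixed trace $t$, does $P_t$ actually dominate $\hat P_t$ transition-by-transition on level moves? A level move at temperature $i$ in $P_{sw}$ has probability $\frac{1}{2(M+1)} M_i(x_i, x_i')$, and this is unaffected by restricting to $\Omega_{sw}^t$ \emph{as long as the move does not change the trace}, i.e. does not move coordinate $i$ across $0$. Because the base walk is nearest-neighbor on $\{-N,\dots,N'\}$, a single step from a state with $x_i \le -1$ can land at $0$, changing $t_i$ from $0$ to $1$; such a move is forbidden in both $P_t$ and $\hat P_t$ (it leaves $\Omega_{sw}^t$), so it is folded into the holding probability of both chains identically. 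Thus on the genuine off-diagonal level edges the two chains agree exactly, and the comparison goes through; I would make this explicit rather than gloss it.

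The main obstacle, and the step I expect to require the most care, is confirming that removing the swap moves does not disconnect $\Omega_{sw}^t$ or otherwise invalidate the reversibility/ergodicity hypotheses needed to invoke the comparison theorem — but in fact within a \emph{fixed} trace the level moves alone already connect $\Omega_{sw}^t$ (each coordinate $x_i$ ranges over either $\{-N,\dots,-1\}$ or $\{0,\dots,N'\}$, each an interval on which the nearest-neighbor walk is irreducible), so $\hat P_t$ restricted to $\Omega_{sw}^t$ is itself irreducible; this is precisely what makes $\hat P_t$ a legitimate product chain of the unimodal Metropolis chains and lets Lemma \ref{lem:product} apply. Once that is in hand, the chain of inequalities
\[
Gap(P_t) \;\ge\; \frac{1}{A}\, Gap(\hat P_t) \;\ge\; \frac{1}{A}\cdot\frac{1}{M+1}\min_{0 \le i \le M} Gap\bigl((P_i)|_{\text{half-line}}\bigr),
\]
together with the polynomial lower bound on the half-line Metropolis gaps from \cite{MZ} and $M = (N+N')^{O(1)}$, yields a polynomial lower bound on $Gap(P_t)$, hence via Theorem \ref{thm:spec_gap_thm} a polynomial mixing time for $P_t$, uniformly over all $2^{M+1}$ traces $t$.
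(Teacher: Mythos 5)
Your argument is correct, and it takes a genuinely different (and more economical) route than the paper. The paper proves this lemma with explicit canonical paths: each swap transition of $P_t$ exchanging coordinates $i$ and $i+1$ is routed through two blocks of level moves of $\hat P_t$ (first copying the higher-weight of $x_i,x_{i+1}$ into the other coordinate, then lowering it), and the congestion ratio $\pi_t(x)P_t(x,x')/\bigl(\pi_t(z)\hat P_t(z,w)\bigr)$ is bounded along these paths using the monotone form of the $\pi_i$. Your observation removes the need for any routing: $\hat P_t$ is $P_t$ with the swap moves folded into the diagonal, so the two chains are reversible with respect to the same measure $\pi_t$ and agree (up to a harmless constant, depending on whether the suppressed swap mass is kept as laziness or renormalized) on every off-diagonal level edge, while the trace-crossing proposals are folded into the diagonal of both chains identically, as you note. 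Hence with $P_1=P_t$, $P_2=\hat P_t$, $a=1$ and length-one paths, $A=O(1)$, and Theorem \ref{thm:refined-compthm} --- equivalently, monotonicity of Dirichlet forms under adding reversible off-diagonal mass --- gives $Gap(P_t)\ge c\,Gap(\hat P_t)$ directly, which together with the Madras--Zheng analysis of $\hat P_t$ and Lemma \ref{lem:product} finishes the proof. Your instantiation is also the one that matches the direction of Theorem \ref{thm:refined-compthm} as stated: there the paths decompose edges of the known chain $P_2$ using edges of the chain $P_1$ whose gap is to be bounded below, exactly your assignment, whereas the paper routes the extra (swap) edges of $P_t$ through $\hat P_t$, i.e.\ with the roles of $P_1$ and $P_2$ interchanged relative to what is needed to lower-bound $Gap(P_t)$; your version yields the inequality in the useful direction without that detour. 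What the paper's path construction buys is mainly a template that is reused later for the Flat-swap restrictions, where identity paths are no longer available; for the present lemma your shortcut suffices. The only caveat, shared equally with the paper, is that the Dirichlet-form comparison controls the gap at the top of the spectrum, so to speak in terms of the absolute gap of Theorem \ref{thm:spec_gap_thm} one should point to the holding probabilities (e.g.\ the laziness created by suppressed or rejected moves).
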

\begin{proof}

We make use of the comparison theorem Theorem \ref{thm:refined-compthm}. To apply the theorem, for each transition of $P_t$, we construct a canonical path consisting of moves in $\hat P_t$.

Let $(x,x')$ be a transition of $P_t$ with $x = (x_0,\ldots,x_M)$ and $x' = (x'_0,\ldots,x'_M)$. If $(x,x')$ is a level move which updates the state at a fixed temperature, let the corresponding canonical path in $\hat P_t$ be the edge $(x,x')$ itself. 
On the other hand, suppose $(x,x')$ is a swap move which exchanges the $i$th and $i+1$st components. Note that since the trace remains fixed, it must be the case that $t_i=t_{i+1}$. Suppose that $x_i \ge x_{i+1}\ge 0$. In this case define the canonical path as the concatenation of two paths $p_1 \circ p_2$, each consisting of a sequence of level moves at a fixed temperature.

\begin{itemize}
\item The path $p_1$ consists of $x_{i} - x_{i+1}$ level moves at the temperature $i+1$ from $x$ to \\$(x_0 ,\ldots x_{i-1}, x_{i} , x_{i} , x_{i+2} , \ldots , x_M)$.

\item The path $p_2$ consists of $x_{i}-x_{i+1}$ level moves at the temperature $i$ from  \\ $(x_0 ,\ldots x_{i-1}, x_{i} , x_{i} ,  x_{i+2} , \ldots , x_M)$ to $x'$.
\end{itemize}
If, on the other hand, $x_{i+1} > x_i \ge 0$, the canonical paths are defined as the concatenation $p_1 \circ p_2$ where
\begin{itemize}
\item The path $p_1$ consists of $x_{i+1} - x_i$ level moves at the temperature $i$ from $x$ to \\ $(x_0,\ldots,x_{i-1},x_{i+1},x_{i+1},x_{i+2},\ldots,x_M)$.
\item The path $p_2$ consists of $x_{i+1} - x_i$ level moves at the temperature $i+1$ from \\ $(x_0,\ldots,x_{i-1},x_{i+1},x_{i+1},x_{i+2},\ldots,x_M)$ to $x'$.
\end{itemize}
The idea behind the definition of the canonical paths is to use the higher probability state (either $x_i$ or $x_{i+1}$) to ensure the edges along the path will always have sufficiently high weight.

We can bound the factor $A$ in Theorem \ref{thm:refined-compthm} as follows. Firstly note that the paths are at most polynomial in length and the number of transitions of $P_t$ which utilize any transition of $\hat P_t$ are at most polynomial in number since there are only a polynomial number of possible states $x_i$ and $x_{i+1}$. Hence, it is enough to show that for any transition $(z,w)$ of $\hat P_t$ and any transition $(x,x')$ of $P_t$ such that $(x,x') \in \Gamma(z,w)$, the quantity
\[
\frac{\pi_t(x)P_t(x,x')}{\pi_t(z) \hat P_t(z,w)}
\]
is at most a polynomial, where $\pi_t$ denotes the stationary measure for $P_t$ as well as $\hat P_t$. This can be verified in a straightforward way by checking the possible cases. For example, assume that the transition $(z,w)$ changes the $i+1$st coordinate of $z$. Then, either it is on a path $p_1$ when $x_i \ge x_{i+1}$ or it is on a path $p_2$ and $x_{i+1} > x_i$.  
Consider the first case that $x_i \ge x_{i+1}$. In this case,  and $z_{i} = w_i= x_{i}$ and we obtain
\begin{align*}
\frac{\pi_t(x)P_t(x,x')}{\pi_t(z) \hat P_t(z,w)} =  \frac{\pi_i(x_i) \pi_{i+1}(x_{i+1})\min\left(1, \frac{\pi_{i+1}(x_i)
\pi_i(x_{i+1})}{\pi_i(x_i)\pi_{i+1}
(x_{i+1})}\right)}{ \pi_i(x_i)\pi_{i+1}(z_{i+1})\min\left(1,\frac{\pi_{i+1}(w_{i+1})}{\pi_{i+1}(z_{i+1})} \right)} = \frac{\min(C^{(i+1)x_{i+1}/M},C^{(x_i+ix_{i+1})/M})}{\min(C^{(i+1)z_{i+1}/M}, C^{(i+1)w_{i+1}/M})}
\end{align*}
The last expression can be simplified using the fact that by construction of the path, $z_{i+1} < w_{i+1}$. Finally, by the construction of the path and the fact that the level moves preserve the trace, we also know that $z_{i+1} \ge x_{i+1}$. Hence, we obtain
\begin{align*}
\frac{\pi_t(x)P_t(x,x')}{\pi_t(z) \hat P_t(z,w)} = C^{(i+1)(x_{i+1} - z_{i+1})/M} \le 1.
\end{align*}
A similar calculation made for the case that $(z,w)$ is an edge on the path $p_2$ so that $x_{i+1} > x_i $, $x_i \le w_{i+1} <  z_{i+1} $ and $w_i=z_i = x_{i+1}$ shows that

\begin{align*}
\frac{\pi_t(x)P_t(x,x')}{\pi_t(z) \hat P_t(z,w)} = C^{(i+1)(x_i - w_{i+1})/M} \le 1.
\end{align*}

Combining the bound on $A$ with the polynomial mixing of $\hat P_t$ and applying Theorem \ref{thm:refined-compthm}, we conclude that the restriction chains $P_t$ mix in polynomial time. \qedhere

\end{proof}

\vskip.1in
\noindent \underbar{\bf Mixing time of the projection:}
\vskip.05in
\noindent
The stationary probabilities of the projection chain are given by
\[\overline{\pi}\left({t}\right) =
\displaystyle\sum_{x : Tr({x})={t}} \pi_{sw}({x}).\]
It can be verified that $\overline\pi$ is also a product
measure over the temperatures. Let $\overline\pi_i$ denote the two point distribution at each temperature such
that $\overline\pi$ is the product of the $\{\overline\pi_i\}$.

To show the projection $\overline P$ mixes in polynomial time, we will compare it to the following simpler chain $\widetilde P$ on the $M+1$ dimensional hypercube and with the same stationary distribution. In $\widetilde P$,at each step we are allowed
to transpose two neighboring bits, or we can flip just the first bit.
Each of these
moves is performed with the appropriate Metropolis probability.   This captures the idea that for
the true projection chain $\overline P$, swap moves (corresponding to transpositions of bits) always have constant
probability, and that at the highest temperature there is high probability of
changing sign. Of course there is in addition the chance of flipping
the bit at each lower temperature, but this seems to be a smaller effect.

More formally, at each step in $\widetilde P$, we pick $i \in_u \{0,\ldots,M\}$ and update
the $i^{th}$ component $t_i$
by choosing $t_i'$ with probability $\overline{\pi}_i(t_i')$, i.e., exactly
according to the appropriate
stationary distribution.  In other words,
the $i^{th}$ component is at stationarity as soon as it is chosen.
Using the coupon collector's theorem, we have

\begin{lemma} \label{lema:comparison-chain} The chain $\widetilde{P}$
  on $\{0,1\}^{M+1}$
mixes in time $O\left(M \log(M+\varepsilon^{-1})\right)$ and
$Gap(\widetilde{P})^{-1}= O(M\log M)$.
\end{lemma}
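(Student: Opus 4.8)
The plan is to run a coupling argument driven by the coupon collector process, using two structural facts: that $\overline\pi = \bigotimes_{i=0}^M \overline\pi_i$ is a product measure (as already noted), and that each step of $\widetilde P$ rewrites exactly one coordinate and rewrites it as a fresh sample from its own marginal $\overline\pi_i$. First I would record the ``one‑shot'' property this entails: a single move of $\widetilde P$ only ever changes one coordinate $i$, replacing its value by an independent draw from $\overline\pi_i$; consequently, once coordinate $i$ has been selected at least once, it is distributed according to $\overline\pi_i$ and remains so for all later times, whatever happens at the other coordinates.

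To bound the mixing time I would set up a grand coupling of two copies $(T_t)$ and $(T'_t)$ of $\widetilde P$, with $T_0$ arbitrary and $T'_0 \sim \overline\pi$, in which at every step the two chains use the same uniformly chosen index $i \in \{0,\dots,M\}$ and the same resampled bit value. By the one‑shot property, as soon as an index has been picked the two chains agree in that coordinate forever, so the coupling time is at most the first time $\tau_{cc}$ by which all $M+1$ indices have been selected — a coupon collector time with $M+1$ coupons. The standard tail bound gives
\[
\P(\tau_{cc} > t) \;\le\; (M+1)\Bigl(1-\tfrac{1}{M+1}\Bigr)^{t} \;\le\; (M+1)\,e^{-t/(M+1)},
\]
which is at most $\varepsilon$ as soon as $t \ge (M+1)\ln\bigl((M+1)/\varepsilon\bigr) = O\bigl(M\log(M+\varepsilon^{-1})\bigr)$. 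The coupling inequality $\|\widetilde P^{t},\overline\pi\|_{tv} \le \P(T_t \neq T'_t) \le \P(\tau_{cc} > t)$ then delivers the claimed mixing time.

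For the spectral gap I would deduce the bound from the mixing time via the lower half of Theorem \ref{thm:spec_gap_thm}, namely $\tau(\varepsilon) \ge \bigl(Gap(\widetilde P)^{-1}-1\bigr)\log(1/(2\varepsilon))$: fixing $\varepsilon$ to a small constant and plugging in $\tau(\varepsilon) = O(M\log M)$ rearranges to $Gap(\widetilde P)^{-1} = O(M\log M)$. Alternatively, with a sharper constant, $\widetilde P$ is exactly a product chain in the sense of Lemma \ref{lem:product}, its $i$th factor being the two‑state chain $P_i(a,b) = \overline\pi_i(b)$; that factor is rank one with eigenvalues $1$ and $0$, hence gap $1$, so Lemma \ref{lem:product} gives $Gap(\widetilde P) = 1/(M+1)$ on the nose.

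There is no substantial obstacle in this lemma: the entire content is the observation that coordinate‑resampling dynamics on a product measure couple in coupon‑collector time, plus the routine bookkeeping of the tail estimate and the conversion from mixing time to spectral gap. The only two points needing mild care are (i) invoking the ``once hit, coupled forever'' property for $\widetilde P$ exactly as it is formally defined (single‑coordinate resampling at stationarity), not for the informal earlier description that additionally allowed transposing neighboring bits, and (ii) tracking the $\varepsilon$‑dependence through the coupon collector \emph{tail} rather than only its expectation, so that the $\log(M+\varepsilon^{-1})$ appearing in the statement is actually produced.
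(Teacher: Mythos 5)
Your proof is correct and is exactly the argument the paper intends: the paper's justification of this lemma is simply the invocation of the coupon collector theorem for the formally defined chain (uniform coordinate choice followed by resampling from $\overline{\pi}_i$), which your grand coupling, tail bound, and gap conversion flesh out in full detail. Your observation that the gap can alternatively be read off from Lemma \ref{lem:product} as $1/(M+1)$ is a valid (and slightly sharper) refinement, but it does not change the route in any essential way.
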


We are now in a position to prove the following lemma.
\begin{lemma}
\label{rapid}
The projection $\overline{P}$ of the swapping Markov chain is
polynomially mixing
on $\{ 0, 1\}^{M+1}$.
\end{lemma}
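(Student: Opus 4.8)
The plan is to compare $\overline P$ with the auxiliary chain $\widetilde P$ of Lemma \ref{lema:comparison-chain} using the refined comparison theorem, Theorem \ref{thm:refined-compthm}. Since both chains live on the hypercube $\{0,1\}^{M+1}$ and share the same stationary distribution $\overline\pi$, the prefactor $a$ in that theorem equals $1$, and it suffices to construct, for each move of $\widetilde P$, a canonical path built out of moves of $\overline P$, and then to bound the congestion parameter $A$ by a polynomial in $N+N'$ and $M$. Combined with $Gap(\widetilde P)^{-1}=O(M\log M)$ from Lemma \ref{lema:comparison-chain}, this yields $Gap(\overline P)^{-1}$ polynomial, hence polynomial mixing of $\overline P$ via Theorem \ref{thm:spec_gap_thm} (noting that $\overline\pi$ puts at least inverse-exponential mass on every trace, which only costs a polynomial factor once multiplied by the gap).

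The moves of $\widetilde P$ are of two kinds: (i) a transposition of two neighbouring bits $t_i \leftrightarrow t_{i+1}$, and (ii) resampling the first bit $t_0$ from $\overline\pi_0$. For move (i): a bit-transposition in $\overline P$ corresponds to a swap move of $P_{sw}$, which is accepted with probability $\min(1, e^{(\beta_{i+1}-\beta_i)(H(x_i)-H(x_{i+1}))})$; one must check that the induced transition probability of $\overline P$ on the two relevant traces is bounded below by a constant (this is exactly the ``swap moves always have constant probability'' remark preceding the lemma, and follows because when $t_i=t_{i+1}$ the swap is free, while when $t_i\ne t_{i+1}$ the acceptance ratio involves $C^{\pm(\beta_{i+1}-\beta_i)\cdot O(1)}$ which is $\Theta(1)$ since consecutive inverse temperatures differ by $O(1/M)$). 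So each such move of $\widetilde P$ maps to a single edge of $\overline P$. For move (ii): flipping the sign of the top coordinate, i.e. changing $t_0$, must be realized as a short path of $\overline P$-moves — first transpose the sign bit up to the highest temperature $M$ (where, as in the analysis of $P_t$ and as in \cite{MZ}, flipping the sign has $\Omega(1/\mathrm{poly})$ probability because the two-point distribution $\overline\pi_M$ at $\beta_M$ is close to uniform on a state space of size $N+N'$), flip it there, then transpose it back down. This path has length $O(M)$.

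The main obstacle, and the place where the argument needs care, is bounding the per-edge congestion ratio
\[
\frac{\overline\pi(x)\,\widetilde P(x,y)}{\overline\pi(z)\,\overline P(z,w)}
\]
summed over all $\widetilde P$-transitions $(x,y)$ whose canonical path uses a given $\overline P$-edge $(z,w)$. Since $\overline\pi$ is a product of two-point measures and the two-point measure $\overline\pi_i$ at $\beta_i$ can be exponentially lopsided at low temperatures (one atom has mass $\approx C^{i\cdot O(N)/M}$ times the other), one must verify that along the chosen canonical paths the stationary weight never drops by more than a polynomial factor relative to the endpoints — the same ``route through the heavier atom'' idea used in Lemma \ref{lem:poly-mixing-fixed-trace}. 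For move (i) the ratio is essentially $1$ since a transposition of bits does not change which atoms are occupied, only at which temperature. For move (ii) the sign-flip path temporarily moves the occupied atom at each intermediate temperature, and one checks that pushing the ``wrong-sign'' bit up to temperature $M$ and back only multiplies the weight by $\prod_i$ (ratio of two-point atoms at $\beta_i$), which telescopes and is controlled because $\overline\pi$ is a product measure; the only genuinely small factor is the one absorbed into the $\Omega(1/\mathrm{poly})$ sign-flip probability at the top temperature. Finally, the number of $\widetilde P$-transitions routed through any fixed $\overline P$-edge is $O(M)$ (only paths flipping the top bit pass through a given intermediate transposition, and there are $O(M)$ starting configurations differing only in bits that have been ``parked''), so $A = O(\mathrm{poly})$ and we conclude.
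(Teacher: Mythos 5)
Your high-level strategy---compare $\overline P$ to an auxiliary hypercube chain via Theorem \ref{thm:refined-compthm} with canonical paths---is the same as the paper's, but the execution has genuine gaps. First, you have misidentified $\widetilde P$: the chain of Lemma \ref{lema:comparison-chain} picks a uniformly random coordinate $i\in\{0,\dots,M\}$ and resamples $t_i$ from $\overline\pi_i$; that is the only reason the coupon-collector bound $Gap(\widetilde P)^{-1}=O(M\log M)$ holds. The chain you describe (adjacent transpositions plus refreshing only the bit $t_0$) is not covered by that lemma, and its fast mixing is not known a priori---the bits at coordinates $1,\dots,M$ are never independently refreshed, so no coupon-collector argument applies, and analyzing that transposition-plus-reservoir chain is itself a nontrivial task. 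With the correct $\widetilde P$, a transition is a bit flip at an \emph{arbitrary} coordinate $i$, and the whole content of the proof is to simulate such a flip inside $\overline P$ by a heating path (swap the bit up to coordinate $0$), a single flip at $\beta_0=0$, and a cooling path back; you only sketch such a path for one special move and give no argument for general $i$.

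Second, you route the flip through the wrong end of the ladder, i.e.\ through the bottleneck. In the paper's convention $\beta_0=0<\dots<\beta_M=\beta$, so coordinate $M$ is the \emph{coldest}; the claim that ``$\overline\pi_M$ at $\beta_M$ is close to uniform'' and that flipping the sign there has inverse-polynomial probability is false---at $\beta_M$ the level chain must cross the exponentially unlikely neighborhood of $0$, so $\overline P$ flips $t_M$ with exponentially small probability (this is exactly the obstruction tempering is designed to bypass), and your congestion parameter $A$ would come out exponential. The cheap flip is at coordinate $0$. Finally, the crucial estimate is missing: you assert that transposing $t_i\neq t_{i+1}$ changes the weight by ``essentially $1$,'' but moving a $0/1$ pair across temperatures changes $\overline\pi$ by the factor $\overline\pi_i(t_{i+1})\overline\pi_{i+1}(t_i)/\bigl(\overline\pi_i(t_i)\overline\pi_{i+1}(t_{i+1})\bigr)$, which need not be close to $1$, and in any case the comparison needed is not to the previous state on the path but to the endpoints. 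What the proof actually requires, and what the paper establishes via $\pi_i(1)\pi_{k+1}(0)\ge\pi_i(0)\pi_{k+1}(1)$ and the block-by-block argument, is that every intermediate state $z$ on the canonical path satisfies $\overline\pi(z)\ge\overline\pi(t^*)$, where $t^*$ replaces the flipped coordinate by its lighter value, so that $\overline\pi(z)\,\overline P(z,z')\ge\tfrac12\,\overline\pi(t)\,\widetilde P(t,t')$ on every edge. Without that monotonicity argument the polynomial bound on $A$ does not follow.
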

\begin{proof}
To apply the comparison theorem, we translate
transitions in the chain $\widetilde{P}$, whose mixing time we know,
into a canonical path consisting of moves in the chain
$\overline{P}$. Let
$(t, t')$ be a single transition in $\widetilde{P}$ from
${t}=(t_0,\ldots,t_i,\ldots,t_M)$
to ${t'}=(t_0,\ldots,1-t_i,\ldots,t_M)$ that flips the $i^{th}$ bit.
The canonical path from ${t}$
to ${t'}$ is the concatenation of three paths $p_1 \circ p_2 \circ p_3$.
In terms of tempering, $p_1$ is a heating phase and $p_3$
is a cooling phase.
\begin{itemize}
\item The path $p_1$ consists of $i$ swap moves
from ${t}$ to $(t_i, t_0,\ldots,t_{i-1}, t_{i+1},\ldots,t_M)$;
\item The path $p_2$ consists of one step that flips the bit corresponding
  to the highest temperature to move to
$(1-t_i, t_0,\ldots,t_M)$;
\item The path $p_3$ consists of $i$ swaps until we reach
${t'}=(t_0,\ldots,1-t_i,\ldots,t_M)$.
\end{itemize}

To bound $A$ in Theorem \ref{thm:refined-compthm}, we will establish that
\begin{equation}\label{eqn1}
\overline{\pi}({z})\ \overline{P}({z}, {z'}) \geq
\frac{1}{2} \overline{\pi}({t})\ \widetilde{P}({t},{t'}),
\end{equation}
for any transition $({z}, {z'})$
in the canonical path.
Second, we need to ensure that the number of paths using the
transition $(z, z')$,
$\Gamma_{z,z'}$, is at most a polynomial.
These two conditions are sufficient to give a polynomial bound on the parameter
$A$ in the comparison theorem.
For any $(z,z')$ we have $|\Gamma(z,z')|\leq M^2$, so it remains to
establish the condition in Equation~\ref{eqn1}.

\vskip0.1in
\noindent \underbar{Case 1:} Transitions along $p_1$.

Let ${z}=(t_0,\ldots,t_{j-1}, t_i, t_{j},\ldots,t_{i-1}, t_{i+1},\ldots,t_M)$
and ${z'}=(t_0,\ldots,t_i, t_{j-1}, \ldots, t_{i-1}, t_{i+1}, \ldots, t_M)$.
\begin{eqnarray}\label{mineqn}
\overline{\pi}({z})\overline{P}({z}, {z'})
& = & \frac{\overline{\pi}({z})}{2(M+1)}
\min\left( 1, \frac{\overline{\pi}({z'})}{\overline{\pi}({z})}\right) \\
& = & \frac{1}{2(M+1)} \min\left(\overline{\pi}({z}),
\overline{\pi}({z'})\right). \nonumber
\end{eqnarray}
First we consider $\overline{\pi}({z})$.
\begin{eqnarray*}
\overline{\pi}({z}) = \prod_{\ell=0}^M
\  \displaystyle\sum_{Tr(x)_{\ell} =  z_{\ell}} \pi_{\ell}(x)
\triangleq \prod_{\ell=0}^M \pi_{\ell}(z_{\ell}).
\end{eqnarray*}
Assume, without loss of generality, that $N \leq N'.$
 Then we have
\begin{eqnarray*}
\overline{\pi}({t})\widetilde{P}({t}, {t'})
& = & \frac{\overline{\pi}(t)}{M+1}
\overline{\pi}_i(1-t_i)\\
& \leq & \frac{\min\left(\overline{\pi}_i(t_i),
    \overline{\pi}_i(1-t_i)\right) }{M+1} \prod_{j \neq i}
\overline{\pi}_j(t_j)\\
& = & \frac{\overline{\pi}({t^*})}{M+1},
\end{eqnarray*}
where ${t^*} = (t_0,\ldots,t_{i-1},0,t_{i+1},\ldots,t_M)$.
We want to show that $\overline{\pi}({t^*}) \leq \overline{\pi}({z}).$
It is useful to partition $t^*$ into blocks of bits $t_{\ell}$ that equal~1,
separated by one or more zeros.
Let $k < i$ be the
largest value such that $t_k=0$. It can be verified from the
definition of the distribution that
\begin{eqnarray*}
\pi_i(1)\pi_{k+1}(0)  \geq \pi_i(0)\pi_{k+1}(1)
\end{eqnarray*}
From this fact, it follows that
$$\prod_{\ell=k+1}^i \pi_{\ell}(z_{\ell}) \geq \prod_{\ell=k+1}^i
\pi_{\ell}(t_{\ell}^*).$$
Similarly, considering the next block of $t^*$ (i.e., the next set of
bits such that
$t_{\ell}=1$)
until the first index $k'$ such that $t_{k'}=0$,
$$\prod_{\ell=k'+1}^k \pi_{\ell}(z_{\ell}) \geq \prod_{\ell=k'+1}^k
\pi_{\ell}(t_{\ell}^*).$$
Continuing in this way we find
$$\prod_{\ell=j}^i \pi_{\ell}(z_{\ell}) \geq \prod_{\ell=j}^i
\pi_{\ell}(t_{\ell}^*),$$
and thus
$$\overline{\pi}({z}) \geq \overline{\pi}({t^*}).$$
Likewise, by taking one more term, we find that $\overline{\pi}({z'})
\geq \overline{\pi}({t^*}).$
Together with equation~\ref{mineqn} this implies
\begin{eqnarray*}
\overline{\pi}({z})\ \overline{P}({z},{z'}) \geq
\frac{1}{2} \overline{\pi}({t})\ \widetilde{P}({t},{t'}).
\end{eqnarray*}

\vskip.1in
\noindent \underbar{Case 2:} The transition along $p_2$.
Consider the transition from ${z}=(t_i,t_0,\ldots,t_{i-1},t_{i+1},\ldots,t_M)$
to ${z'}=(1-t_i, t_0,\ldots,t_M)$ that flips the first bit of $z$.
Repeating the argument from Case 1, it follows that
$$\min\left(\overline{\pi}({z}), \overline{\pi}({z'})\right)
 \geq \overline{\pi}({t^*}).$$
Therefore, again we find equation~\ref{eqn1} is satisfied.
\vskip.1in
\noindent \underbar{Case 3:} Transitions along $p_3$.
This is similar to Case 1.

\vskip.1in
In all three cases, we find that if $({z}, {z'})$ is one
step on the canonical
path from ${t}$ to ${t'}$, equation~\ref{eqn1} is satisfied.
Therefore, it follows that
$$
A=\max_{(z,z')\in E(\overline{P})}
\left\{
\frac{\displaystyle\sum_{\Gamma(z,z')} |\gamma_{t,t'}|\overline{\pi}(t)
\widetilde{P}(t,t')}{\overline{\pi}({z}) \overline{P}({z},{z'})}
\right\} \  = O(M^3).$$
Hence, by Lemma \ref{lema:comparison-chain}, applying Theorem
\ref{thm:refined-compthm}, $Gap(\overline{P})  = \Omega(M^{-4}\ln M).$

\end{proof}

This establishes all the results necessary to apply the
decomposition theorem Theorem \ref{dcmp}, completing the proof of
Theorem \ref{expfast} by Theorem \ref{thm:spec_gap_thm}.

\section{Mean-Field Models}\label{newswap}

The analysis from the last section suggests how to design
distributions for swapping and tempering in cases where the
mixing time is not known or known to be exponentially large. We
consider examples of mean-field
models to illustrate the ideas.  While the examples below are
very specific to mean-field models, our results indicate that there
are more robust methods for designing tempering and swapping
algorithms.

%Fix constants $\beta  >0,$ $A_1, A_2,\ldots,A_k,$ and let
%$n$ be a large integer.  The state space of the mean-field
%model consists of all spin
%configurations on the complete graph $ K_n$, namely $\Omega=\{1,\ldots,q\}^n.$
%The probability distribution over these configurations is determined
%by  $\beta$, inverse temperature, and $\{A_k\}$, the $k$-wise
%interactions between particles.  The Hamiltonian is given by
%$$H(x)=\sum_k \sum_{\{i_1,\ldots,i_k\}\subset [n]} A_k \
%\delta_{(x_{i_1},\ldots,x_{i_k})},$$
%where $\delta$ is the Kronecker-$\delta$ function that takes the value 1 if all
%of the arguments are equal and is 0 otherwise (when $k=1$ we set
%$\delta_{x_i}$=1
%iff $x_i=1$).
%The Gibbs distribution is
%$$\pi(x)=\pi_{(\beta, A_1,\ldots,A_k)}(x) =  {{e^{\beta H(x)}}\over{Z}},
%{\rm \ \ for \ }  x \in
%{\Omega},$$
%where  $Z=\displaystyle\sum_{y \in \Omega} e^{\beta H(y)}$
%is the normalizing constant.

\vskip.1in
\noindent \underbar{\bf Mean-Field Ising Model with an External
  Field: }
An important special case of the $q$-state Potts model with an
external field is the mean-field Ising model in the presence
of an external field.  This model is defined by parameters $q=2$,
$\beta  \ge 0$, the inverse temperature, and $h$, the external
magnetic field.
The Gibbs distribution over configurations $x\in \Omega = \{+1,-1\}^V$ is
\[\pi(x)=\pi_{\beta, h}(x) =  \frac{\exp\left(\beta \left( \sum_{i<j
      }\delta_{x_i,x_j}+ h\sum_{i=1}^2\delta_{x_i,1}
      \right)\right)}{Z(\beta, h)},\]
where  $Z(\beta, h)$ is the normalizing constant.
We will show that with a modified set of
        appropriately ``dampened" distributions, swapping
        can be used to sample configurations in this case.

\subsection{Entropy Dampening Distributions}
\label{sec:ent-damp}
Traditionally, a convenient choice for the swapping and tempering
distributions are the tempered distributions given by the Gibbs
distributions $\pi_{i}$ for a chosen sequence of inverse temperatures.

The idea for the new distributions we define stems from the
observation that this may be a poor choice of interpolants
because they preserve the first-order phase transition, as we will
show in the next section is the case for the Potts model.  We can do
much better by exploring a wider class of interpolating distributions.
To see the flexibility we have in defining the set of distributions, define
$$\rho_i(x)= \frac{\pi_i(x) f_i(x)}{Z_i},$$
where $Z_i=\sum_{x\in\Omega}\pi_i(x) f_i(x)$ is another normalizing constant.
When $f_i(x)$ is taken to be the constant function, then we obtain the usual
tempered distributions. Recall that in the mean-field model with $q$ spins, we let $\sigma = (\sigma_1,\ldots,\sigma_q)$ denote the numbers of vertices with spins $1,\ldots,q$.
Define
$$f_i(x) = {n \choose \sigma_1, \ldots, \sigma_q}^{\frac{i-M}{M}}.$$ The {\em
  Flat-swap} algorithm or chain is then defined to be the swapping algorithm using
  the distributions $\rho_0, \ldots,\rho_M$ as defined above. We
  define the {\em Flat-tempering} algorithm analogously using the same
  set of distributions.

\subsection{Polynomial Mixing of Flat-Swap}
For a configuration $x \in \Omega$, recall that we let $\sigma_i$ be the number of vertices colored $i$ and let $\sigma = \sigma(x)=(\sigma_1,\ldots,\sigma_q)$, where $\sum_q
\sigma_q=n$. Define $\Omega_{\sigma} \subset \Omega$ to be the set of configurations with $\sigma_i$
vertices assigned
color $i$.
The {\it total spins distribution} is the discrete distribution on the set of possible $\sigma$, 
$$S_{\sigma}=\pi(\Omega_{\sigma}) = \sum_{x\in\Omega_\sigma} \pi(x).$$
For the Ising model, we set $$f_i(x)={n\choose{k}}^{\frac{i-M}{M}},$$
if in the configuration $x$, $k$ vertices are assigned $+1$ and $n-k$ are assigned $-1$.  Let $\beta_i= {\beta}\cdot \frac{i}{M}$.   Note that
$f_i(x)$ is easy to compute given $x$.
%, so these $M\!+\!1$ distributions and the
%corresponding Metropolis probabilities are easy to evaluate.
 A simple calculation shows that
%$$\rho_{i}(\Omega_{(k,n-k)}) = \sum_{x \in \Omega_{(k,n-k)}} \rho_{i}(x),$$
%so
%for $x \in \Omega_{(k,n-k),}$
\begin{align}
\rho_i(\Omega_{(k,n-k)}) \ = \ {n\choose{k}} \rho_{i}(x)
\ = \  \frac{1}{Z_i}\left(\rho_{M}(\Omega_{(k,n-k)})\right)^{\frac{i}{M}}.
\label{eq:dampened-dist}
\end{align}
The function $f_i(x)$ effectively dampens the entropy (multinomial)
just as the change in
temperature dampens the energy term coming from the Hamiltonian.
Thus, all the total spins distributions have
the same relative shape, but get flatter as $i$ is decreased.
This no longer preserves the cut in the state space of the
distributions for the usual swap algorithm.
It is this property that makes this choice of distributions useful.

\begin{theorem}\label{flatthm}
The Flat-swap algorithm mixes polynomially for every inverse temperature
$\beta>0$ and any external field $h$ for the Ising model.

\end{theorem}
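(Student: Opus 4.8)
The plan is to bound the spectral gap of the Flat-swap chain with the decomposition theorem (Theorem~\ref{dcmp}), following the blueprint of the proof of Theorem~\ref{expfast}: choose a partition of $\Omega_{sw}$ indexed by a ``trace,'' show every restriction chain mixes polynomially, show the projection chain over traces mixes polynomially, and then conclude via Theorem~\ref{thm:spec_gap_thm}.

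The partition is where the entropy dampening does its work. Write $k(y)$ for the number of $+1$ vertices of $y\in\{+1,-1\}^n$, and recall from \eqref{eq:dampened-dist} that the total spins distribution of $\rho_i$ is $\rho_i(\Omega_{(k,n-k)})=S_k^{i/M}/Z_i$, where $S_k=\pi_\beta(\Omega_{(k,n-k)})$. Hence every $\rho_i$ has a total spins distribution with the \emph{same} set of critical points as $\pi_\beta$: raising to the power $i/M\le 1$ only flattens the profile. For any external field the mean-field Ising total spins distribution is at most bimodal, its modes sitting at or near the two extreme magnetizations; if it is unimodal then so is every $S_k^{i/M}$ and no decomposition is needed, and otherwise it has a unique interior valley at some $k^{*}$, which is then a valley for \emph{every} $\rho_i$ simultaneously. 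Define $\mathrm{Tr}(x)=t\in\{0,1\}^{M+1}$ by $t_i=\mathbf{1}[\,k(x_i)\ge k^{*}\,]$ and partition $\Omega_{sw}=\bigcup_t\Omega_{sw}^t$. This is precisely what the Gibbs interpolants $\pi_{\beta_i}$ fail to supply: as $\beta_i$ falls below the critical temperature the valley of $\pi_{\beta_i}$ migrates and then disappears, so no fixed cut of the hypercube is consistent across all temperatures, whereas the dampened distributions pin the cut at $k^{*}$ for every $i$.

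For the restriction chains $P_t$, let $\hat P_t$ be the chain obtained by suppressing swap moves; it is a product over the temperatures of the fixed-temperature level dynamics restricted to one side of $k^{*}$. Each side is connected under single-vertex updates (route through the all-$+1$ or all-$-1$ configuration), and on it the total spins distribution $S_k^{i/M}$ is unimodal while the mean-field symmetry keeps the conditional law on each $\Omega_{(k,n-k)}$ uniform; single-site dynamics inside a single mode of a mean-field Ising measure mixes polynomially by the arguments of \cite{MZ}, and the flattening only helps. By Lemma~\ref{lem:product} the gap of $\hat P_t$ is then inverse polynomial. To pass from $\hat P_t$ to $P_t$ I would apply the comparison theorem (Theorem~\ref{thm:refined-compthm}), replacing each swap of coordinates $i,i+1$ --- legal only when $t_i=t_{i+1}$, so both endpoints lie on the same side of $k^{*}$ --- by a canonical path of level moves that morphs one coordinate into the other through an intermediate state in which both coordinates equal whichever of $x_i,x_{i+1}$ is heavier, mirroring Lemma~\ref{lem:poly-mixing-fixed-trace}; routing through the heavier configuration keeps every intermediate weight within a constant factor of the endpoint weight. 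The genuinely new point is the congestion count: the per-temperature state space is now exponential, so the ``only polynomially many states $x_i,x_{i+1}$'' bound of Lemma~\ref{lem:poly-mixing-fixed-trace} must be replaced by the mean-field reduction --- the level transition probabilities and the measures depend on a configuration only through its magnetization, so the relevant congestion is controlled at the level of the birth--death magnetization chains on $\{0,\dots,n\}$, where the state space is polynomial, and the within-mode paths can be chosen canonically (e.g.\ always routed toward the extreme configuration) so that only polynomially many cross any fixed level edge.

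Finally, the projection $\overline P$ on $\{0,1\}^{M+1}$ has stationary distribution the product of the two-point laws $\overline\pi_i(1)=\sum_{k\ge k^{*}}S_k^{i/M}/Z_i$ and $\overline\pi_i(0)=1-\overline\pi_i(1)$. The shape-preservation property supplies exactly the pointwise inequalities among the $\overline\pi_\ell(\cdot)$ used in the proof of Lemma~\ref{rapid} (one side becoming exponentially favored as $i\to M$, the split roughly balanced at $i=0$), so the analysis of $\overline P$ is structurally identical to the projection analysis for the asymmetric exponential distribution: compare $\overline P$ with the auxiliary chain $\widetilde P$ of Lemma~\ref{lema:comparison-chain} (transpose two neighboring bits, or flip the hottest bit) along the same heating/cooling canonical paths and the same block-by-block weight estimate as in Lemma~\ref{rapid}, to get an inverse-polynomial gap. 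Assembling the two gap bounds through Theorem~\ref{dcmp} and then Theorem~\ref{thm:spec_gap_thm} yields polynomial mixing. I expect the restriction-chain step to be the main obstacle: one must redo the single-mode mean-field analysis in the presence of the field and the dampening and, in the comparison that eliminates swap moves, bound the path congestion despite the exponentially large per-temperature state space --- precisely where the permutation symmetry of the mean-field model has to be exploited to collapse everything to the magnetization chain before the paths are counted.
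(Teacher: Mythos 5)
Your proposal matches the paper's proof in all essentials: the same trace decomposition pinned at the common valley $t_{\min}$ (which exists precisely because the dampened total-spins distributions $\rho_M^{i/M}/Z_i$ share their critical points), the same comparison of each restriction $P_t$ against the swap-suppressed product chain $\hat P_t$ with canonical paths routed through the heavier/modal magnetization, and the same comparison of the projection on $\{0,1\}^{M+1}$ against a simple auxiliary hypercube chain, assembled via the decomposition theorem. The only difference is one of bookkeeping: where you say ``collapse to the magnetization chain before counting paths,'' the paper formalizes exactly this by passing to the two-step chain $P_t^2$, decomposing it over the spin-count classes $\Omega_{\sigma^i}$ (handled by the exclusion-process arguments of \cite{MZ}), and running the canonical-path comparison on the total-spins projection, and its projection-chain estimate is the cross-temperature inequality of Corollary \ref{cor:01} playing the role of your ``shape-preservation'' inequalities.
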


We follow the strategy set forth in the proof of Theorem \ref{expfast}, using
decomposition and comparison in a similar manner.
The total spins distribution for the Ising model
%(i.e., $\rho_M=\pi)$
is known to be bimodal above the critical temperature (at and below
the critical temperature there is a unique maximum), even in the
presence of an external field.
With our choice of distributions $\rho_i$, it now follows that all
$M+1$ total spins distributions are bimodal as well.
Moreover, the minima of the distributions occur
at the same value of $i\in [n]$ for all $M+1$
distributions.  Let $t_{\min}$ be the value of $i$ which is the
minimum. Let $\sigma^0_{\max}$ and $\sigma^1_{\max}$ denote the
equivalence
classes of configurations which maximise the total spins distributions
for $i=M$ on either side of $t_{\min}$.

Let $\Omega_{sw}=\Omega^{M+1}$ be the state space of the chain.
Define the trace Tr$({x})={t} \in \{0,1\}^{M+1}$,
where $t_i=0$ if the number of $+1$s in $x_i$ is less than $t_{\min}$ and
let $t_i=1$ if the number of $+1$s in $x_i$ is at least $t_{\min}$.
As in the case of the exponential distribution, we partition
$\Omega_{sw}$ according to the trace of the state $x$. Let
$\overline{P}$ be the projection Markov chain for this partition and let $P_t$ denote the corresponding restriction chains.

%%%%%%%%%%%%%%%%%%%%%%%%%%%%%%%%%%%%%%%%%%%%%%%%%%%%%%%%%%%%%%%%%%%
We begin by showing that the projection chain mixes in polynomial time. The idea of the proof that the restrictions mix polynomially is analogous to the arguments in Lemma \ref{lem:poly-mixing-fixed-trace}, although the details are slightly different for the Ising model, and we make use of results of \cite{MZ}.

Let
$\overline{\rho}$ be the stationary distribution, which is the
product of the distributions $\overline{\rho}_i$ each of which is a
distribution on the two point space $\{0,1\}$. 
Without loss of generality let $0$ be the coordinate for which the the
mode is lower at the temperature $M$.

\begin{lemma}
At every temperature $i$, $\overline{\rho}_i(\sigma^0_{\max}) \leq
\overline{\rho}_i(\sigma^1_{\max})$.
\end{lemma}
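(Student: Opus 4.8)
The plan is to reduce the claim to a combinatorial inequality about the target total spins distribution via the structural identity \eqref{eq:dampened-dist}. Write $S_k := \rho_M(\Omega_{(k,n-k)})$ for the total spins distribution of the Ising model at the target temperature, so that $S_k \propto \binom{n}{k}\exp\!\big(\beta(\binom{k}{2}+\binom{n-k}{2}+hk)\big)$. By \eqref{eq:dampened-dist}, at temperature $i$ we have $\rho_i(\Omega_{(k,n-k)}) = S_k^{i/M}/Z_i$ with $Z_i = \sum_{j=0}^{n} S_j^{i/M}$. Identifying $\sigma^0_{\max}$ with the trace value $0$ and $\sigma^1_{\max}$ with the trace value $1$ (the two-point distribution $\overline{\rho}_i$ lives on $\{0,1\}$, and $\sigma^0_{\max}$, $\sigma^1_{\max}$ lie on opposite sides of $t_{\min}$), this gives $\overline{\rho}_i(\sigma^0_{\max}) = Z_i^{-1}\sum_{k<t_{\min}} S_k^{i/M}$ and $\overline{\rho}_i(\sigma^1_{\max}) = Z_i^{-1}\sum_{k\ge t_{\min}} S_k^{i/M}$, so it is enough to prove $\sum_{k<t_{\min}} S_k^{i/M}\le \sum_{k\ge t_{\min}} S_k^{i/M}$ for every $i$.

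By the spin-flip symmetry $(+1\leftrightarrow -1,\ h\leftrightarrow -h)$ we may assume $h\ge 0$; with this convention the higher of the two modes is the $+1$-dominated one, consistent with the choice of $0$ as the coordinate with the lower mode. The first ingredient is the elementary identity $S_{n-k} = S_k\,e^{\beta h(n-2k)}$: both $\binom{n}{k}$ and $\binom{k}{2}+\binom{n-k}{2}$ are invariant under $k\mapsto n-k$, while the field term changes by $h(n-2k)$. Hence $S_{n-k}\ge S_k$ whenever $k\le n/2$, and therefore $S_{n-k}^{i/M}\ge S_k^{i/M}$ for every $i\ge 0$.

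The second ingredient is that for $h\ge 0$ the separating minimum satisfies $t_{\min}\le n/2$: for $h=0$ this is the exact symmetry of $S$ about $n/2$, and for $h>0$ one sees from $S_{k+1}/S_k = \tfrac{n-k}{k+1}\,e^{\beta(2k-n+1+h)}$ that increasing $h$ shifts the profile $k\mapsto \log(S_{k+1}/S_k)$ upward by the constant $\beta h$, which moves its negative-to-positive crossing (the valley) to the left. Granting $t_{\min}\le n/2$, the reflection $k\mapsto n-k$ injects $\{0,\dots,t_{\min}-1\}$ into $\{t_{\min},\dots,n\}$, since $k\le t_{\min}-1\le n/2$ forces $n-k\ge n/2\ge t_{\min}$. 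Combining the two ingredients,
\begin{align*}
\sum_{k<t_{\min}} S_k^{i/M}
\;\le\; \sum_{k<t_{\min}} S_{n-k}^{i/M}
\;\le\; \sum_{k\ge t_{\min}} S_k^{i/M},
\end{align*}
and dividing by $Z_i$ yields $\overline{\rho}_i(\sigma^0_{\max})\le \overline{\rho}_i(\sigma^1_{\max})$.

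The main obstacle is the second ingredient: controlling the location of $t_{\min}$ so that the reflection genuinely maps the light side into the heavy side. I would either cite the standard description of the mean-field Ising total spins distribution (cf.\ \cite{MZ} for the field-free case, the field term being handled by the monotonicity of $S_{k+1}/S_k$ above) or, for a self-contained treatment, verify directly that the unique interior minimum of $S_k$ cannot lie strictly to the right of $n/2$ when $h\ge 0$. Everything else is the short explicit computation recorded above.
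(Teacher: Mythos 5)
Your proposal reads the lemma as a comparison of the aggregate masses of the two sides of the cut, $\overline{\rho}_i(0)\le\overline{\rho}_i(1)$, and proves that via Ising-specific structure: the reflection identity $S_{n-k}=S_k\,e^{\beta h(n-2k)}$ plus the claim that $t_{\min}\le n/2$ when $h\ge 0$. That is not the statement the paper proves, nor what it needs. The quantity compared in the lemma is the dampened weight of the two maximizing equivalence classes themselves: by \eqref{eq:dampened-dist}, $\overline{\rho}_i(\sigma^z_{\max})=\frac{1}{Z_i}\bigl(\rho_M(\sigma^z_{\max})\bigr)^{i/M}$, and since the labeling convention fixes $\rho_M(\sigma^0_{\max})\le\rho_M(\sigma^1_{\max})$ (side $0$ is \emph{defined} to be the side whose mode is lower at temperature $M$), the monotonicity of $t\mapsto t^{i/M}$ gives the inequality at every $i$. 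No property of the Ising model, of the external field, or of the location of $t_{\min}$ enters; the lemma is a one-line consequence of the entropy-dampening construction, and indeed that is the whole point of the construction --- the total-spins distributions at all temperatures have the same relative shape, so the ordering of the two mode heights at temperature $M$ persists. This is also all that is used downstream (Lemma \ref{lem:63} and Corollary \ref{cor:01} only need this mode-height ordering together with the $O(n)$ comparison between a side's mass and its mode).

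Taken on its own terms, your argument also has a genuine gap in its second ingredient: the claim $t_{\min}\le n/2$ (and the accompanying identification of side $0$ with the $-1$-dominated side when $h\ge 0$) is justified only by the heuristic that shifting the profile $k\mapsto\log(S_{k+1}/S_k)$ up by $\beta h$ moves its negative-to-positive crossing left; that profile is not monotone, and for large $h$ the total-spins distribution need not even be bimodal, so the existence and location of $t_{\min}$ requires a careful statement and proof --- you flag this yourself as the main obstacle. If you carried it out you would obtain the (true, but different) half-mass inequality, which is strictly more than the lemma asserts and considerably more work than the intended two-line argument from \eqref{eq:dampened-dist}.
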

\begin{proof} By \eqref{eq:dampened-dist}, 
$$\overline{\rho}_i(\sigma^0_{\max})  = \frac{1}{Z_i}
(\rho_M(\sigma^0_{\max}))^{\frac{i}{M}} \leq \frac{1}{Z_i}
(\rho_M(\sigma^1_{\max}))^{\frac{i}{M}}  = \overline{\rho}_i(\sigma^1_{\max}).$$
 \end{proof}

\begin{lemma}\label{lem:63} At every temperature $i$, for $z \in \{0,1\}$,
  $\overline{\rho}_i(z)$ is within a factor of $O(n)$ of
  $\rho_i(\sigma_{\max}^{z})$.
\end{lemma}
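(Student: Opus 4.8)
The plan is to sandwich $\overline{\rho}_i(z)$ between $\rho_i(\Omega_{\sigma^z_{\max}})$ and $(n+1)\,\rho_i(\Omega_{\sigma^z_{\max}})$, using that on each side of $t_{\min}$ there are only $O(n)$ equivalence classes $\Omega_{(k,n-k)}$ and that $\sigma^z_{\max}$ is the heaviest among them under every $\rho_i$.

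First I would unwind the definitions. Since the trace bit of a configuration $x_i$ depends only on its number $k$ of $+1$-spins (it is $0$ when $k<t_{\min}$ and $1$ when $k\ge t_{\min}$), the projected mass splits as $\overline{\rho}_i(z)=\sum_{k\in I_z}\rho_i(\Omega_{(k,n-k)})$, where $I_0=\{0,\dots,t_{\min}-1\}$ and $I_1=\{t_{\min},\dots,n\}$, so $|I_z|\le n+1$. Next, by \eqref{eq:dampened-dist}, $\rho_i(\Omega_{(k,n-k)})=Z_i^{-1}\bigl(\rho_M(\Omega_{(k,n-k)})\bigr)^{i/M}$, and since $s\mapsto s^{i/M}$ is nondecreasing on $[0,\infty)$ for $0\le i\le M$, the ordering of the classes $\{\Omega_{(k,n-k)}\}_{k\in I_z}$ by $\rho_i$-mass agrees with their ordering by $\rho_M$-mass; in particular the class of largest $\rho_i$-mass on side $z$ is $\sigma^z_{\max}$, for every $i$. (This is exactly the assertion that the dampening $f_i$ preserves the shape, hence the bimodal structure, of the total spins distribution.)

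Given these two observations the estimate is immediate. The lower bound $\overline{\rho}_i(z)\ge\rho_i(\Omega_{\sigma^z_{\max}})$ holds because $\sigma^z_{\max}$ is one of the classes indexed by $I_z$. For the upper bound, each of the at most $n+1$ summands of $\sum_{k\in I_z}\rho_i(\Omega_{(k,n-k)})$ is bounded by $\rho_i(\Omega_{\sigma^z_{\max}})$, so $\overline{\rho}_i(z)\le(n+1)\,\rho_i(\Omega_{\sigma^z_{\max}})$. Combining, $\rho_i(\Omega_{\sigma^z_{\max}})\le\overline{\rho}_i(z)\le(n+1)\,\rho_i(\Omega_{\sigma^z_{\max}})$, which is the claimed factor-$O(n)$ comparison. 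The argument is short; the only point that needs care is the claim that $\sigma^z_{\max}$, defined as the peak on its side at temperature $M$, is still the peak on that side at every temperature $i$, and this is precisely what the monotone power relation above delivers. The normalizing constant $Z_i$ never has to be estimated, since it cancels in the ratio of the two quantities.
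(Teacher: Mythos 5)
Your proof is correct and follows essentially the same route as the paper: the lower bound because $\sigma^z_{\max}$ is one of the $O(n)$ equivalence classes on side $z$, and the upper bound by dominating each of the at most $n+1$ summands by the largest one. Your explicit justification via \eqref{eq:dampened-dist} that $\sigma^z_{\max}$ remains the heaviest class at every temperature $i$ is a point the paper leaves implicit, but it is the same underlying observation.
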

\begin{proof} Clearly, $\overline \rho_i(z) \ge \rho_i(\sigma_{\max}^{z})$. On the other hand, there are $O(n)$ equivalence classes of configurations of which
  $\sigma_{\max}^{z}$ is one and has the largest relative
  weight. Hence, $\overline \rho_i(z) \le O(n) \rho_i(\sigma_{\max}^{z})$.
 \end{proof}

\begin{corollary}\label{cor:01}
For every pair of temperatures $i>j$,
$$ \overline{\rho}_i(0) \overline{\rho}_j(1) \leq O(n^2) \overline{\rho}_i(1) \overline{\rho}_j(0) $$
\end{corollary}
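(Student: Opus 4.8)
The plan is to chain together the two preceding lemmas with the monotonicity observation so as to reduce the desired inequality, for temperatures $i>j$, to the statement that the ratio $\overline{\rho}_i(0)/\overline{\rho}_i(1)$ is nondecreasing in $i$. First I would invoke Lemma \ref{lem:63} to replace each $\overline{\rho}$-value by the corresponding $\rho$-value of the maximizing equivalence class, losing only an $O(n)$ factor per term; since the left-hand side contributes two such terms, this accounts for the $O(n^2)$ slack in the statement. Concretely, $\overline{\rho}_i(0)\overline{\rho}_j(1) \le O(n^2)\,\rho_i(\sigma^0_{\max})\rho_j(\sigma^1_{\max})$ and $\overline{\rho}_i(1)\overline{\rho}_j(0) \ge \rho_i(\sigma^1_{\max})\rho_j(\sigma^0_{\max})$, so it suffices to prove
\begin{align*}
\rho_i(\sigma^0_{\max})\,\rho_j(\sigma^1_{\max}) \ \le \ \rho_i(\sigma^1_{\max})\,\rho_j(\sigma^0_{\max}).
\end{align*}

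The key step is then to use the explicit scaling form \eqref{eq:dampened-dist}: for each temperature $\ell$ we have $\rho_\ell(\sigma^z_{\max}) = \frac{1}{Z_\ell}\big(\rho_M(\sigma^z_{\max})\big)^{\ell/M}$. Writing $a = \rho_M(\sigma^0_{\max})$ and $b = \rho_M(\sigma^1_{\max})$, the inequality to be shown becomes, after the $Z_i, Z_j$ factors cancel from both sides,
\begin{align*}
a^{i/M} b^{j/M} \ \le \ a^{j/M} b^{i/M},
\end{align*}
i.e. $(a/b)^{i/M} \le (a/b)^{j/M}$. Since $\sigma^1_{\max}$ is (on the side of higher mode at temperature $M$) the dominant equivalence class, we have $a \le b$ by the lemma just before Lemma \ref{lem:63}, hence $a/b \le 1$, and because $i>j$ the exponent $i/M$ is larger, so raising a number in $(0,1]$ to the larger power only decreases it. This gives exactly $(a/b)^{i/M} \le (a/b)^{j/M}$, completing the reduction.

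There is essentially no obstacle here: the corollary is a bookkeeping consequence of the three preceding results, and the only thing to be careful about is tracking which coordinate carries the lower mode (fixed to be $0$ by the earlier normalization convention) so that the direction $a\le b$ is correct, and confirming that the $Z_\ell$ normalizers genuinely cancel when the same pair $(i,j)$ appears on both sides. The $O(n^2)$ factor is not tight and is harmless for the eventual application to bounding the mixing time of the projection chain via comparison, where one only needs such ratios to be polynomially bounded; accordingly I would not attempt to optimize the constant.
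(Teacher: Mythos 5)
Your proof is correct and follows essentially the same route as the paper's: both rest on Lemma \ref{lem:63} (costing the $O(n^2)$ factor), the scaling identity \eqref{eq:dampened-dist}, and the convention that $\rho_M(\sigma^0_{\max}) \leq \rho_M(\sigma^1_{\max})$ together with $i>j$. The paper merely writes the argument as a chain of inequalities on the ratio $\overline{\rho}_i(0)/\overline{\rho}_j(0)$, whereas you rearrange it multiplicatively and cancel the normalizers $Z_i, Z_j$; the content is identical.
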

\begin{proof}By Lemma \ref{lem:63},
\begin{eqnarray*}
\frac{\overline{\rho}_i(0)}{ \overline{\rho}_j(0)} &  \leq & O(n)
\frac{\rho_i(\sigma_{\max}^{0})}{\rho_j(\sigma_{\max}^{0})} \\
& =  & O(n) \frac{Z_j}{Z_i}
(\rho_M(\sigma_{\max}^{0}))^{\frac{i-j}{M}} \\
& \leq  & O(n) \frac{Z_j}{Z_i}
(\rho_M(\sigma_{\max}^{1}))^{\frac{i-j}{M}} \\
& \leq  & O(n^2) \frac{\overline{\rho}_i(1)}{ \overline{\rho}_j(1)}
\end{eqnarray*}
 \end{proof}

\begin{theorem} The projection Markov chain $\overline{P}$ is
  polynomially mixing on $\{0,1\}^{M+1}$.
\end{theorem}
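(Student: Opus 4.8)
The plan is to mimic the projection argument of Lemma~\ref{rapid}: compare $\overline{P}$ to the simple auxiliary chain $\widetilde P$ on $\{0,1\}^{M+1}$ that transposes neighbouring bits or flips the first (highest-temperature) bit, each with the appropriate Metropolis probability, and whose mixing time is $O(M\log(M+\varepsilon^{-1}))$ by the coupon-collector bound of Lemma~\ref{lema:comparison-chain}. To each transition of $\widetilde P$ that flips the $i$th bit I assign the same canonical path $p_1\circ p_2\circ p_3$ as before: $i$ swap moves bringing the $i$th coordinate to the front (the ``heating'' phase $p_1$), one bit-flip at the top temperature ($p_2$), and $i$ swap moves carrying it back ($p_3$). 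Since $\overline\rho$ is a product measure over the two-point distributions $\overline\rho_i$, I must establish, for every edge $(z,z')$ on such a path, the congestion inequality
\[
\overline{\rho}(z)\,\overline{P}(z,z') \;\ge\; \frac{1}{\mathrm{poly}(n)}\,\overline{\rho}(t)\,\widetilde{P}(t,t'),
\]
together with $|\Gamma(z,z')| = O(M^2)$, which gives a polynomial bound on the parameter $A$ in Theorem~\ref{thm:refined-compthm}; combined with $a = \Omega(1)$ (the two stationary distributions agree), this yields $Gap(\overline P) = \Omega(\mathrm{poly}(M)^{-1})$.

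The key difference from the exponential case is that the two-point distributions $\overline\rho_i$ are no longer monotone in a single direction as $i$ varies — the mode can be on either side — so the clean inequality $\pi_i(1)\pi_{k+1}(0)\ge\pi_i(0)\pi_{k+1}(1)$ used in Case~1 of Lemma~\ref{rapid} is unavailable. This is where the lemmas just proved do the work. By the normalisation~\eqref{eq:dampened-dist}, $\overline\rho_i(z)$ is (up to the $O(n)$ factor of Lemma~\ref{lem:63}) proportional to $\rho_M(\sigma_{\max}^z)^{i/M}$, so the ratio $\overline\rho_i(0)/\overline\rho_i(1)$ is a fixed base raised to the power $i/M$ — hence \emph{monotone in $i$}. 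Corollary~\ref{cor:01} packages exactly the resulting exchange inequality $\overline\rho_i(0)\overline\rho_j(1)\le O(n^2)\overline\rho_i(1)\overline\rho_j(0)$ for $i>j$, with only a polynomial loss. First I would reduce the swap-move congestion along $p_1$ and $p_3$ to repeated applications of this corollary: each elementary swap replaces a factor $\overline\rho_i(t_i)\overline\rho_j(t_j)$ by $\overline\rho_i(t_j)\overline\rho_j(t_i)$, and since $\widetilde P$'s swap transitions have probability $\Theta(1/M)$ while $\overline P$'s swap transitions (being genuine Metropolis swaps between a two-point and a two-point distribution) also have probability bounded below by a constant over $M$ times the min-ratio, the accumulated loss across the at-most-$M$ swaps of a path is at most $O(n^2)^{O(M)}$... which is too large, so the argument must instead be done edge-by-edge as in Lemma~\ref{rapid}: bound $\overline\rho(z)\overline P(z,z')$ from below directly by $\overline\rho(t^\ast)/(M+1)$ for a suitable ``floor'' configuration $t^\ast$ obtained by setting the $i$th bit to its less-likely-at-temperature-$M$ value, and then show $\overline\rho(z)\ge \frac{1}{O(n)}\overline\rho(t^\ast)$ and $\overline\rho(z')\ge\frac{1}{O(n)}\overline\rho(t^\ast)$ using Lemma~\ref{lem:63} and the monotonicity of $\overline\rho_i(0)/\overline\rho_i(1)$ in $i$ — block by block exactly as in Case~1, but carrying the $O(n)$ slack through each block. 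For the $p_2$ step I use Lemma~6.4 directly: the top-temperature bit-flip in $\overline P$ has probability $\Omega(1/M)$ since $\overline\rho_M(0)/\overline\rho_M(1) = \Omega(1/n)$, so~\eqref{eqn1} holds with a $\mathrm{poly}(n)$ slack.

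The main obstacle, then, is bookkeeping the polynomial slack: in Lemma~\ref{rapid} the telescoping of blocks was \emph{exact} (each block inequality had constant $1$), whereas here every block contributes an $O(n)$ factor from Lemma~\ref{lem:63}, and there are $O(n)$ blocks, so a naive accounting gives $n^{O(n)}$. The fix is to observe that the $O(n)$ factor in Lemma~\ref{lem:63} is incurred only at the two \emph{endpoints} of the comparison — it relates $\overline\rho_i(z)$ to $\rho_i(\sigma_{\max}^z)$ once — and that the telescoping identity at the level of the \emph{maximal-class weights} $\rho_i(\sigma_{\max}^z)$, which satisfy the exact exponential relation from~\eqref{eq:dampened-dist}, goes through with constant $1$ just as in Lemma~\ref{rapid}; only when passing back from $\rho$-weights of maximal classes to $\overline\rho$-weights of bits do I pay, and I pay a single $O(n)$ (or $O(n^2)$) factor per coordinate changed along the edge, i.e. $O(n^2)$ total per edge, not per block. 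Carrying this out carefully — i.e. doing the block telescoping entirely inside the $\rho_i(\sigma_{\max}^{\cdot})$ weights and only converting to $\overline\rho$ at the very end — is the one genuinely delicate point; everything else is a transcription of the three-case analysis of Lemma~\ref{rapid}, after which Theorem~\ref{thm:refined-compthm} and Lemma~\ref{lema:comparison-chain} give $Gap(\overline P) = \Omega(\mathrm{poly}(M,n)^{-1})$ and hence polynomial mixing by Theorem~\ref{thm:spec_gap_thm}.
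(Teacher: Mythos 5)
Your framework is the paper's framework---compare $\overline P$ with the heat-bath auxiliary chain $\widetilde P$ via canonical paths and Theorem \ref{thm:refined-compthm}, with Lemma \ref{lem:63} and Corollary \ref{cor:01} standing in for the exact exchange inequality of the exponential case---and you correctly identify the central difficulty: with the heating--cooling paths of Lemma \ref{rapid}, each block of the trace costs an $O(n^2)$ factor from Corollary \ref{cor:01}, and with $\Theta(M)$ blocks the naive bound is $n^{\Theta(M)}$. But your proposed repair does not close this gap. Writing $\overline\rho_\ell(b)=c_\ell(b)\,\rho_\ell(\sigma_{\max}^b)$ with $1\le c_\ell(b)\le O(n)$ (Lemma \ref{lem:63}), the telescoping at the level of the maximal-class weights is indeed exact by \eqref{eq:dampened-dist}; however, the quantity you must control is a ratio of products over all coordinates,
\[
\frac{\overline\rho(t^*)}{\overline\rho(z)}=\Bigl[\prod_\ell \frac{\rho_\ell(\sigma_{\max}^{t^*_\ell})}{\rho_\ell(\sigma_{\max}^{z_\ell})}\Bigr]\cdot\prod_\ell \frac{c_\ell(t^*_\ell)}{c_\ell(z_\ell)},
\]
and the correction product runs over \emph{every} coordinate at which $z$ differs from $t^*$. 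Along a heating path that is one coordinate per block boundary, i.e.\ possibly $\Theta(M)$ coordinates, each contributing a factor as large as $O(n)$, with no cancellation across temperatures (the factors $c_\ell(0)/c_\ell(1)$ at distinct $\ell$ do not telescope). So the conversion cost is not ``a single $O(n^2)$ per edge''; it is $n^{\Theta(M)}$ in the worst case---exactly the blow-up you set out to avoid. Your route could only be saved by proving the exact monotonicity $\overline\rho_i(0)\overline\rho_j(1)\le \overline\rho_i(1)\overline\rho_j(0)$ for $i>j$ with no polynomial loss, which neither you nor the paper establishes. (A minor additional slip: the chain covered by Lemma \ref{lema:comparison-chain} is the heat-bath single-coordinate resampling chain, not the ``transpose or flip the first bit'' Metropolis chain you name, though your paths are in any case indexed by single-bit flips.)

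The paper closes the gap by changing the canonical paths, not the accounting. To realize a flip of bit $j$ it does not drag that bit to the hottest level and back; instead it recruits the nearest $1$ below $j$ and swaps it upward through the intervening block of $0$'s, flipping the bit at level $0$ only when no $1$ is available below. With this routing every intermediate state $z^m$ differs from $t$ in at most four positions, alternating in sign from the highest differing index, so Corollary \ref{cor:01} is applied only $O(1)$ times, giving $\overline\rho(t)/\overline\rho(z^m)\le O(n^4)$, and hence $A=O(n^4M^5)$ and polynomial mixing. If you replace your heating--cooling paths by paths of this bounded-disagreement type, the rest of your argument (the auxiliary chain, $a=1$, and the polynomial bound on the number of paths through any edge) goes through.
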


\begin{proof} We appeal to the comparison theorem. Let $\widetilde{P}$
  be the Markov chain on $\{0,1\}^{M+1}$ whose transitions choose a
  random index $i$ and update $t_i$ to $t_i' \in \{0,1\}$ with probability
  proportional to $\overline{\rho}_i(t_i')$. It is clear that $\widetilde P$ mixes in polynomial time since whenever the temperature $i$ is chosen, the corresponding coordinate is at stationarity after the update. Since the temperatures are chosen uniformly at random among the $M+1$ possible temperatures, standard coupon-collector arguments imply that the mixing time of $\widetilde P$ is $O(M\log M)$.
  
   Let $t=(t_0,
  \ldots,t_i,\ldots,t_{M})$ and $t'=(t_0,\ldots,t'_i,\ldots,t_{M})$ be
  two states such that $\widetilde{P}(t,t')>0$. We define a path
  between them using transitions of $\overline{P}$. Assume that $t_i = 0$. In the other case, define the path to be the reverse.
Denote the path by $t = z^0, \ldots,z^\ell = t'$.
%Let $z^{m1}, \ldots,z^{34} \in
%[M+1]$ be 3 indices which will denote the positions in which $z^m$
%may differ from the initial state $t$, besides the index $i$.
From $z^m$, define $z^{m+1}$ as follows:

\begin{itemize}
\item Let $j$ be the largest index such that $z^m_j = 0$
and $t'_j = 1$.
\item If there is a largest index $k< j$, such that
  $z^m_k = 1$, then $z^{m+1}$ is obtained by swapping the
  bits in positions $k$ and $k+1$.
\item If there is no such $k$, $z^{m+1}$ is obtained by flipping the bit
  $z^m_0$ from 0 to 1.
\end{itemize}

The idea of the path is to flip the bit at the index $j$ from 0 to 1
by moving a 1 up from the first available position, performing a
series of swaps through a
block of 0's. At the end of the series of swaps,
the difference at $j$ has been removed, and the difference is now at
the starting point of the swaps.
As the bit 1 moves up, there can be at most 2 differences
due to it: one at its current position and the other at the position
where the series of swaps began. Hence there are at most 4 indices
where there could be a difference between $t$ and
$z^m$ (the other two being the index $j$ and possibly the index
$i$). Moreover, by the construction, the indices must be such that for
the highest, say $i_1$, $t_{i_1} = 0$ while $z^m_{i_1}  =1$ and the
difference then alternates. Thus, we have that for any configuration
$z^m$ along the path, by Corollary \ref{cor:01},

$$\frac{\overline{\rho}(t)}{\overline{\rho}(z^m)} \leq O(n^4).$$

Suppose that we fix a transtion $z,z'$. The number of pairs
$t,t'$ such that the path between them passes through $z,z'$ can be
bounded by $M^4$ since we must only specify the positions at which
$z$ differs from $t$ and possibly the index $i$  to be able to
reconstruct both $t$ and $t'$.
We can now bound the factor $A$ in the comparison theorem Theorem \ref{thm:refined-compthm} as follows

\begin{eqnarray*}
A & = & \max_{(z,z')\in E(\overline{P})}
\left\{
\frac{\displaystyle\sum_{\Gamma(z,z')} |\gamma_{t,t'}|\overline{\rho}(t)
\widetilde{P}(t,t')}{\overline{\rho}({z}) \overline{P}({z},{z'})}
\right\} \\
& \leq & \max_{(z,z')\in E(\overline{P})}
\left\{
\frac{ 2 \displaystyle\sum_{\Gamma(z,z')}
  |\gamma_{t,t'}|
  \overline{\rho}(t)}{\min(\overline{\rho}({z}),\overline{\rho}(z'))}
\right\} \\
& \leq & O(n^4M^5).
\end{eqnarray*}
The claim follows by applying Theorem \ref{thm:refined-compthm} and using the fact that $\widetilde P$ mixes in polynomial time. \qedhere

\end{proof}

Recall that $P_t$ is the restriction chain with a fixed trace $t \in \{0,1\}^{M+1}$ and let $\hat P_t$ denote the restriction chain where the swap moves are suppressed. Then, $\hat P_t$ consists of independent chains on each of the $M+1$ distributions. It was shown in \cite{MZ} that in the case of zero external field, each of the $\hat P_t$ are rapidly mixing by using Lemma \ref{lem:product}, the decomposition theorem and comparison of the restrictions with a simple exclusion process on the complete graph. The same analysis holds in our case, the main difference being the non-zero external field. Since the trace is fixed for each of the chains however, the only fact that must be checked is the analog of Lemma 14 of \cite{MZ} which says that the distributions at each temperature are unimodal on either side of $t_{\min}$. We omit the calculations as they are straightforward to check.
Thus, we have that $\hat P_t$ is polynomially mixing for each $t \in \{0,1\}^{M+1}$.

\begin{lemma}
For each trace $t \in \{0,1\}^{M+1}$, the restriction chain $P_t$ mixes in polynomial time.
\end{lemma}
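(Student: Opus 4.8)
The plan is to mirror the strategy already used for the exponential distribution in Lemma~\ref{lem:poly-mixing-fixed-trace}: compare the restriction chain $P_t$ with the swap-free chain $\hat P_t$, which we have just noted is polynomially mixing, via the comparison theorem Theorem~\ref{thm:refined-compthm}. First I would fix a trace $t\in\{0,1\}^{M+1}$ and recall that on $\Omega_{sw}^t$ the chains $P_t$ and $\hat P_t$ have the same stationary measure $\pi_t$ (the conditional of $\rho_{sw}$ given trace $t$), so the factor $a$ in the comparison theorem equals $1$. The only transitions present in $P_t$ but not in $\hat P_t$ are swap moves exchanging two adjacent coordinates $i,i+1$ that do not change the trace, which forces $t_i=t_{i+1}$; for each such move I will build a canonical path consisting of level moves of $\hat P_t$.

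The canonical path is the natural one: if $x_i$ and $x_{i+1}$ are on the same side of $t_{\min}$, say the number of $+1$'s in $x_i$ is $\ge$ the number in $x_{i+1}$, first run level moves at temperature $i+1$ bringing the $(i+1)$st coordinate's $+1$-count up to that of $x_i$ (this stays on the same side of $t_{\min}$, hence preserves the trace), then run level moves at temperature $i$ bringing the $i$th coordinate down to the old value of $x_{i+1}$; concatenating gives a path from $x$ to the swapped configuration $x'$. The heat-bath Glauber dynamics can realize any such monotone change in the $+1$-count one vertex at a time without crossing $t_{\min}$, so every intermediate state lies in $\Omega_{sw}^t$, and the path length is $O(n)$. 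The key estimate is then to bound, for every edge $(z,w)$ of $\hat P_t$ and every swap move $(x,x')\in\Gamma(z,w)$, the quantity $\pi_t(x)P_t(x,x')/(\pi_t(z)\hat P_t(z,w))$ by a polynomial; as in Lemma~\ref{lem:poly-mixing-fixed-trace}, because the path always holds the coordinate at the \emph{larger} $+1$-count fixed while moving the smaller one monotonically, the ratios of the $\rho_i$-weights telescope favorably and this quantity is $O(1)$ (the min in the swap acceptance probability exactly cancels the potentially unfavorable direction). Finally, the number of swap transitions $(x,x')$ routed through a given $\hat P_t$-edge is polynomial, since such a path is determined by the values $x_i,x_{i+1}$ and the coordinate $i$, of which there are only polynomially many. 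Combining these bounds gives $A=n^{O(1)}$, and Theorem~\ref{thm:refined-compthm} together with the polynomial mixing of $\hat P_t$ yields $\mathrm{Gap}(P_t)\ge n^{-O(1)}$.

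The main obstacle, and the only genuine difference from the exponential case, is verifying that the interpolating level moves never change the trace, i.e.\ that monotonically increasing (or decreasing) the $+1$-count of one coordinate from the value in $x_{i+1}$ (or $x_i$) to the value in $x_i$ (or $x_{i+1}$) keeps that coordinate on its original side of $t_{\min}$. This is immediate precisely because $x_i$ and $x_{i+1}$ lie on the same side of $t_{\min}$ (as $t_i=t_{i+1}$), so the entire interval of intermediate $+1$-counts lies on that side; the external field plays no role here since $t_{\min}$ is a single fixed threshold for all temperatures by the discussion preceding Lemma~\ref{lem:63}. The weight-ratio computation is then the same telescoping argument as before, using \eqref{eq:dampened-dist} to control $\rho_i(\sigma)/\rho_{i+1}(\sigma)$, so I would simply remark that the verification is \emph{mutatis mutandis} identical to that in Lemma~\ref{lem:poly-mixing-fixed-trace} rather than reproduce the case analysis. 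With $\mathrm{Gap}(P_t)=n^{-O(1)}$ for every $t$ and the projection chain already shown to be polynomially mixing, the decomposition theorem Theorem~\ref{dcmp} and Theorem~\ref{thm:spec_gap_thm} complete the proof of Theorem~\ref{flatthm}.
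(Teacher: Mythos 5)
Your proposal works as if the per-temperature state of the Flat-swap chain were just the $+1$-count, as in the exponential example, but for the Ising model each coordinate of $\Omega_{sw}=\Omega^{M+1}$ is a full spin configuration in $\{+1,-1\}^V$, and this is where the argument breaks. A swap move exchanges the configurations $x_i$ and $x_{i+1}$ themselves, so the canonical path you describe --- level moves that merely bring the $+1$-count of coordinate $i+1$ up to that of $x_i$ and then lower coordinate $i$ to the old count of $x_{i+1}$ --- does not end at $x'$: matching magnetizations is not the same as exchanging configurations, and transforming $x_{i+1}$ into the exact configuration $x_i$ (and vice versa) requires flipping the symmetric difference of the two plus-sets, with intermediate states whose congestion you have not controlled. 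Relatedly, your congestion bound is false at this level: ``the path is determined by $x_i$, $x_{i+1}$ and $i$, of which there are only polynomially many'' is true when the states are integers in $\{-N,\ldots,N'\}$, but there are exponentially many spin configurations with a given count, so the number of swap transitions routed through a fixed $\hat P_t$-edge is not polynomial by this reasoning, and you supply no encoding/injection argument to replace it.

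The paper's proof is organized precisely to avoid this. It passes to the two-step chain $Q_t=P_t^2$ (needed because a single level move always changes the count or leaves the configuration unchanged, so count-fixed restrictions of the one-step chain are trivially disconnected), decomposes $Q_t$ by fixing the total-spin class $\Omega_{\sigma^i}$ at every temperature, shows these restriction chains mix by the interchange-process comparison of \cite[Section 7]{MZ}, and only then compares the projection $\overline Q_t$ --- which genuinely does live on count vectors, where your style of counting and path-building is legitimate --- with $\overline {\widehat Q_t}$ by canonical paths. Even at that projected level your ``mutatis mutandis identical'' telescoping is too quick: the flat-swap count distribution on each side of $t_{\min}$ is unimodal with an interior mode $\sigma^1_{\max}$ rather than monotone, and the paper needs an extra case in the path construction (when $\sigma_i$ and $\sigma_{i+1}$ straddle $\sigma^1_{\max}$, the path first rises to the mode at both temperatures and then descends) to keep the weights along the path under control. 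So your overall strategy (comparison with the swap-free chain) is the right starting point, but the reduction to the count projection and the treatment of the within-class configuration moves are exactly the ``additional steps'' the lemma requires, and they are missing from your argument.
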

\begin{proof}
As in Lemma \ref{lem:poly-mixing-fixed-trace}, the strategy is to use the comparison theorem and compare to the chain $\hat P_t$. However, the state space at each temperature is no longer an interval and thus there are some additional steps. We will show that the 2-step chain $Q_t = P_t^2$ is polynomially mixing. This implies polynomial mixing for $P_t$. To show that $Q_t$ mixes polynomially, we use decomposition. Let $Q_{t,\sigma^0,\cdots,\sigma^M}$ denote the restriction of the chain where at each temperature, not only is the trace fixed to $t$, but the spin configuration at temperature $i$ is in the set $\Omega_{\sigma^i}$. The projection chain $\overline Q_t$ thus moves on the sets $\Omega_{\sigma^i}$ at each temperature $i$.

Note that the restriction chains $Q_{t,\sigma^0,\cdots,\sigma^M}$ are exactly the same as the restriction chains for $(\hat P_t)^2$ when the restrictions fix the spin configuration at each temperature. The rapid mixing of this chain follows by the arguments in \cite[Section 7]{MZ}.

Thus, we are reduced to showing that $\overline Q_t$ is polynomially mixing and we do this by comparing to $\overline {\widehat Q_t}$, the projection on total spins of the two step chain when swap moves are suppressed. This can be done along the lines of the comparison proof in Lemma \ref{lem:poly-mixing-fixed-trace}. 

Let $(x,x')$ be a transition of $\overline Q_t$ with $x = (\sigma_0,\ldots,\sigma_M)$ and $x' = (\sigma'_0,\ldots,\sigma'_M)$. If $(x,x')$ is a level move which updates the state at a fixed temperature, let the corresponding canonical path in $\overline {\widehat Q_t}$ be the edge $(x,x')$ itself. 
On the other hand, suppose $(x,x')$ is a temperature move so that for some $1 \le i < M$, $\sigma_j = \sigma_j'$ for all $j \not\in \{i,i+1\}$, $\sigma_i = \sigma'_{i+1}$ and $\sigma_{i+1} = \sigma'_i$. In this case, we divide the construction of the path into several cases based on the values of $\sigma_i$ and $\sigma_{i+1}$. Note that since the trace remains fixed, it must be the case that $t_i=t_{i+1}$. Without loss of generality, we assume that $t_1 = 1$, since the calculation in the other case is exactly the same.
\begin{enumerate}[1)]
\item In the first case, the states $\sigma_i,\sigma_{i+1} \ge \sigma^1_{\max}$ or $\sigma_i,\sigma_{i+1} < \sigma^1_{\max}$, that is, they are on the same side of the state $\sigma^1_{\max}$. In these cases, the construction of the canonical path is analogous to the construction in Lemma \ref{lem:poly-mixing-fixed-trace}.

\item The the second case, $\sigma_i< \sigma^1_{\max}$ and $\sigma_{i+1} \ge \sigma^1_{\max}$ or $\sigma_i \ge  \sigma^1_{\max}$ and $\sigma_{i+1} < \sigma^1_{\max}$. The path consists of the concatenation of four paths. First, we move with level moves at the temperature $i$ from $\sigma_i$ to $\sigma^1_{\max}$. Next, we move at temperature $i+1$ with level moves from $\sigma_{i+1}$ to $\sigma^1_{\max}$. So far, the stationary weight of states along the path are non-decreasing. The next part consists of two non-increasing paths. First, we move at temperature $i$ from $\sigma^1_{\max} $ to $\sigma_{i+1}$. last, we move at the temperature $i+1$ from $\sigma^1_{\max} $ to $\sigma_{i}$.
\end{enumerate}

It can be verified that since the stationary measure along the paths is unimodal, the length of any path is polynomial and there are at most polynomially many paths using any transition, the comparison constant $A$ can be bounded above by a polynomial.

\end{proof}

Since the projection chain $\overline P$ and each of the
restrictions $P_t$ mixes in time that is polynomial in $n$ and $M$, Theorem \ref{flatthm} follows.

%{\bf \textcolor{red}{Need to remove and state as open problem}}
%\vskip.1in
%\noindent \underbar{\bf Example III:  Mean-Field 3-state Ferromagnetic
%  Potts Model:} This is the special case with $q = 3$, $\beta>0$,
%  $k=2$, $A_1  = J = 0$ and $A_2 = 2$.

\section{Torpid mixing of Simulated Tempering}
\label{torpid}
In this section we show that for the mean-field 3-state ferromagnetic Potts model, there is a critical temperature so that for any  distribution parametrized by temperature, the mixing time of the tempering and swapping algorithms
is exponential.\\

\noindent{\bf Theorem \ref{thm:slow-potts}}
{\em Let $\beta_c = \frac{4\log2}{n}$. There is a constant $c_1 > 0$ such that for any set of inverse temperatures $\beta_c = \beta_M \ge \cdots \ge \beta_0 \ge 0$ such that $M = n^{O(1)}$, the
tempering and swapping chains with the distributions $\pi_{\beta_i}$ for the $3$-state mean-field ferromagnetic Potts model have mixing time
$\tau(\varepsilon) \geq
e^{c_1n}\ln(1/\varepsilon)$.}  \\

We prove the lower bound on the mixing time of the
tempering chain by bounding the conductance. The slow mixing on the swapping algorithm then follows by Zheng's result \cite{Z} showing that polynomial mixing of the swapping chain implies polynomial mixing of the simulated tempering chain with the same distributions.
The \textit{conductance} is an isoperimteric quantity
related to the spectral gap through Cheeger's inequality, a version of which was shown independently by Jerrum and Sinclair \cite{JS-conductance} and Lawler and Sokal \cite{LawSok88}. It often gives an easier method for bounding the mixing
time than directly bounding the spectral gap.
For $S \subset \Omega$, let $$\Phi_S = \frac{F_S}{C_S}
= \frac{\displaystyle\sum_{x\in S, y\notin S}
\pi(x)P(x,y)}{\pi(S)}.$$ Then, the conductance
is given by $$\Phi=\min_{S: \pi(S)\leq 1/2} \Phi_S$$ and it
bounds the mixing time both from above and below. Cheeger's inequality implies the following bounds on the mixing time. Let $\pi_{\min} = \displaystyle\min_{x \in \Omega} \pi(x)$.
\begin{theorem}\label{thm:mixing-conductance}
For any reversible Markov chain with conductance $\Phi$
$$ \frac{1-2\Phi}{2\Phi} \log\left( \frac{1}{2\varepsilon}\right) \leq \tau(\varepsilon) \leq
\frac{1}{\Phi^2}\left(\log\left( \frac{1}{2\varepsilon}\right) + \frac 12 \log \left( \frac{1-\pi_{\min}}{\pi_{\min}} \right)\right).$$
\end{theorem}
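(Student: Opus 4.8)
The plan is to derive both inequalities from Cheeger's inequality — which bounds the conductance $\Phi$ against the spectral gap $Gap(P)$ in both directions — together with the spectral-gap estimates of Theorem~\ref{thm:spec_gap_thm}.

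\emph{Lower bound.} I would first prove the easy half of Cheeger, $Gap(P)\le 2\Phi$. Let $S$ with $\pi(S)\le 1/2$ realize the conductance, so $\Phi=F_S/\pi(S)$ with $F_S=\sum_{x\in S,\,y\notin S}\pi(x)P(x,y)$. Plugging the test function $\mathbbm{1}_S$ into the Dirichlet-form variational characterization of the second-largest eigenvalue $\beta_1$, with $\mathcal E(f,f)=\tfrac12\sum_{x,y}\pi(x)P(x,y)(f(x)-f(y))^2$, gives
\[
1-\beta_1\ \le\ \frac{\mathcal E(\mathbbm{1}_S,\mathbbm{1}_S)}{\Var_\pi(\mathbbm{1}_S)}\ =\ \frac{F_S}{\pi(S)(1-\pi(S))}\ \le\ \frac{2F_S}{\pi(S)}\ =\ 2\Phi .
\]
Since $Gap(P)=1-|\lambda_1|\le 1-\beta_1$, this yields $Gap(P)\le 2\Phi$, and substituting into the lower bound of Theorem~\ref{thm:spec_gap_thm} gives $\tau(\varepsilon)\ge(\tfrac1{Gap(P)}-1)\log\tfrac1{2\varepsilon}\ge\tfrac{1-2\Phi}{2\Phi}\log\tfrac1{2\varepsilon}$, the claim being vacuous when $\Phi\ge 1/2$.

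\emph{Upper bound.} Here I would invoke the hard half of Cheeger, $Gap(P)\ge\Phi^2/2$, which is exactly the Jerrum--Sinclair \cite{JS-conductance} and Lawler--Sokal \cite{LawSok88} estimate (working with lazy chains, as is standard for this formulation and as can be arranged by passing to $(I+P)/2$ at a constant-factor cost, so that $Gap(P)=1-\beta_1$). One can then simply substitute $Gap(P)\ge\Phi^2/2$ and $\pi^*=\pi_{\min}$ into the upper bound of Theorem~\ref{thm:spec_gap_thm}; to obtain the sharper form stated — with the $\tfrac12\log$ — I would instead run the direct $\ell^2$ argument: writing $(g_i)$ for an orthonormal eigenbasis in $L^2(\pi)$ and using $\sum_{i\ge1}g_i(x)^2=\tfrac1{\pi(x)}-1$, the spectral decomposition of $P^t(x,\cdot)/\pi$ gives, for every $x$,
\[
\Bigl(\sum_y|P^t(x,y)-\pi(y)|\Bigr)^2\ \le\ \Bigl\|\tfrac{P^t(x,\cdot)}{\pi}-1\Bigr\|_{L^2(\pi)}^2\ =\ \sum_{i\ge1}\lambda_i^{2t}g_i(x)^2\ \le\ \frac{1-\pi(x)}{\pi(x)}\bigl(1-Gap(P)\bigr)^{2t},
\]
hence $2\|P^t,\pi\|_{tv}\le\sqrt{\tfrac{1-\pi_{\min}}{\pi_{\min}}}\,(1-Gap(P))^t$; this is below $2\varepsilon$ once $t\log\tfrac1{1-Gap(P)}\ge\log\tfrac1{2\varepsilon}+\tfrac12\log\tfrac{1-\pi_{\min}}{\pi_{\min}}$, and $\log\tfrac1{1-Gap(P)}\ge Gap(P)\ge\Phi^2/2$ then yields the stated bound (with the constant in the $1/\Phi^2$ factor coming from the Cheeger step).

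\emph{Main obstacle.} The two Rayleigh-quotient computations and the spectral expansion are routine; the one substantive ingredient is the hard direction of Cheeger, i.e.\ extracting from a chain with small spectral gap a set $S$ with small $F_S/\pi(S)$. I would carry this out in the standard way: take an eigenfunction $g$ for $\beta_1$, pass to its positive part $g_+$, arranged (by first subtracting a suitable constant) to be supported on a set of stationary mass at most $1/2$, check that $\mathcal E(g_+,g_+)/\|g_+\|_{L^2(\pi)}^2\le 1-\beta_1$, and then show that some super-level set $S=\{g_+^2\ge u\}$ has $F_S/\pi(S)\le\sqrt{2\,(1-\beta_1)}$ by bounding $\sum_{x,y}\pi(x)P(x,y)\,|g_+(x)^2-g_+(y)^2|$ via Cauchy--Schwarz; averaging over the threshold $u$ produces the desired cut, giving $\Phi\le\sqrt{2\,Gap(P)}$ and hence $Gap(P)\ge\Phi^2/2$.
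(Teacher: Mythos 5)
The paper does not actually prove this theorem: it is quoted as a known consequence of Cheeger's inequality, with the two directions attributed to Jerrum--Sinclair \cite{JS-conductance} and Lawler--Sokal \cite{LawSok88}, so the relevant comparison is with that standard argument --- and your reconstruction is exactly that argument, carried out correctly in its essentials. The lower bound is clean: the Rayleigh-quotient computation with the test function $\mathds{1}_S$ gives $1-\beta_1\le 2\Phi$, and since $Gap(P)=1-|\lambda_1|\le 1-\beta_1$ this combines with Theorem~\ref{thm:spec_gap_thm} to give $\tau(\varepsilon)\ge \frac{1-2\Phi}{2\Phi}\log\frac{1}{2\varepsilon}$ with no further hypotheses. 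For the upper bound your route (hard direction of Cheeger via level sets of the eigenfunction, plus the direct $\ell^2$ estimate $2\|P^t,\pi\|_{tv}\le \sqrt{(1-\pi_{\min})/\pi_{\min}}\,(1-Gap(P))^t$) is the standard one and is sound, but note two points about constants. First, $Gap(P)\ge \Phi^2/2$ substituted into your $\ell^2$ bound yields $\tau(\varepsilon)\le \frac{2}{\Phi^2}\bigl(\log\frac{1}{2\varepsilon}+\frac12\log\frac{1-\pi_{\min}}{\pi_{\min}}\bigr)$, a factor $2$ weaker than the displayed $\frac{1}{\Phi^2}$; your parenthetical that the constant ``comes from the Cheeger step'' papers over this. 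Second, your laziness caveat is not cosmetic but necessary: for a general reversible chain with an eigenvalue near $-1$ (e.g.\ a nearly periodic two-state chain) the displayed upper bound is simply false, since the conductance controls $1-\beta_1$ but not $1+\lambda_{\min}$; passing to $(I+P)/2$ fixes this but halves $\Phi$ and rescales time, degrading the constant further. So you prove the statement in the form it is actually used in the paper --- up to absolute constants, which is all that matters for the exponential-scale bounds in Theorems~\ref{thm:slow-potts} and~\ref{thm:slowermixing} --- but not with the literal constants displayed, which indeed cannot be obtained for ``any reversible Markov chain'' without a positivity or laziness assumption that the paper leaves implicit.
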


The state space of the tempering chain
is $\Omega \times [M+1]$ where $\Omega$ consists of spin configurations on the complete graph with three types of spins. To show torpid mixing, it is enough to
exhibit a cut in the state space whose conductance is small.
For convenience, let us call the 3 spins red, blue and green. The cut we construct depends only on the {\em number} of red, blue and
green vertices in the configuration. Hence, for the purpose of
defining the cut, it is convenient to divide the state space of
configurations $\Omega$ into equivalence classes of colorings according to the
number of vertices of each color. Furthermore, the cut we define will
induce the same cut on $\Omega$ at each temperature.

It is convenient for the exposition to make the
following reparametrization using the fact that
for the mean-field Potts, the underlying graph is complete.
Let $\overline H(x)=\sigma_1^2+\sigma_2^2+\sigma_3^2$, let $\overline \beta =
\beta/2$ and let $\overline{Z}(\beta)$ denote the corresponding partition function. It can be verified that the Gibbs distribution at inverse temperature $\beta$ can be written as
\[
\pi_\beta(x) = \frac{e^{\overline \beta \overline H(x)}}{\overline
  Z(\beta)}.
\]
To define the cut, we partition $\Omega$ into sets
$\Omega_{\sigma}$, where
$\sigma=(\sigma_1, \sigma_2, \sigma_3)$ is partition of $n$ and $\Omega_{\sigma}$ contains all colorings with $\sigma_1,
\sigma_2$ and $\sigma_3$ vertices colored red, green and blue,
respectively. It is helpful to think of the $\sigma$ as points on a simplex.
The set $\Omega_\sigma$ corresponds to
${n \choose \sigma_1,\sigma_2,\sigma_3}$ different configurations in
$\Omega$ and hence we write
\begin{eqnarray}
\label{eq:gibbs}
\pi_{\beta_i}(\Omega_\sigma) =  {n \choose
\sigma_1,\sigma_2,\sigma_3}
\frac{e^{\overline \beta_i(\sigma_1^2+\sigma_2^2+\sigma_3^2)} }{\overline Z(\beta_i)} \
\end{eqnarray}

The idea for defining the cut with small conductance comes from the
following properties of the stationary distribution conditioned on the sets
$\Omega_{\sigma}$.
There is a critical temperature
$\beta_c$ where the Gibbs distribution exhibits the coexistence of
two modes. There is a ``disordered''
mode in the distribution at
$\left(\frac{n}{3},\frac{n}{3},\frac{n}{3}\right)$; this mode
is present because though these configurations have small energy,
the number of configurations (given by the multinomial term in
Equation (\ref{eq:gibbs})) is large.
At $\beta_c$, there are also ``ordered'' modes at
$\left(\frac{2n}{3},\frac{n}{6},\frac{n}{6}\right)$,
$\left(\frac{n}{6},\frac{2n}{3},\frac{n}{6}\right)$,
$\left(\frac{n}{6},\frac{n}{6},\frac{2n}{3}\right)$. These modes are
present because configurations with a
predominant number vertices having the same color (red, or green or
blue) are favored in the
Gibbs distribution, though there are not as many of these
configurations. The ordered and disordered modes are separated by a
region whose
density is exponentially smaller than both
the modes, where neither the multinomial nor the energy term
dominates. As the inverse temperature is decreased below $\beta_c$, the size
of the disordered mode grows while the sizes of the ordered
modes decrease. However,
the region of exponentially small density remains small at every
temperature. The cut in the
state space of the simulated tempering chain at $\beta_c$ is to take a
region surrounding the ordered mode at each temperature. The
conductance of this cut, up to a polynomial (in $M$) is bounded by the
conductance at the critical temperature where the modes coexist. This
is because in the stationary distribution, the chance of being at each
temperature is equally likely. In contrast, for the Ising model, there
is no temperature at which the ordered and disordered modes coexist.
We first present a straightforward upper bound on the conductance of
the tempering chain at $\beta_c$.
\begin{theorem}\label{thm:small-conductance}
Let $\beta_c = \frac{4\log 2}{n}$. There is a constant $c_4 > 0$ such that
the conductance $\Phi$ of the simulated tempering chain with distributions $\{ \pi_{\beta_i} \}$ for any $\beta_c = \beta_M \ge \cdots \ge \beta_0 \ge 0$ for
  the 3-state mean-field ferromagnetic Potts model is at
  most  $e^{-c_4n}$.
\end{theorem}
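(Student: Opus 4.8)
The plan is to exhibit a single low-conductance cut $S$ of the tempering state space $\Omega\times\{0,\dots,M\}$ that is defined only through the colour proportions of a configuration, so that the same cut works at every temperature. Write $a=\sigma/n$ for a point of the probability simplex $\Delta$, and for $\gamma\ge 0$ put
\[
g_\gamma(a)\ :=\ -\sum_{j=1}^3 a_j\log a_j\ +\ \gamma\sum_{j=1}^3 a_j^2 .
\]
By Stirling's formula, $\pi_{\beta_i}(\Omega_\sigma)=\binom{n}{\sigma_1,\sigma_2,\sigma_3}e^{\overline\beta_i(\sigma_1^2+\sigma_2^2+\sigma_3^2)}/\overline Z(\beta_i)$ equals $e^{\,n g_{\gamma_i}(a)-\log\overline Z(\beta_i)\pm O(\log n)}$, with $\gamma_i:=\overline\beta_i n=\beta_i n/2\in[0,2\log 2]$ and $\gamma_M=2\log 2$. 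First I would work out the critical points of $g_{\beta_c}=g_{2\log 2}$ on $\Delta$: the local maxima are the disordered point $(\tfrac13,\tfrac13,\tfrac13)$ and the three ordered points $(\tfrac23,\tfrac16,\tfrac16)$ and permutations, all at the common value $L:=\log 3+\tfrac23\log 2=\max_\Delta g_{\beta_c}$; the only other interior critical points are the three saddles $(\tfrac12,\tfrac14,\tfrac14)$ and permutations, at value $\tfrac94\log 2<L$; and $g_{\beta_c}<L$ on $\partial\Delta$. Fixing $\delta:=\tfrac12\big(L-\tfrac94\log 2\big)>0$, I let $\mathcal A$ be the connected component of $\{a\in\Delta:g_{\beta_c}(a)>L-\delta\}$ containing the red mode $(\tfrac23,\tfrac16,\tfrac16)$; since $L-\delta$ exceeds the saddle value and all boundary values, $\mathcal A$ is an interior neighbourhood of the red mode lying at a positive distance from the other three modes, with $\partial\mathcal A\subseteq\{g_{\beta_c}=L-\delta\}$. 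The cut is then $S:=\{(x,i):\sigma(x)/n\in\overline{\mathcal A}\}$.

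To check $\pi_{st}(S)\le 1/2$: the images $\mathcal B,\mathcal C$ of $\mathcal A$ under the colour permutations are the analogous components for the green and blue modes, and $\overline{\mathcal A},\overline{\mathcal B},\overline{\mathcal C}$ are pairwise disjoint; since each $\pi_{\beta_i}$ depends on $\sigma$ only through $\sigma_1^2+\sigma_2^2+\sigma_3^2$ it is colour-symmetric, so $\pi_{\beta_i}(\overline{\mathcal A})=\pi_{\beta_i}(\overline{\mathcal B})=\pi_{\beta_i}(\overline{\mathcal C})\le 1/3$, hence $\pi_{st}(S)\le 1/3$.

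For the escaping flow $F_S$ the key observations are: (i) a temperature move fixes $\sigma(x)$ and so cannot leave $S$, so only heat-bath level moves contribute, and those change $\sigma$ by $O(1)$, so an escaping configuration has $a=\sigma(x)/n$ within $O(1/n)$ of $\partial\mathcal A$; (ii) for such $\sigma$ and any $i$, bounding $\binom{n}{\sigma_1,\sigma_2,\sigma_3}\le e^{nH(a)}$ and $\overline Z(\beta_i)\ge\binom{n}{n/3,n/3,n/3}e^{\overline\beta_i\cdot 3(n/3)^2}=e^{\,n g_{\gamma_i}(\frac13,\frac13,\frac13)-O(\log n)}$ gives $\pi_{\beta_i}(\Omega_\sigma)\le e^{\,n(g_{\gamma_i}(a)-g_{\gamma_i}(\frac13,\frac13,\frac13))+O(\log n)}$; and (iii) — the crux of uniformity over schedules — the map $\gamma\mapsto g_\gamma(a)-g_\gamma(\tfrac13,\tfrac13,\tfrac13)=(H(a)-\log 3)+\gamma(\sum_j a_j^2-\tfrac13)$ is affine with nonnegative slope, so its supremum over $\gamma\in[0,2\log 2]$ is attained at $\gamma=2\log 2$ and equals $g_{\beta_c}(a)-L$, which is at most $-\delta/2$ for $a$ within $O(1/n)$ of $\partial\mathcal A$ once $n$ is large (Lipschitz continuity of $g_{\beta_c}$ and $g_{\beta_c}\le L-\delta$ on $\partial\mathcal A$). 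Hence $\pi_{\beta_i}(\Omega_\sigma)\le e^{-\delta n/2+O(\log n)}$ for each of the $O(n)$ classes $\Omega_\sigma$ meeting the boundary layer, and summing over these classes, over the $M+1$ temperatures, and over escape targets (using that the level-move probability out of any state is at most $1/2$) gives $F_S\le e^{-\delta n/2+O(\log n)}$. For the denominator, $C_S=\pi_{st}(S)\ge\tfrac1{M+1}\pi_{\beta_M}(\overline{\mathcal A})\ge\tfrac1{M+1}\pi_{\beta_c}(\Omega_{\sigma^{\mathrm{red}}})$ for an integer point $\sigma^{\mathrm{red}}$ nearest $(\tfrac{2n}3,\tfrac n6,\tfrac n6)$; since $g_{\beta_c}(\sigma^{\mathrm{red}}/n)=L=\max_\Delta g_{\beta_c}$ and $\overline Z(\beta_c)\le\mathrm{poly}(n)\,e^{nL}$, this is at least $n^{-O(1)}$ because $M=n^{O(1)}$. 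Combining, $\Phi\le\Phi_S=F_S/C_S\le n^{O(1)}e^{-\delta n/2+O(\log n)}\le e^{-c_4 n}$ for a suitable $c_4>0$ and all large $n$.

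I expect the main obstacle to be the first step: determining the critical structure of $g_{\beta_c}$ on $\Delta$ and, in particular, verifying both that the disordered and the ordered modes are simultaneously global maxima at the \emph{same} value $L$ (which is exactly what singles out $\beta_c=\tfrac{4\log 2}{n}$ as the right threshold) and that the intervening saddles lie strictly below $L$ by an explicit positive constant. This is a finite calculus computation, but everything downstream — the reduction to level moves only, the affine-in-$\gamma$ monotonicity that makes the estimate uniform over all admissible schedules $\beta_0,\dots,\beta_M$, and the counting of boundary classes — is routine once it is in place.
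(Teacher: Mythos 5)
Your proposal is correct, and it takes a genuinely different route from the paper's. The paper cuts around the \emph{disordered} region, $A=\{\sigma_1,\sigma_2,\sigma_3\le n/2\}$, at every temperature: the escaping flow is reduced to the single saddle class $\Omega_{n/2}$ via a unimodality lemma along the line $\sigma_1=n/2$ (Lemma \ref{diff}), the ratio to the disordered class is pushed to $\beta_c$ by monotonicity in $\beta$ (Lemma \ref{8}), and --- because that cut can carry more than half the stationary mass across the temperature ladder --- an extra measure-transfer step (Lemma \ref{lem:within-poly}, resting on phase coexistence, Lemma \ref{lem:exp}(i),(iii)) is needed to pass from $\Phi_S$ to $\Phi_{S^c}$. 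You instead cut around a \emph{single ordered} mode via a super-level set of $g_{\beta_c}$, which by three-fold colour symmetry has mass at most roughly $1/3$ at every temperature, so the transfer step disappears; your affine-in-$\gamma$ monotonicity is exactly Lemma \ref{8} in disguise, and the coincidence of the ordered and disordered values at $L=\log 3+\tfrac23\log 2$ plays the role of Lemma \ref{lem:exp}(i)/(iii), giving $C_S\ge n^{-O(1)}/(M+1)$ --- note that, like the paper's Lemma \ref{lem:within-poly}, this silently uses $M=n^{O(1)}$ from the parent theorem. The price of your route is the critical-point analysis you flag, and it does check out: the four modes share the value $L$, the only other interior critical points are the saddles at $\tfrac94\log 2=L-(\log 3-\tfrac{19}{12}\log 2)$ with $\log 3-\tfrac{19}{12}\log 2\approx 1.1\times 10^{-3}>0$ (this tiny constant matches the exponent in the paper's Lemma \ref{lem:exp}(ii), where the printed $2^{13}$ should read $2^{19}$), and $g_{\beta_c}$ on $\partial\Delta$ stays below $L-\delta$. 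Two small points to tighten: justify in a line that the super-level-set component of the red mode excludes the other modes (components of $\{g_{\beta_c}>c\}$ can merge only at a critical value or at $\partial\Delta$, both below $L-\delta$), and note that even if the closures $\overline{\mathcal A},\overline{\mathcal B},\overline{\mathcal C}$ share level-set points, the overlap has exponentially small mass, so $\pi_{st}(S)\le 1/2$ still holds; neither affects the conclusion.
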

The lower bound on
mixing time by the inverse of conductance in Theorem \ref{thm:mixing-conductance} implies Theorem
\ref{thm:slow-potts}. In the next section, we will refine this bound in
order to compare it to the upper bound on the mixing time of the
Metropolis chain at a fixed temperature to show Theorem
\ref{thm:temp-vs-metropolis}.

Let $A \subset \Omega$ be the set of configurations such that
$\sigma_1,\sigma_2,\sigma_3 \leq n/2$. Let $P_{st}$ denote the transition matrix of the simulated tempering chain. Let $S =\{(x,i) \
| \ x \in A, \ \beta_0 \le \beta_i \le \beta_c \}$.
Let $$B = \{x \in A \ |
\ \exists \ x' \in \Omega \setminus A, \ P_{st}((x,i),(x',i)) > 0 \ \forall \ 0 \le i \le M\}$$ be the boundary of
$A$ (the set of configurations with at least one of
$\sigma_1,\sigma_2$ or $\sigma_3$ equal to $n/2$).
Our aim is to show that the conductance $\Phi_S$ of the set $S$ is
exponentially small. Note that it is not true that $\pi(S) \leq 1/2$ and hence
this does not immediately imply a bound on $\Phi$. Instead, we will
show that the coexistence of the ordered and disordered phases implies
that $\Phi \leq n^{O(1)} \Phi_S$. We start by bounding $\Phi_S$.

\begin{align}
\nonumber \Phi_S  = \frac{F_S}{C_S}     & = 
 \frac{\displaystyle\sum_{i \in I}\displaystyle\sum_{x \in B} 
\pi_{\beta_i}(x)
\displaystyle\sum_{x' \in
\overline{A}}  P_{st}((x,i),
(x',i))}{\displaystyle\sum_{i \in I}\displaystyle\sum_{x \in A}
\pi_{\beta_i}(x)} \\
& \leq   \frac{\displaystyle\sum_{i \in
 I}\displaystyle\sum_{x \in B} 
\pi_{\beta_i}(x)}{\displaystyle\sum_{i \in I}\displaystyle\sum_{x \in A}
\pi_{\beta_i}(x)} \label{eq:cond1}
\end{align}

The last expression above is the ratio of the sum over temperatures of the
stationary probabilities of configurations in the set $B$ (the
boundary of the set $A$) to the sum over temperatures of the
stationary probabilities of the configurations in the set $A$. In
order to bound this quantity, we will need several technical lemmas
which we state in the course of the proof but prove later to maintain
the flow of the
argument. The proofs of these
lemmas are gathered in Section \ref{sec:proofs-of-lemmas}.

For $0 \leq \alpha \leq 1$ let $\Omega_{\alpha n}$ denote the set of
configurations
$\Omega_\sigma$ where $\sigma_1 = \alpha n$ and $\sigma_2 = \sigma_3 =
(1-\alpha)n/2$.
In the next step, we show that by losing only a polynomial factor, the
numerator of (\ref{eq:cond1}) can be bounded by the sums of the
probabilities of the
configurations $\Omega_{n/2}$ (the set of configurations on the
boundary $B$ with equal numbers of green and blue vertices), while the
denominator is certainly is
as large as the weight of the configurations in $\Omega_{n/3}$ (the
set of configurations with equal numbers of red, blue and green vertices).
In particular, we want to show that for some constant $C$,%(was c_5)

\begin{eqnarray}
\label{eq:using-unimodality}
\frac{\displaystyle\sum_{i \in I}\displaystyle\sum_{x \in B}\
\pi_{\beta_i}(x)}{\displaystyle\sum_{i \in I}\displaystyle\sum_{x \in A}\
\pi_{\beta_i}(x)}    & \leq &
C n  \frac{\displaystyle\sum_{i \in I}\
\pi_{\beta_i}(\Omega_{n/2})}{\displaystyle\sum_{i \in
 I}\
\pi_{\beta_i}(\Omega_{n/3})}
\end{eqnarray}

We use the following lemma, which says that in the simplex, along the line where the
number of red vertices is $n/2$, the distribution at every temperature
has a unique maximum at the configurations where the number of
green vertices is equal to the number of blue vertices.
We define the following function $\overline{\pi}_{i} (x)$ which interpolates the discrete density $\pi_{\beta_i}$ continuously:
$$
\overline{\pi}_{i} (x) =    \frac{\Gamma(n)}{\Gamma(\frac n2)\Gamma(xn)\Gamma\left( \left(\frac 12 - x\right)n\right)} 
\frac{e^{\overline \beta_i n^2 \left( \left(\frac{1}{2}\right)^2+x^2+ \left(
\frac{1}{2}-x \right)^2  \right) }}{\overline Z(\beta_i)}, \\ \ \ \ \ \ \ \  \ \ x \in (0,1/2).
$$
%
%$\overline{\pi}_{i}(x) = \pi_{\beta_i} \left( \frac{n}{2},xn,\frac{n}{2}-xn
%\right)$ be the continuous function where $x$ is real and in $(0,1/2)$.
\begin{lemma}
\label{diff}
The function $\overline{\pi}_{i}(x)$ has a unique maximum such that $xn$ is integer in the range
$0<x<\frac{1}{2}$ and attains its maximum
at $x=\frac{1}{4}$ for all $i$ such that $\beta_i \le \beta_c$.
\end{lemma}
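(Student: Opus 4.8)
The plan is to show that $L_i(x):=\log\overline{\pi}_i(x)$ is \emph{strictly concave} on the open interval $(0,1/2)$ and has $x=1/4$ as a critical point. These two facts together force $\overline{\pi}_i$ to be strictly unimodal with its unique maximum at $x=1/4$, and strict concavity also immediately pins down the maximizing point among those with $xn\in\mathbb{Z}$ (it is the integer point nearest $1/4$, i.e.\ $x=1/4$ when $4\mid n$, which we may assume).

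First I would differentiate. Writing $\psi=(\log\Gamma)'$ for the digamma function, and noting $\frac{d}{dx}\big[\overline\beta_i n^2\big(\tfrac14+x^2+(\tfrac12-x)^2\big)\big]=\overline\beta_i n^2(4x-1)$, one gets
\[
L_i'(x)=-n\,\psi(xn)+n\,\psi\!\big((\tfrac12-x)n\big)+\overline\beta_i n^2(4x-1),
\]
\[
L_i''(x)=-n^2\big[\psi'(xn)+\psi'\!\big((\tfrac12-x)n\big)\big]+4\overline\beta_i n^2 .
\]
In particular $L_i'(1/4)=0$ for every $i$, so $x=1/4$ is always a critical point (it is also forced on us by the symmetry $\overline{\pi}_i(x)=\overline{\pi}_i(\tfrac12-x)$). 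It remains to check $L_i''<0$ throughout $(0,1/2)$ when $\beta_i\le\beta_c$.

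For the second-derivative bound I would use the elementary trigamma estimate $\psi'(y)=\sum_{k\ge0}(y+k)^{-2}>\int_0^\infty(y+t)^{-2}\,dt=1/y$ for $y>0$, together with the AM--HM inequality $\frac1x+\frac1{1/2-x}\ge 8$ on $(0,1/2)$ (equality at $x=1/4$), to get $\psi'(xn)+\psi'((\tfrac12-x)n)>\frac1n\big(\frac1x+\frac1{1/2-x}\big)\ge \frac{8}{n}$, whence $L_i''(x)<-8n+4\overline\beta_i n^2=n(4\overline\beta_i n-8)$ for all $x\in(0,1/2)$. The threshold $\beta_c$ is calibrated exactly so this is negative: $\beta_i\le\beta_c=\frac{4\log 2}{n}$ gives $\overline\beta_i=\beta_i/2\le\frac{2\log 2}{n}$, so $4\overline\beta_i n\le 8\log 2<8$ (since $\log 2<1$), and therefore $L_i''(x)<8n(\log 2-1)<0$. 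Hence $L_i$ is strictly concave on $(0,1/2)$; combined with $L_i'(1/4)=0$ this yields $L_i'>0$ on $(0,1/4)$ and $L_i'<0$ on $(1/4,1/2)$, so $\overline{\pi}_i$ strictly increases then strictly decreases, giving a unique maximum at $x=1/4$. Finally, strict concavity means $\overline{\pi}_i$ strictly decreases as one moves away from $1/4$ through the lattice $\{x:xn\in\mathbb Z\}$, so the unique integer maximizer is $x=1/4$.

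I expect the only genuine subtlety to be this quantitative step: one needs a trigamma (or binomial-ratio) estimate sharp enough that nothing weaker than $\psi'(y)>1/y$ works, and one must observe that $\beta_c=4\log 2/n$ is precisely the break-even value, since $4\overline\beta_c n=8\log 2<8$. If one prefers to avoid special functions, the identical argument runs on the discrete second difference $\log\overline\pi_i(\tfrac{k+1}{n})-2\log\overline\pi_i(\tfrac kn)+\log\overline\pi_i(\tfrac{k-1}{n})$ via ratios of $\binom{n/2}{k}$; the content is the same, with $\overline\beta_i<2/n$ again being exactly where concavity (log-concavity of the ratio) is gained.
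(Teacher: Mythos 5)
Your proof is correct and follows essentially the same route as the paper: showing $x=1/4$ is a critical point and then establishing strict log-concavity via the bound $\psi'(y)>1/y$ (the paper's $\sum_{k\ge cn}k^{-2}\ge 1/(cn)$) together with $x(\tfrac12-x)\le\tfrac1{16}$, with $\beta_c$ calibrated so that $8\log 2<8$. The only cosmetic difference is that the paper phrases the concavity step as comparing the derivatives of $f'/f$ and $g'/g$ rather than writing $L_i''<0$ directly, and your version in fact fixes a harmless $n$ versus $n^2$ inconsistency in the paper's displayed exponent.
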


The proof
appears in Section \ref{sec:proofs-of-lemmas}. This implies the inequality (\ref{eq:using-unimodality}). 
Next, we'll show that $\Phi_S$ is essentially determined by the
conductance of the cut induced at
the highest inverse temperature $\beta_M$.

\begin{lemma} \label{8}
For every inverse temperature $\beta_i \le \beta_c$,
$\frac{\pi_{i}(\Omega_{{n/2}})} {\pi_{i}(\Omega_{{n/3}})} \leq
\frac{\pi_{M}(\Omega_{{n/2}})} {\pi_{M}(\Omega_{{n/3}})} $.
\end{lemma}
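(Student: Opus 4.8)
The plan is to compute both ratios explicitly from \eqref{eq:gibbs} and observe that the only dependence on $i$ is through a single exponential whose exponent is a \emph{positive} multiple of $\overline\beta_i$, so that the quantity is monotone increasing in $\beta_i$.

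First I would note that $\Omega_{n/2}$ and $\Omega_{n/3}$ are each a single equivalence class $\Omega_\sigma$ (with $\sigma=(n/2,n/4,n/4)$ and $\sigma=(n/3,n/3,n/3)$ respectively), so \eqref{eq:gibbs} gives
\[
\frac{\pi_{i}(\Omega_{n/2})}{\pi_{i}(\Omega_{n/3})}
= \frac{\binom{n}{n/2,\,n/4,\,n/4}}{\binom{n}{n/3,\,n/3,\,n/3}}\,
e^{\overline\beta_i\left(\overline H_{n/2}-\overline H_{n/3}\right)},
\]
where $\overline H_{n/2}=(n/2)^2+(n/4)^2+(n/4)^2$ and $\overline H_{n/3}=3(n/3)^2$; the key points are that the partition function $\overline Z(\beta_i)$ cancels between numerator and denominator, and that the multinomial prefactor does not depend on $i$.

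Second I would record the elementary computation $\overline H_{n/2}-\overline H_{n/3}=\tfrac{3}{8}n^2-\tfrac{1}{3}n^2=\tfrac{1}{24}n^2>0$. Conceptually this is just the statement that the boundary class $\Omega_{n/2}$ is ``more ordered'' than the disordered class $\Omega_{n/3}$, i.e.\ it sits farther from the barycenter of the simplex and hence has strictly larger $\sum_j\sigma_j^2$. Consequently the right-hand side of the displayed identity is a strictly increasing function of $\overline\beta_i$, equivalently of $\beta_i$. Since $\beta_i\le\beta_M=\beta_c$ for every $i\in I$, monotonicity immediately yields $\pi_{i}(\Omega_{n/2})/\pi_{i}(\Omega_{n/3})\le\pi_{M}(\Omega_{n/2})/\pi_{M}(\Omega_{n/3})$, which is the claim.

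There is essentially no real obstacle here beyond bookkeeping; the one thing to get right is the sign of $\overline H_{n/2}-\overline H_{n/3}$, since it is precisely the fact that the cut boundary is more ordered than the disordered mode that makes the relative weight favor low temperature (large $\beta$). If one wishes to be careful about integrality of $n/2,n/3,n/4$, one either restricts to $n$ divisible by $12$ or carries the nearest-integer rounding through; this affects neither the cancellation of $\overline Z(\beta_i)$ nor the monotonicity, so the argument is unchanged.
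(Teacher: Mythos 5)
Your proposal is correct and follows essentially the same route as the paper: the partition function cancels, the multinomial ratio is independent of $i$, and since $\overline H(\tfrac12,\tfrac14,\tfrac14)-\overline H(\tfrac13,\tfrac13,\tfrac13)=\tfrac{1}{24}$ the ratio equals $h(n)\,e^{\overline\beta_i n^2/24}$, which is increasing in $\beta_i$ and hence maximized at $\beta_M=\beta_c$. The integrality remark is a harmless extra; the paper treats it the same way implicitly.
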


\begin{proof}
Note that only the exponential term in $\frac{\pi_{i}(\Omega_{{n/2}
})}
{\pi_{i}(\Omega_{{n/3}})}$ varies with $\beta_i$. Letting $h(n)$ be the ratio of the multinomial terms,
we have
\begin{eqnarray*}
\frac{\pi_{i}(\Omega_{{n/2}})}{\pi_{i}(\Omega_{{n/3}})} \ &=& \
%h(n)e^{\beta_in^2((1/2)^2+(1/4)^2+(1/4)^2)-((1/3)^2+(1/3)^2+(1/3)^2))}\\
h(n)^{\overline \beta_in^2(\overline H(\frac12,\frac14,\frac14)-\overline H(\frac13,\frac13,\frac13))}\\
\  &=&  \ h(n)e^{\overline \beta_in^2(1/24)} \\
\  &\le & \ h(n)e^{\overline \beta_cn^2(1/24)}
\ =  \
\frac{\pi_{\beta_c}(\Omega_{{n/2}})}{\pi_{\beta_c}(\Omega_{{n/3}})}.
\end{eqnarray*}
 \end{proof}

This implies that the ratio on the right hand side of (\ref{eq:using-unimodality}) can be
bounded as follows
\begin{eqnarray}
\label{eq:critical-beta}
\frac{\displaystyle\sum_{i \in I}\
\pi_{\beta_i}(\Omega_{n/2})}{\displaystyle\sum_{i \in
 I}\
\pi_{\beta_i}(\Omega_{n/3})}   \leq c_6 n
\frac{\pi_{\beta_c}(\Omega_{{n/2}})}{\pi_{\beta_c}(\Omega_{{n/3}})}.
\end{eqnarray}
for some constant $c_6>0$.
There are two final steps to bounding the conductance. First, we
will show that $\frac{\pi_{\beta_c}(\Omega_{{n/2}})}{\pi_{\beta_c}(\Omega_{{n/3}})}$
is exponentially small. Second, we will show that $\Phi \leq
n^{O(1)} \Phi_S$. These facts follow from properties of the stationary
distribution proved in Lemmas \ref{lem:exp} and \ref{lem:within-poly}.

The following lemma demonstrates that there is a critical temperature
at which $\Omega_{n/3}$ and $\Omega_{2n/3}$
both have large weight compared to $\Omega_{n/2}$. Also, the
configurations $\Omega_{n/3}$ have a weight that is at least a
polynomial fraction of the stationary weight of $\Omega$ at $\beta_c$.
\begin{lemma}
\label{lem:exp}

At $\beta_c = \frac{2 ln 2}{n}$,
\begin{enumerate}
        \item[(i)] $\pi_{\beta_c}(\Omega_{{n/3}}) =
\pi_{\beta_c}(\Omega_{{2n/3}}) + o(1).$
        \item[(ii)]
          $\frac{\pi_{\beta_c}(\Omega_{{n/2}})}{\pi_{\beta_c}(\Omega_{{n/3}})}
\leq e^{-\Omega(n)}$
\item[(iii)] $\pi_{\beta_c}(\Omega_{n/3}) \geq
  \frac{\pi_{\beta_c}(\Omega)}{n^2}$
\end{enumerate}
\end{lemma}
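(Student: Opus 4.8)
The plan is to reduce all three parts to Stirling asymptotics of the unnormalized weights $w(\Omega_\sigma) := \binom{n}{\sigma_1,\sigma_2,\sigma_3}\,e^{\overline\beta_c(\sigma_1^2+\sigma_2^2+\sigma_3^2)}$, so that $\pi_{\beta_c}(\Omega_\sigma) = w(\Omega_\sigma)/\overline Z(\beta_c)$ with $\overline Z(\beta_c) = \sum_\sigma w(\Omega_\sigma)$, where $\overline\beta_c = \beta_c/2 = \tfrac{2\ln 2}{n}$ as in \eqref{eq:gibbs}. For $\sigma = (\alpha_1 n,\alpha_2 n,\alpha_3 n)$ with the $\alpha_i$ bounded away from $0$, Stirling's formula gives $\binom{n}{\sigma_1,\sigma_2,\sigma_3} = \Theta(n^{-1})\,e^{n\mathcal H(\alpha)}$, where $\mathcal H(\alpha) = -\sum_i\alpha_i\ln\alpha_i$, and the energy exponent is $\overline\beta_c n^2\sum_i\alpha_i^2 = n\cdot 2\ln 2\sum_i\alpha_i^2$. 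Hence $w(\Omega_\sigma) = \Theta(n^{-1})\,e^{n g(\alpha)}$ with $g(\alpha) := \mathcal H(\alpha) + 2\ln 2\sum_i\alpha_i^2$ (classes with a coordinate near the boundary of the simplex have much smaller $\binom n\sigma$ and will be exponentially negligible). The one arithmetic input behind everything is that $\overline\beta_c$ is exactly the value making $g$ equal at the disordered and ordered points: a direct computation gives $g(\tfrac13,\tfrac13,\tfrac13) = g(\tfrac23,\tfrac16,\tfrac16) = \ln 3 + \tfrac23\ln 2 =: g^*$, while $g(\tfrac12,\tfrac14,\tfrac14) = \tfrac94\ln 2$.

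Part (ii) is then immediate: the ratio $\pi_{\beta_c}(\Omega_{n/2})/\pi_{\beta_c}(\Omega_{n/3})$ cancels $\overline Z(\beta_c)$ and equals $\Theta(1)\cdot e^{n(g(1/2,1/4,1/4)-g^*)} = \Theta(1)\cdot e^{-n(\ln 3 - \frac{19}{12}\ln 2)}$; the exponent is $-\tfrac{n}{12}\ln(3^{12}/2^{19})$, which is negative since $2^{19} = 524288 < 531441 = 3^{12}$, so the ratio is $e^{-\Omega(n)}$. For part (i), the exact equality $g(\tfrac13,\tfrac13,\tfrac13) = g(\tfrac23,\tfrac16,\tfrac16)$ means $w(\Omega_{n/3})$ and $w(\Omega_{2n/3})$ have the same exponential order (their Stirling prefactors differ by a bounded constant), while $\overline Z(\beta_c) \ge \sum w(\Omega_\sigma)$ over the $\Theta(n)$ classes $\sigma$ with all $|\sigma_i - n/3| \le \sqrt n$, each of weight $\Theta(n^{-1}e^{n g^*})$ by a second-order expansion of $g$ at the disordered point, so $\overline Z(\beta_c) = \Theta(e^{n g^*})$; hence both probabilities are $\Theta(1/n)$, so their difference is $o(1)$, which is (i). (Note the multiplicative ratio tends to a constant $\neq 1$, so the $+o(1)$ form is the right one.)

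For part (iii), note $\pi_{\beta_c}(\Omega) = 1$, so the assertion is $\pi_{\beta_c}(\Omega_{n/3}) \ge 1/n^2$, i.e. $w(\Omega_{n/3}) \ge \overline Z(\beta_c)/n^2$. As $\sigma$ ranges over at most $\binom{n+2}{2} = O(n^2)$ classes, $\overline Z(\beta_c) \le O(n^2)\max_\sigma w(\Omega_\sigma)$, so it suffices to show $\Omega_{n/3}$ attains $\max_\sigma w(\Omega_\sigma)$ up to a constant factor; by the Stirling estimate this is the statement that $g$ is maximized over the probability simplex at the value $g^*$. I would prove this with Lagrange multipliers: at an interior critical point of $g$ subject to $\sum_i\alpha_i = 1$, the quantity $\phi(\alpha_i) := 4\ln 2\cdot\alpha_i - \ln\alpha_i$ is the same for all $i$; since $\phi$ has a unique minimum on $(0,1)$, each of its level sets has at most two points, so the $\alpha_i$ take at most two distinct values. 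Enumerating the finitely many such candidates --- the uniform point, the permutations of $(\tfrac23,\tfrac16,\tfrac16)$ and of $(\tfrac12,\tfrac14,\tfrac14)$, whose coordinates one checks do solve $\phi(\alpha_i)=\text{const}$ --- together with the maximum of $g$ on the boundary edges (a one-dimensional optimization, with corner value $g(1,0,0) = 2\ln 2 < g^*$) shows $\max g = g^*$, attained at the disordered and ordered points. Then $w(\Omega_{n/3}) = \Theta(n^{-1}e^{n g^*})$ is within a constant factor of $\max_\sigma w(\Omega_\sigma) = \Theta(n^{-1}e^{n g^*})$, whence $w(\Omega_{n/3}) \ge \Omega(n^{-2})\,\overline Z(\beta_c) \ge \overline Z(\beta_c)/n^2$ for $n$ large.

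The main obstacle is this last optimization: verifying rigorously that $\overline\beta_c = \tfrac{2\ln 2}{n}$ makes $g^*$ the \emph{global} maximum of $g$, so that no competing class $\Omega_\sigma$ dominates $\Omega_{n/3}$. Because $g$ is a sum of a concave and a convex term it is neither uni- nor multimodal a priori, and some case analysis of the above type seems unavoidable (it amounts to the standard identification of the critical temperature of the mean-field $3$-state Potts model); the rest is routine bookkeeping with Stirling's formula and the explicit values of $g$ at a handful of points.
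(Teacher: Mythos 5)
Your proposal is correct and takes essentially the same route as the paper's proof: Stirling asymptotics for the class weights, the arithmetic fact that $\overline\beta_c=\tfrac{2\ln 2}{n}$ equalizes the exponential rates of the disordered class $\Omega_{n/3}$ and the ordered class $\Omega_{2n/3}$ (giving (i)), a direct rate comparison at $(\tfrac12,\tfrac14,\tfrac14)$ for (ii), and a global maximization of the rate function over the simplex combined with the $O(n^2)$ count of classes for (iii), where the paper is no more detailed about the stationary-point (and boundary) check than you are. Incidentally your constant in (ii) is the right one --- the decay rate is $\tfrac{1}{12}\ln\left(3^{12}/2^{19}\right)$, so the paper's displayed $2^{13}$ is a typo --- and in (iii) the clean way to finish is via your own observation that $\pi_{\beta_c}(\Omega_{n/3})=\Theta(1/n)$, since the literal step ``$\Omega(n^{-2})\,\overline Z\ge \overline Z/n^2$'' is not valid if the implied constant is below $1$.
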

The proof of the lemma can be found in Section \ref{sec:proofs-of-lemmas}.
Putting together the bound on $\Phi_S$ from inequality
(\ref{eq:critical-beta}) and part ii) of Lemma \ref{lem:exp}, we obtain that
for some constant $c_7>0,$
\begin{eqnarray*}
\Phi_S \leq e^{-c_7n}
\end{eqnarray*}
Lastly, we show the bound on the conductance $\Phi$. We need the following
lemma, which says that the stationary weight of the configurations on
either side of the cut $S$ are within a polynomial factor.

\begin{lemma}\label{lem:within-poly}
The stationary weight in the tempering chain of the set $S$ is
bounded as $\pi_{st}(S) \leq n^{O(1)} \pi_{st}(S^c)$.
\end{lemma}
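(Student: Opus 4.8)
The plan is to observe first that under the hypotheses of Theorem~\ref{thm:slow-potts} the temperature constraint in the definition of $S$ carries no information: since $\beta_c=\beta_M$ and $\beta_0\le\beta_i\le\beta_M$ for every $i$, the condition $\beta_0\le\beta_i\le\beta_c$ holds for all $i\in\{0,\ldots,M\}$, so $S=A\times\{0,\ldots,M\}$ and $S^c=(\Omega\setminus A)\times\{0,\ldots,M\}$. Since $\pi_{st}$ assigns mass $\tfrac1{M+1}$ to each temperature slice and conditions to $\pi_{\beta_i}$ there, we have $\pi_{st}(S)=\tfrac1{M+1}\sum_{i=0}^M\pi_{\beta_i}(A)\le1$ and $\pi_{st}(S^c)=\tfrac1{M+1}\sum_{i=0}^M\pi_{\beta_i}(\Omega\setminus A)$; as these two quantities sum to $1$, it suffices to prove the single bound $\pi_{st}(S^c)\ge n^{-O(1)}$, whence $\pi_{st}(S)\le 1\le n^{O(1)}\pi_{st}(S^c)$.

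To get the lower bound on $\pi_{st}(S^c)$ I would discard all temperature slices except $i=M$ and then restrict attention to one ordered mode:
\[
\pi_{st}(S^c)\;\ge\;\frac1{M+1}\,\pi_{\beta_M}(\Omega\setminus A)\;\ge\;\frac1{M+1}\,\pi_{\beta_c}(\Omega_{2n/3}),
\]
the last inequality because every configuration in $\Omega_{2n/3}$ has $\sigma_1=2n/3>n/2$ and hence lies outside $A$. Here Lemma~\ref{lem:exp} does all the work: part (iii) gives $\pi_{\beta_c}(\Omega_{n/3})\ge n^{-2}$ and part (i) gives $\pi_{\beta_c}(\Omega_{2n/3})=\pi_{\beta_c}(\Omega_{n/3})-o(1)$, so $\pi_{\beta_c}(\Omega_{2n/3})\ge n^{-O(1)}$. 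Since $M+1=n^{O(1)}$ by hypothesis, this yields $\pi_{st}(S^c)\ge n^{-O(1)}$ and completes the argument.

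In short, the proof itself is pure bookkeeping once Lemma~\ref{lem:exp} is in hand; all the analytic content — the phase coexistence at $\beta_c$, namely that the ordered class $\Omega_{2n/3}$ and the disordered class $\Omega_{n/3}$ each carry at least an inverse-polynomial share of the mass — is encapsulated there and established by Stirling's approximation in Section~\ref{sec:proofs-of-lemmas}, so there is no real obstacle at this stage. The one point worth checking is that the $o(1)$ error in Lemma~\ref{lem:exp}(i) is in fact $o(n^{-2})$, i.e.\ that $\pi_{\beta_c}(\Omega_{2n/3})$ and $\pi_{\beta_c}(\Omega_{n/3})$ agree up to a $1+o(1)$ multiplicative factor; the Stirling estimate delivers this, but if one prefers, one can instead read off $\pi_{\beta_c}(\Omega_{2n/3})\ge n^{-O(1)}$ directly from that estimate and bypass part (i) altogether.
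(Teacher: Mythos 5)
Your proof is correct and follows essentially the same route as the paper: lower-bound $\pi_{st}(S^c)$ by restricting to the slice at $\beta_M=\beta_c$ and the ordered mode $\Omega_{2n/3}$ (which lies outside $A$), then invoke Lemma~\ref{lem:exp}(i) and (iii) together with $M=n^{O(1)}$ and the trivial bound $\pi_{st}(S)\le 1$ (the paper phrases this as $\pi_{\beta_c}(\Omega)=1\ge\pi_{st}(S)$). Your side remark about the $o(1)$ in Lemma~\ref{lem:exp}(i) needing to be read as a multiplicative $1+O(n^{-1})$ error is a fair point that the paper's one-line use of part (i) glosses over, and your suggested workaround (reading the Stirling estimate directly) is the right fix.
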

\begin{proof}
\begin{eqnarray*}
\pi_{st}(S^c) & = & \frac{1}{M+1} \displaystyle\sum_{i \in I}
  \displaystyle\sum_{x \in \Omega \setminus A}\pi_{\beta_i}(x) \\
& \geq & \frac{1}{M+1}
  \pi_{\beta_c}(\Omega_{2n/3}) \\
\rm{(By \ Lemma \ \ref{lem:exp} \ \textit{(i)} )\ } & \ge & \frac{1}{M+1}
  \pi_{\beta_c}(\Omega_{n/3})\\
\rm{(By \ Lemma \ \ref{lem:exp} \ \textit{(iii)} )\ } & \geq &  \frac{1}{4n^2}
  \frac{1}{M+1}
  \pi_{\beta_c}(\Omega) \\
%  & \geq & \frac{1}{4n^2} \frac{1}{(M+1)^2}
%  \displaystyle\sum_{i \in I}  \pi_{\beta_i}(\Omega)\\
%& \geq & \frac{1}{4n^2} \frac{1}{(M+1)^2}
%  \displaystyle\sum_{i 
%  \in I} \pi_{\beta_i}(A)\\
& \ge & \frac{1}{4n^2} \frac{1}{M+1} \pi_{st}(S) 
\end{eqnarray*}
where the last inequality follows since $\pi_{\beta_i}(\Omega) = 1$.
 \end{proof}
With this lemma in hand, we can bound the conductance of the tempering
Markov chain at the temperature $\beta_c$.
\begin{eqnarray*}
\Phi_{S^c} & = & \frac{F_{S^c}}{C_{S^c}} \\
\rm{(By \ Lemma \ \ref{lem:within-poly}) \ } & \leq & n^{O(1)}
\frac{F_{S^c}}{C_S} \\
\rm{(By \ reversibility) \ } & = &  n^{O(1)} \frac{F_S}{C_S} \\
& \leq & n^{O(1)} e^{-c_7n}.
\end{eqnarray*}

This bounds the conductance since $\Phi \leq
\max(\Phi_S,\Phi_{\overline{S}}) \leq e^{c_4n}$ for some $c_4>0$.
%Finally, note that if the tempering Markov chain is defined with temperatures
%$\beta_c > \beta_{i_1} > \ldots > \beta_{i_j}$ for any $\{i_1,\ldots
%i_j\} \subseteq \{0, \ldots M-1\}$, the same arguments show that the
%conductance of the chain will still be exponentially small. 
This
completes the proof of Theorem \ref{thm:small-conductance} which implies that simulated tempering is torpidly mixing.
Zheng \cite{Z} has shown that polynomial mixing of the swapping Markov chain
implies polynomial mixing of the tempering chain.
Thus the torpid mixing of simulated tempering implies that the swapping chain for
the mean-field Potts model mixes exponentially torpidly also, for the same distributions, completing the proof of Theorem \ref{thm:slow-potts}.

%{\tt
%\subsection{Torpid Mixing from the Perspective of Importance Sampling}
%Looking at the total spins distribution shows that the importance
%sampling is putting too much weight on the ordered mode.
%}

\subsection{Proofs of Technical Lemmas}
\label{sec:proofs-of-lemmas} 
We present the proofs now of the technical lemmas about the stationary distribution which were used in the previous section. \\

\noindent {\bf   Lemma \ref{diff}.}
The function $\overline{\pi}_{i}(x)$ has a unique maximum such that $xn$ is integer in the range
$0<x<\frac{1}{2}$ and attains its maximum
at $x=\frac{1}{4}$ for all $i$ such that $\beta_i \le \beta_c$. 

\begin{proof}
Recall that we have defined 
$$
\overline{\pi}_{i} (x) =    \frac{\Gamma(n)}{\Gamma(\frac n2)\Gamma(xn)\Gamma\left( \left(\frac 12 - x\right)n\right)} 
\frac{e^{\overline \beta_i n^2 \left( \left(\frac{1}{2}\right)^2+x^2+ \left(
\frac{1}{2}-x \right)^2  \right) }}{\overline Z(\beta_i)}, \\ \ \ \ \ \ \ \  \ \ x \in (0,1/2).
$$

\noindent Neglecting the factors that are not dependent on $x$ we can write the function that we would like to maximize as
\begin{eqnarray*}
& \frac{f(x)}{g(x)} = &
\frac{e^{\overline \beta_i n(x^2+(\frac12-x)^2)}}{\Gamma(xn)\Gamma\left( \left(\frac 12 - x\right)n\right)}
\end{eqnarray*}
and show that the unique maximum is at $x = 1/4$.
To test the sign of the derivative $\left( \frac{f(x)}{g(x)} \right)'$, we
compare the quantities $\frac{f'}{f}$ and
$\frac{g'}{g}$ since both $f$ and $g$ are positive in the interval $(0,1/2)$. It can be verified that $\frac{f'}{f} = \overline \beta_in(4x-1)$ and $$\frac{g'}{g} =n \left( -\sum_{k=1}^\infty \frac{1}{k+xn-1} + \sum_{k=1}^\infty \frac{1}{k+ (\frac 12 - x)n - 1} \right),$$
where we have used that the derivative of the gamma function is given by
\[
\Gamma'(x) = \Gamma(x) \left(-\gamma + \sum_{k=1}^\infty \left( \frac 1k - \frac{1}{k+x-1} \right) \right)
\]
where $\gamma$ is the Euler-Mascheroni constant.
Thus, we verify that there is a stationary point at $x=\frac{1}{4}$ since $\frac{f'}{f} = 0 = \frac{g'}{g}$. We will verify that this is the unique stationary point in (0,1/2).
Using the integral approximation bound $\sum_{k=cn}^{\infty} \frac{1}{k^2} \ge 1/cn$, we have
\begin{align*}
\left( \frac{f'}{f}\right)' & = 4 \overline \beta_i n,  \ \ \ \ \ \ \ \mathrm{and } \\
\left( \frac{g'}{g}\right)' & = n^2 \left( \sum_{k=1}^\infty  \frac{1}{(k+xn-1)^2} +  \sum_{k=1}^\infty  \frac{1}{(k+(\frac 12- x)n-1)^2} \right) \ge n^2 \left( \frac{1}{xn} + \frac{1}{(\frac12 - x)n}\right) = \frac{n}{2x(\frac 12 - x)}
\end{align*}
Therefore, 
\begin{align*}
\left( \frac{g'}{g}\right)' \ge 8n > 4 \overline \beta_c n \ge 4 \overline \beta_i n = \left( \frac{f'}{f}\right)'
\end{align*}
since $\overline \beta_c = \frac{2\log(2)}{n}$.
Since the derivative of $\frac{g'}{g}$ is greater than the derivative of $\frac{f'}{f}$ at each point in $(1/4,1/2)$, there are no stationary points in that interval. A similar argument by symmetry shows there are no stationary points in $(0,1/4)$. \qedhere

%We will show that by taking $n$ large enough, we can ensurethis is the unique stationary point corresponding to an integer value for $xn$ when $x \in (1/4,1/2)$. The case when $x \in (0,1/4)$ is symmetric. For $x \in (1/4,1/2)$, we have, approximating the summation \eqref{} by an integral
%\begin{align*}
%\frac{g'}{g} & =  n \left( -\sum_{k=1}^\infty \frac{1}{k+xn-1} + \sum_{k=1}^\infty \frac{1}{k+ (\frac 12 - x)n - 1} \right) > \frac{n(4x-1)}{2} \left( \sum_{k=1}^\infty \frac{1}{(\frac n4 + k-1)^2}\right) > \\
%& >  \frac{2n(4x-1)}{n-4} >  \overline{\beta}_c n (4x-1),
%\end{align*}
%where the last inequality holds since $\overline{\beta}_c = 2 \log(2) < 2$.
%
%\begin{align*}
%\frac{g'}{g} & =  n \left( -\sum_{k=1}^\infty \frac{1}{k+xn-1} + \sum_{k=1}^\infty \frac{1}{k+ (\frac 12 - x)n - 1} \right)
%\end{align*}

%
%since
%\begin{align*}
% \overline \beta_i n(4x-1) <0  , \frac{g'}{g} >0,   & \mathrm{ \  for \ }  x \in
% \left(0,1/4\right) \mathrm{ \  and  \ } \\
% \overline \beta_i n(4x-1) > 0  , \frac{g'}{g} < 0, & \mathrm{ \ for \  }  x \in
% \left(1/4,1/2\right). \qedhere
%\end{align*}

 \end{proof}

\noindent{\bf Lemma \ref{lem:exp} } At $\beta_c = \frac{4\log 2}{n}$
\begin{enumerate}
        \item[(i)] $\pi_{\beta_c}(\Omega_{{n/3}}) =
\pi_{\beta_c}(\Omega_{{2n/3}}) + o(1).$
        \item[(ii)]
$\frac{\pi_{\beta_c}(\Omega_{{n/2}})}{\pi_{\beta_c}(\Omega_{{n/3}})}
\leq e^{-\Omega(n)}$.
\item[(iii)] $\pi_{\beta_c}(\Omega_{n/3}) \geq
  \frac{\pi_{\beta_c}(\Omega)}{n^2}$.
\end{enumerate}
\begin{proof}
(i) \ We solve for $\beta_c$.
                Let
$\pi_{\beta_i}(\Omega_{{n/3}}) \ = \  \pi_{\beta_i}(\Omega_{{2n/3}})
                $. Then,
$$
               {n \choose \frac{2n}{3},\frac{n}{6},\frac{n}{6}}
\frac{e^{\overline \beta_i(\frac{4n^2}{9}+\frac{n^2}{18})}}{Z(\beta_i)}
                        \ = \
                          {n \choose
                        \frac{n}{3},\frac{n}{3},\frac{n}{3} }
                        \frac{e^{\overline \beta_i(n^2/3)}}{Z(\beta_i)}.
$$
\noindent This implies
\begin{eqnarray*}
e^{\overline \beta_i n^2(1/6)}  \ & = & \  \frac{\left(\frac{2n}{3}!\right)\left(\frac{n}
{6}!\right)
                        \left(\frac{n}{6}!\right)}
                        {\left(\frac{n}{3}!\right)\left(\frac{n}{3}!\right)
                        \left(\frac{n}{3}!\right)} \\
 \ & = & \
\frac{\left(\frac{2}{3}\right)^{\frac{2n}{3}}
  \left(\frac{1}{6}\right)^{\frac{n}{3}}}
{\left(\frac{1}{3} \right)^{n}}\left({\frac{1}{\sqrt2}}\right) \left(
                1+O(n^{-1}) \right) \\
 \ & = & \  \frac{2^{\frac{n}3}}{\sqrt2} \left( 1+O(n^{-1}) \right),
\end{eqnarray*}
\noindent
which occurs when $$ \overline \beta_i \ = \ \frac{2 \ln(2)}{n}+\frac{2}{\sqrt2 n^2} \ln
                \left( 1+O(n^{-1}) \right) .$$
\noindent Setting $\beta_c$ to $\frac{4 \ln(2)}{n}$ gives the desired result.

\vskip.1in
\noindent (ii) \ Let $\beta_c = \frac{4\ln(2)}{n}$. Then we have
\begin{eqnarray*}
\frac{\pi_{\overline \beta_c}(\Omega_{n/2})}{\pi_{\overline\beta_c}(\Omega_{n/3})}  \  &
= &
\  \frac{ {n \choose \frac{n}{2},\frac{n}{4},\frac{n}{4}}
                      e^{\overline \beta_c(3n^2/8)}}
{{n \choose \frac{n}{3},\frac{n}{3},\frac{n}{3}}
                       e^{\overline \beta_c(n^2/3)}} \\
 \ & = & \  \frac{\left(\frac{n}{3}!\right)^3}
{\left(\frac{n}{2}!\right)\left(\frac{n}{4}!\right)^2}
e^{\overline \beta_cn^2/24}\\
%\end{eqnarray*}
%\begin{eqnarray*}
& = \ & \sqrt{\frac{27}{32}}
\left(\frac{8}{9}\right)^{\frac{n}{2}}e^{{\ln(2)}n/12 }
\left(1+O(n^{-1})\right)\\
& = \ & \sqrt{\frac{27}{32}} \
e^{-\frac{n}{12}\ln\left(\frac
{3^{12}}{2^{13}}\right)}\left(1+O(n^{-1})\right) \leq e^{-\Omega(n)}.
\end{eqnarray*}

\noindent (iii) \ Let $\beta_c = \frac{4\ln(2)}{n}$. Consider
any general point in the simplex of the form $(x,y,1-x-y)$ for
$0 \leq x + y \leq
1$. It can be verified that the function
$$h(x,y) = \frac{f(x,y)}{g(x,y)} = \frac{e^{\overline \beta_c n (x^2+y^2+
    (1-x-y)^2)}}{\Gamma(xn)\Gamma(yn)\Gamma((1-x-y)n)}$$
has a global maximum at $(1/3,1/3)$, i.e. $h(x,y) \leq h(1/3,1/3)$ for all
    $x,y$ such that $0 \leq x+y \leq 1$. This can be shown by checking
    that $h$ is maximized at $(1/3,1/3)$ over all stationary
    points of $h(x,y)$. This implies that $\pi_{\beta_c}(\Omega_{n/3}) \geq
  \frac{\pi_{\beta_c}(\Omega)}{n^2}$.
\end{proof}

\subsection{Polynomial Mixing of Flat-tempering the 3-state
  Potts Model}

The above proof of slow mixing due to a first-order phase transition and the
results of Section \ref{fast} together give the insight that the tempered
distributions should be defined so that the first order discontinuity is not
preserved. We show that the Flat-tempering algorithm with
distributions $\rho_i$ defined below can be used to
efficiently sample from configurations of the 3-state ferromagnetic
mean-field Potts model at any temperature.
The function $f$ in this case is
\begin{eqnarray*}f_i(x) = {n \choose x_1  \ x_2 \ x_3}^{\frac{i-M}{M}}.
\end{eqnarray*}
With $\rho_i$ defined as above,
\begin{align}
\rho_i(\Omega_{\sigma}) \ = \ {n\choose \sigma_1 \sigma_2 \sigma_3} \rho_{i}(x)
\ = \  \frac{1}{Z_i}\left(\rho_{M}(\Omega_{\sigma})\right)^{\frac{i}{M}}. \label{eq:flat-temp-dist}
\end{align}

\begin{theorem}\label{thm:fast-tempering} Let $\beta  = \frac{\mu}{n}$
  for a constant $\mu>0$. Then, for
  some constant $c_8 > 0$ the
  simulated tempering Markov chain $\widehat{P}$ with the
  distributions $\rho_0,
  \ldots, \rho_M$ mixes in time $O(n^{c_8})$.
\end{theorem}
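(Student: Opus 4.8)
The plan is to establish polynomial mixing of the swapping chain $P_{sw}$ built from the dampened distributions $\rho_0,\dots,\rho_M$ (with $M$ a polynomial and $\Omega(n)$, as always) and then obtain Theorem~\ref{thm:fast-tempering} from Zheng's reduction \cite{Z}, following the template of Sections~\ref{fast} and~\ref{newswap}; alternatively one can argue directly for the tempering chain by decomposing $\Omega\times\{0,\dots,M\}$ into the four basins below and using that at temperature $0$ the flat distribution lets the chain cross freely between basins. The structural input is \eqref{eq:flat-temp-dist}: since $\rho_i(\Omega_\sigma)=Z_i^{-1}\bigl(\rho_M(\Omega_\sigma)\bigr)^{i/M}$, for $1\le i\le M$ the total spins distribution $\sigma\mapsto\rho_i(\Omega_\sigma)$ is a fixed increasing reparametrization of $\sigma\mapsto\rho_M(\Omega_\sigma)$, so it has the same local maxima and separating ridges at every temperature, while $\rho_0(\Omega_\sigma)$ is constant in $\sigma$. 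I would work with the basins $R_b=\{\sigma:\sigma_b>n/2\}$ for $b=1,2,3$ and $R_0=\{\sigma:\sigma_1,\sigma_2,\sigma_3\le n/2\}$ — the partition underlying the cut in Theorem~\ref{thm:slow-potts} — after checking that each $R_b$ carries a single mode of $\rho_i(\Omega_\sigma)$ at every $\beta_i\le\beta$ (the two-dimensional analogue of Lemma~\ref{diff}; Lemma~\ref{lem:exp}(iii) already handles one slice) and that each boundary $\partial(R_0,R_b)=\{\sigma_b=n/2\}$ contains interior points of the simplex such as $(n/2,n/4,n/4)$, across which a single Glauber step is accepted with probability $\Theta(1)$ under $\rho_0$.

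Partition $\Omega_{sw}=\Omega^{M+1}$ by the trace $t\in\{0,1,2,3\}^{M+1}$ recording the basin of each coordinate, and invoke the decomposition theorem (Theorem~\ref{dcmp}). For a fixed trace $t$, suppressing swap moves gives, by Lemma~\ref{lem:product}, a product of single-temperature Glauber chains each confined to one basin; each factor mixes polynomially because on a single mode of a mean-field model Glauber dynamics is rapidly mixing and the reweighting by $f_i$ is benign — the three-spin analogue of the restricted-chain analysis of \cite{MZ} invoked in Sections~\ref{fast}–\ref{newswap}. Reinstating swaps, I would compare $P_t$ with the swap-suppressed chain by replacing each swap of coordinates $i,i+1$ with $O(n)$ level moves through the heavier of the two configurations, as in Lemma~\ref{lem:poly-mixing-fixed-trace}; since a basin is not an interval, one first passes to $P_t^2$ and decomposes once more over the classes $\Omega_\sigma$ within each temperature, exactly as in the Ising case. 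This yields $\mathrm{Gap}(P_t)\ge n^{-O(1)}$ for every $t$.

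For the projection chain $\overline P$ on $\{0,1,2,3\}^{M+1}$, whose stationary measure is the product $\overline\rho=\prod_i\overline\rho_i$ of four-point marginals, I would first prove the analogue of Corollary~\ref{cor:01}: for a suitable ordering of the basins and $i>j$, the ratio $\overline\rho_i(b)/\overline\rho_i(b')$ exceeds $\overline\rho_j(b)/\overline\rho_j(b')$ by at most a polynomial factor, using \eqref{eq:flat-temp-dist}, the $O(n)$-comparison of $\overline\rho_i(b)$ with $\rho_i$ of the corresponding peak, and the flatness of $\overline\rho_0$. At the critical value $\mu=4\log 2$, the case relevant for the contrast with Theorem~\ref{thm:slow-potts}, the four peaks have equal height up to $1+o(1)$ by Lemma~\ref{lem:exp}(i), so each $\overline\rho_i$ is polynomially close to uniform and this is immediate; for other constant $\mu$ the model is either below criticality, where nothing is multimodal, or the same ordering argument applies. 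Then compare $\overline P$ with the idealized chain $\widetilde P$ that re-samples a uniformly chosen coordinate from its own marginal (mixing in $O(M\log M)$ by coupon collecting, cf.\ Lemma~\ref{lema:comparison-chain}), routing a re-sampling of coordinate $i$ through a block of swaps up to coordinate $0$, an adjustment at the flat top temperature, and a block of swaps back, as in Lemma~\ref{rapid}. Along such a path at most $O(1)$ coordinates differ from the endpoint, so the weight distortion is polynomial by the Corollary, swaps in $\overline P$ are accepted with probability $\Theta(1)$ (the ratio of peak heights enters only through a $1/M$ power and $M=\Omega(n)$), and at most $M^{O(1)}$ paths use any edge; hence $\mathrm{Gap}(\overline P)\ge n^{-O(1)}$.

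Combining the two bounds through Theorem~\ref{dcmp} gives $\mathrm{Gap}(P_{sw})\ge n^{-O(1)}$, so $P_{sw}$ mixes polynomially by Theorem~\ref{thm:spec_gap_thm} ($\log(1/\pi^*)=\Theta(nM)$ is polynomial), and Flat-tempering mixes polynomially by Zheng's theorem \cite{Z}, proving Theorem~\ref{thm:fast-tempering}. The step I expect to be the main obstacle is the projection-chain comparison: with four, generally asymmetric, basins the canonical paths and the basin ordering must be chosen so that every intermediate trace stays within a polynomial factor of the endpoint weight, bookkeeping that is genuinely more delicate than in the two-basin Ising case; verifying the single-mode structure of the dampened distributions on the two-dimensional simplex is a necessary but more routine prerequisite.
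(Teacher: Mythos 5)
Your route---prove that Flat-swap mixes for the $3$-state Potts model via a trace/basin decomposition and then invoke Zheng \cite{Z}---is not the paper's route, and it has a genuine gap at the restriction-chain step. Decomposing $\Omega_{sw}$ by the basin trace forces each coordinate of the swap chain to stay inside one basin, so you need the dampened total-spins distribution restricted to each basin to be effectively unimodal at every level $i$; you assert this as ``the two-dimensional analogue of Lemma \ref{diff}'' and treat it as routine, but it is false for general constant $\mu$. The theorem allows any $\mu>0$, and once $\mu$ exceeds the spinodal value $3$ the disordered point $(n/3,n/3,n/3)$ is no longer even a local maximum of $\sigma\mapsto\rho_M(\Omega_\sigma)$ (the Hessian of $\tfrac{\mu}{2}\sum_b x_b^2-\sum_b x_b\ln x_b$ at the center becomes positive in simplex directions). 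Inside your basin $R_0=\{\sigma:\max_b\sigma_b\le n/2\}$ the restricted measure then concentrates near the three boundary faces $\sigma_b=n/2$, producing three symmetric local maxima (near $(n/2,n/4,n/4)$ and its permutations for moderate $\mu$, near the corners $(n/2,n/2,0)$ for larger $\mu$) separated by regions of exponentially smaller weight. Since \eqref{eq:flat-temp-dist} makes each $\rho_i$, $i\ge1$, a power of $\rho_M$, this multimodality persists with depth $e^{\Theta(n\,i/M)}$ at all levels $i=\Theta(M)$, so the fixed-trace restriction chains are themselves torpidly mixing and Theorem \ref{dcmp} yields nothing. This is exactly consistent with the paper's caveat after Theorem \ref{thm:slow-potts} that the entropy-dampening technique ``does not seem to extend immediately'' to the swapping algorithm for the Potts model: the swap-based argument you outline is the part the authors could not carry out, and even in the regime where the basins are well behaved, the within-basin mixing and the four-basin projection comparison are substantive unproven claims rather than routine adaptations of the Ising case.

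The paper instead proves Theorem \ref{thm:fast-tempering} directly for the tempering chain and sidesteps basins entirely: it decomposes $\Omega_{st}$ into the classes $(\Omega_\sigma,i)$ (passing to $\widehat P^2$ so that each restriction set is connected), the restriction chains are interchange-type dynamics on a fixed spin profile and mix in $O(n\log n)$, and the projection over the pairs $(\sigma,i)$ is compared to the complete graph, with the canonical path from $(\sigma,i)$ to $(\sigma',j)$ routed down to level $0$, across at the flat level where $\rho_0(\Omega_\sigma)=\Theta(n^{-2})$ uniformly in $\sigma$, and back up to $(\sigma',j)$; the congestion is controlled by the bound $\rho_M^{i/M}(\Omega_{\sigma_{\max}})\le Z_i\le n^{O(1)}\rho_M^{i/M}(\Omega_{\sigma_{\max}})$, so no unimodality, basin ordering, or mode-counting is ever needed. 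If you want to salvage your plan, the natural fix is to drop the basin trace and adopt this $(\sigma,i)$-level decomposition for the tempering chain; as written, your restriction-chain claim (and hence the whole argument) fails for $\mu>3$.
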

\begin{proof}The proof makes use of the decomposition theorem. The
  strategy is to partition the state space of the tempering chain
  $\Omega_{st}$ into the sets $(\Omega_\sigma,i)$ for each
  equivalence class of configurations $\sigma$ and inverse temperature
  $\beta_i$. To keep the notation simple, which we write the restriction sets as $(\sigma,i)$.
The restriction sets $(\sigma,i)$) are not
  connected by the chain $\widehat{P}$ since it only moves between
  configurations which differ in the spin at exactly one
  vertex. We can get around this technicality by first
  bounding the mixing time of the 2-step chain $\widehat{P}^2$.

The chain $\widehat{P}^2$ mixes in polynomial time, which can be seen by comparison with the chain on $(\Omega_\sigma,i)$ where in each step, the spins at two randomly chosen vertices are exchanged. This follows since the mixing time of this chain is only smaller than the mixing time of the interchange process on the complete graph which is bounded by $O(n \ln n)$ for the complete graph on $n$ vertices (see e.g. \cite[Chapter 14]{AldFil}).

%it is easy to see that each restriction mixes in
%  polynomial time since the chain is the interchange process ***\cite{}*** on the complete graph and the acceptance probabilities at a fixed
%  temperature are always at least inverse polynomial.

We analyze the projection by comparison to the complete graph on
the states of the projection $\{(\sigma,i)\}$. For
every pair of states $(\sigma,i)$ and $(\sigma',j)$,
we define a path using edges of $\widehat{P}^2$ and show that the
congestion of these paths is at most a polynomial.

Assume without loss of generality that $i \leq j$. Let  $\tau(\sigma,\sigma')$ be a 
sequence of $O(n)$ states that is the set of vertices along a shortest
path using edges of the projection chain in $(\Omega,0)$ from
$\Omega_\sigma$ to $\Omega_{\sigma'}$, not including the endpoints. The path between
$(\sigma,i)$ and $(\sigma',j)$ is defined to be the concatenation of the paths
$((\sigma,i), (\sigma,i-1), \ldots, (\sigma, 0)), \tau(\sigma,\sigma'),$ and 
$((\sigma',0), \ldots, (\sigma',j))$. 
The observations we use to bound the congestion of the paths by a
polynomial is as follows.
\begin{enumerate}[i)]
\item
Let $\sigma_{\max}$ be an equivalence class
of configurations maximizing $\rho_M(\Omega_\sigma)$. By \eqref{eq:flat-temp-dist}, for any $i$,
$$\rho_M^{\frac{i}{M}}(\Omega_{\sigma_{\max}})
\leq Z_i \leq n^{O(1)} \rho_M^{\frac{i}{M}}(\Omega_{\sigma_{\max}}).$$

\item For any edge in the kernel of the Markov chain, the number of
  paths which are routed through it is at most $O(n^4M^2) \leq
 n^{O(1)}$, taking
  into account the possible starting and ending states.
\end{enumerate}
Then, the congestion of the paths can be bounded as follows. We divide
into two cases. The first where an edge corresponds to a change in the
temperature and is of the form $(\sigma,i'),(\sigma,i'-1)$ for some
$i' \leq i$
(or $(\sigma,j'),(\sigma,j'+1)$ for some$j' < j$). The second is an
edge corresponding to a pair of adjacent states at the inverse
temperature $\beta_0$.
\begin{itemize}
\item Assume that the edge is of the form $(\sigma,i'),(\sigma,i'-1)$ for some
$i' \leq i$. By the observations i) and ii) above,
\begin{eqnarray*}
A & \leq & n^{O(1)} \frac{\min\left(
    \frac{\rho_M^{i/M}(\Omega_\sigma)}{\rho_M^{i/M}
    (\Omega_{\sigma_{\max}})},\frac{\rho_M^{j/M}(\Omega_{\sigma'})}
    {\rho_M^{j/M}(\Omega_{\sigma_{\max}})}
    \right)
}
{\min\left(
    \frac{\rho_M^{i'/M}(\Omega_\sigma)}{\rho_M^{i'/M}
    (\Omega_{\sigma_{\max}})},\frac{\rho_M^{(i'-1)/M}(\Omega_{\sigma'})}
    {\rho_M^{(i'-1)/M}(\Omega_{\sigma_{\max}})}
    \right)
}\\
& \leq & n^{O(1)}
    \left[ \frac{\rho_M(\Omega_\sigma)}{\rho_M
    (\Omega_{\sigma_{\max}})}\right]^{\frac{i-i'}{M}} \leq n^{O(1)}
\end{eqnarray*}
The other case is analogous.
\item Suppose that the edge is a pair of adjacent states at
  $\beta_0$. Since for every $\sigma$, $\rho_0(\Omega_\sigma) =
  \Theta(n^{-2})$, we have
\begin{eqnarray*}
A & \leq & n^{O(1)} \frac{\min\left(
    \frac{\rho_M^{i/M}(\Omega_\sigma)}{\rho_M^{i/M}
    (\Omega_{\sigma_{\max}})},\frac{\rho_M^{j/M}(\Omega_{\sigma'})}
    {\rho_M^{j/M}(\Omega_{\sigma_{\max}})}
    \right)}
{n^{-2}} \leq n^{O(1)}
\end{eqnarray*}
\end{itemize}
Finally, by applying the comparison theorem, the polynomial mixing
time of $\widehat{P}^2$ implies that the mixing
time of $\widehat{P}$ is at most a polynomial. This follows since for
any two adjacent states of $\widehat{P}$, the ratio of the stationary
probabilities is at least an inverse polynomial.
Moreover, for any edge of the 1-step chain, there are at most a polynomial
number of possibilities for the other step.
 \end{proof}

\section{Tempering Can Slow Down Fixed Temperature Algorithms}
\label{slower}
We have shown that simulated tempering can mix torpidly. In fact,
tempering can be slower than the fixed temperature algorithm by more
than a polynomial factor.
In this section we show that for the 3-state Potts model, at an inverse temperature $\beta^*$ just above the critical inverse
temperature, on a restricted part of the state space $\Omega$,
simulated tempering can be slower than
the fixed temperature Metropolis chain by an exponential factor. The
idea is that although the mixing time of the Metropolis
chain at $\beta^*$ is exponential, it is bounded by the size of the cut at $\beta^*$, while the
mixing time of the simulated tempering chain can be an exponential
multiplicative factor worse because the conductance of the same cut at
the higher temperatures is much smaller. Intuitively,
on average, the chain is spends even less time mixing on both sides
of the cut at the higher temperatures than at $\beta^*$.
%From an importance sampling perspective, {\tt something? the average
%  of the distributions spends even less time in the ordered mode}
The precise theorem we show is the following. Let us denote by $\Omega_{RGB} = \{x
\in \Omega \ : \ \sigma_1 \geq \sigma_2 \ge \sigma_3 \}$ the subset of the state space for the 3-state Potts model where the number of vertices of the first color dominate the number of the second which in turn dominate the number of vertices of the third color.

\begin{theorem}\label{thm:slowermixing}
Let $\beta^* = \frac{\mu}{n}$ where $\mu > 4\log(2)$. Assume
that the number of distributions for tempering is $M = \Theta(n)$. Then,
there are constants $\delta > 0$ and $\alpha < 0$ (which may depend on
$\beta^*$) such that the
simulated tempering algorithm on $\Omega_{RGB}$ at $\beta^*$ mixes
only after time $\Omega(e^{(-\alpha  + \delta )n})$. The Metropolis
algorithm at temperature $\beta^*$ mixes in time $O(e^{- \alpha n + o(1)})$
\end{theorem}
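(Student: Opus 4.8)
The plan is to route both halves of the statement through the conductance of essentially a single cut --- the ``bottleneck surface'' $\{\sigma_1=n/2\}$ separating the disordered region around $\Omega_{n/3}$ from the dominant ordered mode $\Omega_{2n/3}$ --- evaluated at $\beta^*$ for the fixed-temperature chain and at all temperature levels simultaneously for the tempering chain. On $\Omega_{RGB}$ at $\beta^*>\beta_c$ the landscape is effectively bimodal: the other two ordered modes of the Potts model are killed by the ordering $\sigma_1\ge\sigma_2\ge\sigma_3$, leaving the small disordered mode $\Omega_{n/3}$ and the large ordered mode $\Omega_{2n/3}$. Put $A=\{x\in\Omega_{RGB}:\sigma_1\le n/2\}$, with boundary $B=\{x:\sigma_1=n/2\}$, and let $\Phi^*$ be the conductance of $A$ under the Metropolis chain at $\beta^*$. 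Since $\Phi^*=F_A/\pi_{\beta^*}(A)$ with $F_A\le n^{O(1)}\pi_{\beta^*}(\Omega_{n/2})$ (by the unimodality-along-$B$ statement of Lemma~\ref{diff}, whose proof only uses $\mu<4$ and so applies at $\beta^*$ once $\mu$ is taken just above $4\log 2$) and $\pi_{\beta^*}(A)\ge\pi_{\beta^*}(\Omega_{n/3})$, the partition function cancels in the key ratio and a Stirling/free-energy computation of the kind done in Lemmas~\ref{8} and~\ref{lem:exp}, but at $\beta^*$ rather than at $\beta_c$, gives $\Phi^*=e^{(\alpha+o(1))n}$ with $\alpha<0$ depending on $\beta^*$; this is the $\alpha$ of the statement. (If for the given $\mu$ the density minimum along the disordered-to-ordered path has drifted off $\sigma_1=n/2$ toward $\Omega_{n/3}$, one places $A$ at that minimum instead --- the bookkeeping is the same.)

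For the Metropolis upper bound I would apply the decomposition theorem, Theorem~\ref{dcmp}, to the bipartition $\{A,\,\Omega_{RGB}\setminus A\}$. The projection is then a two-state chain, and its spectral gap is of order $\Phi^*$ --- \emph{linear}, not quadratic, in the conductance, which is exactly the improvement over a bare application of Theorem~\ref{thm:mixing-conductance} that is needed here. The two restriction chains are supported on unimodal regions; as in the proof of Theorem~\ref{thm:fast-tempering} they mix in polynomial time, by comparison via Theorem~\ref{thm:refined-compthm} with the $O(n\log n)$-mixing interchange process on $K_n$, the ordering constraint not affecting this in an essential way. Feeding these two gap estimates into Theorem~\ref{dcmp} and then into Theorem~\ref{thm:spec_gap_thm} yields $\tau_{\mathrm{Met}}(\varepsilon)=e^{(-\alpha+o(1))n}\ln(1/\varepsilon)$. (As a by-product the decomposition bound shows $A$ is automatically near-optimal, so this $\alpha$ is also the exponential rate of the true conductance.)

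For the tempering lower bound I would take the cut $S=\{(x,i):x\in A,\ 0\le i\le M\}$ --- the disordered basin at every temperature --- and bound its conductance $\Phi_S$. Two structural facts drive the estimate. First, temperature moves never leave $S$, because they fix the configuration, so the escaping flow is entirely from level moves across $B$: $F_S\le\tfrac1{M+1}\sum_{i=0}^{M}\pi_{\beta_i}(B)\le \tfrac{n^{O(1)}}{M+1}\sum_{i=0}^{M}\pi_{\beta_i}(\Omega_{n/2})\le n^{O(1)}\max_i\pi_{\beta_i}(\Omega_{n/2})$ by Lemma~\ref{diff}. Second, $\pi_{st}(S)$ and $\pi_{st}(S^c)$ are both $\Theta(1)$: since $M=\Theta(n)$, $\beta_0=0$ and $\mu>4\log 2$, a constant fraction of the levels have $\beta_i<\beta_c$ (where the disordered mode holds essentially all the mass) and a constant fraction have $\beta_i>\beta_c$ (where the ordered mode does), so $\Phi\le O(1)\,\Phi_S$ exactly as in the proof of Theorem~\ref{thm:small-conductance}. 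The crux is then that $\max_i\pi_{\beta_i}(\Omega_{n/2})$ is exponentially \emph{smaller} than $\Phi^*\approx\pi_{\beta^*}(\Omega_{n/2})/\pi_{\beta^*}(\Omega_{n/3})$: for $\beta_i>\beta_c$ the restricted partition function $\overline Z(\beta_i)$ is bounded below by the ordered-mode weight $\binom{n}{2n/3,n/6,n/6}e^{\overline\beta_i n^2/2}$, which exceeds the disordered-mode weight $\binom{n}{n/3,n/3,n/3}e^{\overline\beta_i n^2/3}$ by an exponential factor, so $\pi_{\beta_i}(\Omega_{n/2})=\binom{n}{n/2,n/4,n/4}e^{\overline\beta_i n^2\cdot 3/8}/\overline Z(\beta_i)$ is suppressed by that same factor relative to $\Phi^*$; and for $\beta_i\le\beta_c$ a comparison as in Lemma~\ref{8} gives $\pi_{\beta_i}(\Omega_{n/2})\le n^{O(1)}\pi_{\beta_c}(\Omega_{n/2})$, which Lemma~\ref{lem:exp}(ii)--(iii) already places exponentially below $\Phi^*$; in fact $\pi_{\beta_i}(\Omega_{n/2})$ is maximized near $\beta_i=\beta_c$, the crossover temperature. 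Working through Stirling, this produces $\Phi_S\le n^{O(1)}e^{-cn}\Phi^*$ with $c=c(\mu-4\log 2)>0$ for $\mu$ in a suitable range above $4\log 2$, hence $\Phi^{\mathrm{temp}}\le e^{(\alpha-\delta)n}$ for a constant $\delta>0$, and Theorem~\ref{thm:mixing-conductance} gives $\tau_{\mathrm{temp}}(\varepsilon)\ge\tfrac14 e^{(-\alpha+\delta)n}\ln(1/2\varepsilon)$ --- the asserted exponential separation.

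The hard part is the uniform-in-temperature free-energy bookkeeping in the last paragraph: one must compare $\pi_{\beta_i}(\Omega_{n/2})/\overline Z(\beta_i)$ with $\pi_{\beta^*}(\Omega_{n/2})/\pi_{\beta^*}(\Omega_{n/3})$ through Stirling's formula, split at $\beta_c$, and, in the hot regime $\beta_i>\beta_c$, keep track of whether the maximizer of $\binom{n}{\sigma}e^{\overline\beta_i\overline H(\sigma)}$ over $\sigma$ with $\sigma_1=n/2$ stays at $(n/2,n/4,n/4)$ or migrates toward the corner $(n/2,n/2,0)$. Extracting a strictly positive $\delta$ --- rather than merely $\Phi^{\mathrm{temp}}\le\Phi^*$ --- is what forces $\mu$ above $4\log 2$ and is the delicate step; the same computation is what would pin down the correct location of the bottleneck cut if one wants the conclusion for all $\mu>4\log 2$ rather than for one well-chosen $\beta^*$. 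A secondary, more routine point is to confirm that the two Metropolis restriction chains really are polynomially mixing on $\Omega_{RGB}$ with the ordering constraint enforced, which should follow from the $K_n$ interchange-process comparison behind Theorem~\ref{thm:fast-tempering}, but must be checked at the boundary of $\Omega_{RGB}$.
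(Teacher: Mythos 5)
Your overall architecture for the tempering lower bound is the paper's: the cut $S=A\times\{0,\dots,M\}$ consisting of the disordered basin at every temperature, escaping flow only through level moves across the boundary, unimodality along the boundary line (Lemma~\ref{diff2}, which is exactly your ``Lemma~\ref{diff} at $\beta^*$'', with the cut placed at $\sigma_1=\lambda_{min}n$ rather than at $\sigma_1=n/2$), and reduction of $\Phi_S$ to a ratio of class weights. Where you diverge is in how the strictly positive $\delta$ is extracted and in how the Metropolis upper bound is obtained. For $\delta$, you propose to bound $\max_i\pi_{\beta_i}(\Omega_{\mathrm{cut}})$ by a Stirling/free-energy computation uniform in $i$, split at $\beta_c$; the paper instead compares consecutive temperature levels: Lemma~\ref{inc} (the disordered class weight grows by a constant factor $C>1$ per level as $\beta$ decreases) and Lemma~\ref{geo} (it grows at least $d>1$ times faster than the saddle weight, because $(\overline\beta_i-\overline\beta_{i-1})(\overline H(\sigma_{\lambda_{min}})-\overline H(\sigma_{1/3}))=\Omega(1)$ when $M=\Theta(n)$), and then a geometric-series comparison of $\sum_i\pi_{\beta_i}(\Omega_{\lambda_{min}})$ against $\sum_i\pi_{\beta_i}(\Omega_{n/3})$ yields the extra factor $(\min(C,d))^{-\Theta(n)}$ without ever locating the maximizing temperature. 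Your route can be made to work (the worst level is indeed near $\beta_c$, and the gain there is proportional to $\mu-4\log 2$), but as written the crux --- the uniform-in-temperature estimate that actually produces $\delta>0$ --- is a plan rather than a proof, and you flag it as such yourself; the paper's two consecutive-level lemmas are precisely the device that replaces that bookkeeping.

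For the Metropolis upper bound your route is genuinely different: a two-block decomposition $\{A,\,\Omega_{RGB}\setminus A\}$ via Theorem~\ref{dcmp}, whose two-state projection has spectral gap linear in the conductance, in place of the paper's flattened measure $\widetilde{\pi}$ (Theorem~\ref{flatdist}) combined with the refined comparison theorem, Theorem~\ref{thm:refined-compthm}, through the parameter $a=\min_x\widetilde{\pi}(x)/\pi(x)$ (Proposition~\ref{prop:metropolis-rapid}). Both devices avoid the quadratic loss in Theorem~\ref{thm:mixing-conductance}, and yours is arguably more economical --- provided the two restriction chains really mix in polynomial time. That is the second gap: the interchange-process comparison only handles mixing within a fixed equivalence class $\Omega_\sigma$; you must still show that the projection onto class counts mixes polynomially on each side of the cut, i.e.\ that the landscape over $K_{RGB}$ restricted to each piece is unimodal enough to run canonical paths, including the parity issues and the behavior at the $\sigma_1\ge\sigma_2\ge\sigma_3$ boundary. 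This is exactly the content the paper supplies through Claims~\ref{d1} and~\ref{d2} and Lemma~\ref{lem:unimodal-canpaths} for the flattened chain, so it is real work, not a routine check. In short: right cut, right mechanisms, correct identification of where the difficulty sits, but the two quantitative steps that constitute the proof --- the uniform $\delta>0$ separation over temperatures and the polynomial mixing of the restrictions --- are outlined rather than carried out.
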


\subsection{Torpid Mixing of Tempering for $\beta^*> \frac{4\log(2)}{n}$}
We start by proving the first part of the theorem above by showing the
following bound on the conductance of the simulated tempering chain.
Let $\Phi_{RGB}$ denote the conductance of the tempering chain on
$\Omega_{RGB}$ at inverse temperature $\beta^*$.

\begin{proposition}\label{prop:lowerconductance}
Let $\beta^* = \frac{\mu}{n}$ where $\mu > 4\log(2)$.
  Then, there exists $\alpha < 0$ and $\delta >0$ such that
  $\Phi_{RGB} \le e^{(\alpha-\delta) n +
o(n)}.$
\end{proposition}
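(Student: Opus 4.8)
The plan is to follow the same blueprint as Theorem~\ref{thm:small-conductance}, but with the critical temperature $\beta_c$ replaced by the larger inverse temperature $\beta^*$, and to track the exponential rates carefully enough to extract a strictly negative exponent $\alpha$ together with a positive gain $\delta$. As before, we take the cut $S$ inside $\Omega_{RGB}\times\{0,\dots,M\}$ to be the set of pairs $(x,i)$ whose spin configuration lies in the ``disordered'' neighbourhood of $\Omega_{n/3}$ and whose inverse temperature satisfies $\beta_i\le\beta^*$; the boundary of this cut on $\Omega$ is again carried by the configurations $\Omega_{n/2}$ (up to the polynomial factor supplied by the unimodality Lemma~\ref{diff}, which holds verbatim since $\beta^*=\mu/n$ is still $O(1/n)$ and the argument there only used $\beta_i n = O(1)$). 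Thus, exactly as in inequalities~\eqref{eq:cond1}--\eqref{eq:using-unimodality}, up to a factor $n^{O(1)}$ the numerator of $\Phi_S$ is controlled by $\sum_i \pi_{\beta_i}(\Omega_{n/2})$ and the denominator by $\sum_i\pi_{\beta_i}(\Omega_{n/3})$.

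The first real step is to redo Lemma~\ref{8}: since $\pi_i(\Omega_{n/2})/\pi_i(\Omega_{n/3}) = h(n)e^{\overline\beta_i n^2/24}$ is increasing in $\beta_i$, every term in the sum is at most the term at $\beta^*=\beta_M$, so $\sum_i\pi_{\beta_i}(\Omega_{n/2})/\sum_i\pi_{\beta_i}(\Omega_{n/3}) \le n^{O(1)}\,\pi_{\beta^*}(\Omega_{n/2})/\pi_{\beta^*}(\Omega_{n/3})$. The second step is the analogue of Lemma~\ref{lem:exp}(ii) at $\beta^*$ rather than $\beta_c$: carrying out the same Stirling computation with $\overline\beta^* = \mu/(2n)$ in place of $\overline\beta_c = 2\ln 2/n$ gives
\begin{eqnarray*}
\frac{\pi_{\beta^*}(\Omega_{n/2})}{\pi_{\beta^*}(\Omega_{n/3})} = \sqrt{\tfrac{27}{32}}\Bigl(\tfrac{8}{9}\Bigr)^{n/2} e^{\mu n/48}\bigl(1+O(n^{-1})\bigr),
\end{eqnarray*}
and one defines $\alpha<0$ to be the exponential rate of this quantity, i.e.\ $\alpha = \tfrac12\log\tfrac{8}{9} + \tfrac{\mu}{48}$; this is negative precisely because the region between the ordered and disordered modes still has exponentially small density at $\beta^*$ (for $\mu$ in the relevant range, which the theorem's hypothesis and the matching Metropolis bound guarantee). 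So far this only shows $\Phi_S \le e^{\alpha n + o(n)}$, which is merely what the conductance of the fixed-temperature cut at $\beta^*$ would give.

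The heart of the proof — and the step I expect to be the main obstacle — is manufacturing the extra factor $e^{-\delta n}$ that separates the tempering conductance from the fixed-temperature one. The idea is that the denominator $\sum_i\pi_{\beta_i}(\Omega\setminus A)$, i.e.\ the stationary mass of the ordered side summed over \emph{all} temperatures, is dominated by the lowest temperature $\beta^*$, where the ordered mode $\Omega_{2n/3}$ is heaviest; whereas the numerator $\sum_i\pi_{\beta_i}(\Omega_{n/2})$ is, after the monotonicity of Lemma~\ref{8}, at worst $M$ times its value at $\beta^*$, a mere polynomial loss. Because the ordered mode at $\beta^*$ is exponentially \emph{heavier} than the barrier at $\beta^*$ (by part of the $\beta^*$-analogue of Lemma~\ref{lem:exp}), and because at $\beta^*=\beta_M>\beta_c$ the ordered mode strictly dominates the disordered mode (this is the content of ``$\beta^*$ above critical''), the ratio $\pi_{st}(S)/\pi_{st}(S^c)$ is itself exponentially small, say $\le e^{-2\delta' n}$ for some $\delta'>0$. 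Feeding this into the reversibility trick of Lemma~\ref{lem:within-poly} (namely $\Phi \le \Phi_{S^c} = (F_S/C_{S^c}) = (\pi_{st}(S)/\pi_{st}(S^c))\,\Phi_S$) turns the polynomial slack of that lemma into an exponential gain: $\Phi_{RGB} \le \max(\Phi_S,\Phi_{S^c}) \le e^{-2\delta' n}\cdot e^{\alpha n + o(n)} = e^{(\alpha-\delta)n + o(n)}$ with $\delta = 2\delta' - o(1) > 0$. The delicate point is to verify that $\alpha - \delta$ is exactly the rate one wants to compare against the Metropolis upper bound $O(e^{-\alpha n + o(1)})$ — that is, that the $\delta$ extracted here is genuinely an \emph{additional} exponential factor and not already absorbed into $\alpha$ — which requires computing both the barrier rate and the mode-imbalance rate at $\beta^*$ as explicit functions of $\mu$ via Stirling and checking the sign and size of their difference. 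Restricting to $\Omega_{RGB}$ rather than $\Omega$ only changes these computations by the symmetry factor $3!$ in the multinomial counts, an $e^{o(n)}$ correction, so it does not affect the rates.
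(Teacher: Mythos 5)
Your first half follows the paper's template faithfully (the temperature-indexed cut, the reduction of the boundary weight to a single line of partitions via unimodality, the monotonicity in $\beta_i$ of the barrier-to-disordered ratio, and the Stirling estimate at $\beta^*$), but with one substantive deviation: you place the boundary at $\Omega_{n/2}$, whereas the paper cuts at $\Omega_{\lambda_{min}}$, the actual minimum of $\pi_{\beta^*}$ along $\ell_{GB}$ between the two modes (Lemma \ref{diff2}). This matters twice. First, your rate $\alpha=\tfrac12\log\tfrac89+\tfrac{\mu}{48}$ is not negative throughout the hypothesis range: it changes sign near $\mu\approx 2.83$, barely above $4\log 2\approx 2.77$, so for most $\mu>4\log 2$ your cut at $n/2$ does not even certify an exponentially small barrier. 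Second, Theorem \ref{thm:slowermixing} needs the \emph{same} $\alpha$ that appears in the Metropolis upper bound of Proposition \ref{prop:metropolis-rapid}, and that $\alpha$ is defined through $\pi_{\beta^*}(\Omega_{\lambda_{min}})/\pi_{\beta^*}(\Omega_{1/3})$, not through $\Omega_{n/2}$.

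The fatal gap is your mechanism for the extra factor $e^{-\delta n}$. The claim $\pi_{st}(S)/\pi_{st}(S^c)\le e^{-2\delta' n}$ is false: the tempering stationary measure gives each temperature weight $\tfrac1{M+1}$, and at temperatures near $\beta_0$ the disordered side $A$ carries essentially all of the conditional mass (the multinomial restricted to $K_{RGB}$ concentrates at $(n/3,n/3,n/3)$), so $\pi_{st}(S)\ge\tfrac{1-o(1)}{M+1}$ and the two sides are comparable up to polynomial factors — exactly the situation of Lemma \ref{lem:within-poly}, with no exponential imbalance to exploit. Moreover, even if such an imbalance held, the identity $\Phi_{S^c}=\bigl(\pi_{st}(S)/\pi_{st}(S^c)\bigr)\Phi_S$ could not be used the way you propose: the conductance minimizes over sets of stationary mass at most $1/2$, so if $S$ were exponentially lighter than $S^c$ you would be forced to bound $\Phi$ via $S$, and $\Phi_S=F_S/\pi_{st}(S)$ only grows when its denominator shrinks. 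The paper extracts the gain from inside $\Phi_S$ itself: Lemma \ref{inc} shows $\pi_{\beta_{i-1}}(\Omega_{n/3})>C\,\pi_{\beta_i}(\Omega_{n/3})$ for a constant $C>1$, and Lemma \ref{geo} shows this per-step growth exceeds that of $\pi_{\beta_i}(\Omega_{\lambda_{min}})$ by a further constant factor $d>1$; summing the two geometric series over $M=\Theta(n)$ temperatures in \eqref{eq:conductance1} shows that the temperature-averaged ratio undercuts the fixed-$\beta^*$ ratio $\pi_{\beta^*}(\Omega_{\lambda_{min}})/\pi_{\beta^*}(\Omega_{n/3})=e^{\alpha n+o(n)}$ by $(\min(d_2,d))^{-\Theta(n)}=e^{-\delta n}$. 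Note this is also where the hypothesis $M=\Theta(n)$ enters, which your argument never uses.
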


Define the set
$K_{RGB} = \{\sigma=(\sigma_1, \sigma_2, \sigma_3)$ where
$\sigma_1 \geq \sigma_2 \geq \sigma_3$, \ $\sum_i \sigma_i =
n\}$. Thus $K_{RGB}$ is the set of $\sigma$ corresponding
to the configurations in $\Omega_{RGB}$.
For $\sigma \in K_{RGB}$, the Gibbs distribution is given by
\begin{eqnarray*}
\pi_{\beta_i}(\sigma) = {n \choose \sigma_1 \ \sigma_2 \ \sigma_3}
\frac{e^{\overline \beta_i(\sigma_1^2+ \sigma_2^2+\sigma_3^2)}}{\overline Z_{RGB}(\beta_i)}
\end{eqnarray*}
where $\overline Z_{RGB}(\beta_i)$ is the normalizing constant.

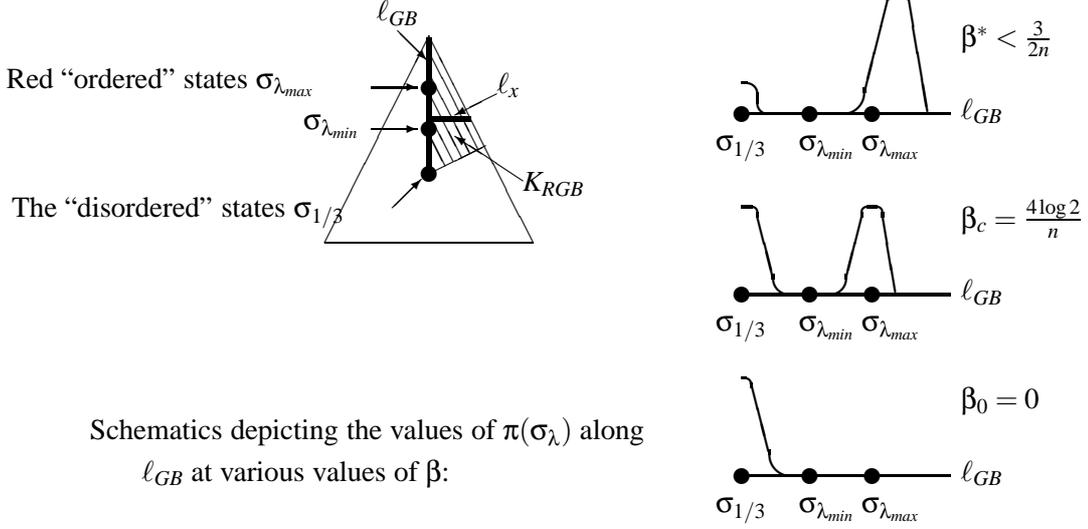
\begin{figure}[t]
\begin{center}
\setlength{\unitlength}{.27in}
\begin{picture}(9,9)

\put(0,5){\line(1,0){4}}
\put(0,5){\line(1,2){2}}
\put(4,5){\line(-1,2){2}}
\put(2,6.33){\circle*{.3}}
\put(2,8){\circle*{.3}}
\put(2,7.2){\circle*{.3}}
\put(2,6.33){\line(2,1){1.09}}
\thicklines
\put(2,6.33){\line(0,1){2.63}}
\put(2,7.4){\line(1,0){.8}}
\put(1.98,6.33){\line(0,1){2.63}}
\put(2,7.38){\line(1,0){.8}}
\put(2.02,6.33){\line(0,1){2.63}}
\put(2,7.42){\line(1,0){.8}}

\thinlines
%\put(2,1.7){\line(2,1){.9}}
%\put(2,2.1){\line(2,1){.78}}
%\put(2,2.5){\line(2,1){.66}}
%\put(2,2.9){\line(2,1){.52}}
%\put(2,3.3){\line(2,1){.5}}
%\put(2,3.7){\line(2,1){.1}}
%\put(2,1.33){\line(1,-2){1.09}}
\put(2,6.7){\line(1,-2){.22}}
\put(2,7.1){\line(1,-2){.31}}
\put(2,7.5){\line(1,-2){.46}}
\put(2,7.9){\line(1,-2){.63}}
\put(2,8.3){\line(1,-2){.78}}
\put(2,8.7){\line(1,-2){.95}}

%\put(.7,5.5){$\sigma_{1/3}$}
\put(-6.0,5.5){The ``disordered'' states $\sigma_{1/3}$}
\put(1.3,5.68){\vector(1,1){.56}}
%\put(.0,8){$\sigma_{\lambda_{max}}$}
\put(-6.1,8){Red ``ordered'' states $\sigma_{\lambda_{max}}$}
\put(.9,8){\vector(1,0){.88}}
\put(-0.4,7.2){$\sigma_{\lambda_{min}}$}
\put(.9,7.2){\vector(1,0){.88}}
\put(3.3,7.9){$\ell_{x}$} \put(3.3,7.88){\vector(-2,-1){.8}} 
\put(1,9.3){$\ell_{GB}$} \put(1.4,9.1){\vector(1,-1){.56}}
\put(3.8,6){$K_{RGB}$} \put(3.8,6.2){\vector(-3,2){1.3}}

\put(-4.5,1.2){Schematics depicting the values of
  $\pi(\sigma_{{\lambda}})$ along }
\put(-3.5,.4){$\ell_{GB}$ at various values of $\beta$:}

\thicklines \put(8,4){\line(1,0){4}}
\put(8,4){\circle*{.3}}
\put(9.3,4){\circle*{.3}} \put(10.5,4){\circle*{.3}}
\put(7.5,3.3){$\sigma_{1/3}$}
\put(9.1,3.3){$\sigma_{\lambda_{min}}$}
\put(10.3,3.3){$\sigma_{\lambda_{max}}$}
%\put(12.2,3.8){along $\ell_{\lambda}$ at $\beta_c=\frac{2 \ln n}{n}$}
%\put(12.2,4.5){the values of $\pi(\Omega_{\ell_{\lambda}})$}
\put(12.2,5.3){$\beta_c=\frac{4\log 2}{n}$}
%\put(8,4.4){\oval(1.2,2.6)[tr]}
%\put(9.3,5){\oval(1.6,2)[bl]}
%\put(9.3,5){\oval(1.9,2)[br]}
%\put(10.5,5){\oval(.5,1.5)[tl]}
%\put(10.5,5){\oval(.4,1.5)[tr]}
%\put(11.73,5){\oval(2.05,2)[bl]}
\put(10.5,5.5){\oval(.4,.4)[t]}
\put(10.7,5.5){\line(1,-6){.24}}
\put(10.3,5.5){\line(-1,-4){.3}}
\put(9.3,4.4){\oval(1.4,.8)[b]}
\put(8.3,5.5){\line(1,-4){.3}}
\put(8,5.5){\oval(.6,.4)[tr]}

\thicklines \put(8,7.5){\line(1,0){4}}
\put(8,7.5){\circle*{.3}}
\put(9.3,7.5){\circle*{.3}} \put(10.5,7.5){\circle*{.3}}
\put(7.5,6.8){$\sigma_{1/3}$}
\put(9.1,6.8){$\sigma_{\lambda_{min}}$}
\put(10.3,6.8){$\sigma_{\lambda_{max}}$}
%\put(12.2,7.3){along $\ell_{GB}$ at $\beta_c$}
%\put(12.2,8){the values of $\pi(\Omega_{\ell_{\lambda}})$}
\put(12.2,8.8){$\beta^*<\frac{3}{2n}$}
\put(8,7.8){\oval(.6,.6)[tr]}
\put(9.1,7.8){\oval(1.6,.6)[bl]}
\put(9.7,8){\oval(1.3,1)[br]}
%\put(10.9,8.5){\oval(.5,3)[tl]}
%\put(10.9,8.5){\oval(.4,3)[tr]}
%\put(12,8.5){\oval(1.8,2)[bl]}
\put(11,9.7){\oval(.4,.4)[t]}
\put(11.2,9.7){\line(1,-6){.36}}
\put(10.8,9.7){\line(-1,-4){.43}}

\thicklines \put(8,.5){\line(1,0){4}}
\multiput(12.2,.4)(0,3.5){3}{$\ell_{GB}$}
\put(8,.5){\circle*{.3}}
\put(9.3,.5){\circle*{.3}}
\put(10.5,.5){\circle*{.3}}
\put(7.5,-.2){$\sigma_{1/3}$}
\put(9.1,-.2){$\sigma_{\lambda_{min}}$}
\put(10.3,-.2){$\sigma_{\lambda_{max}}$}
%\put(12.2,.3){along $\ell_{\lambda}$ at $\beta_c$}
%\put(12.2,1){the values of $\pi(\Omega_{\ell_{\lambda}})$}
\put(12.2,1.8){$\beta_0=0$}
%%\put(8,1.5){\oval(.5,2)[tr]}
%%\put(9.05,1.5){\oval(1.6,2)[bl]}
%\put(9.3,1.5){\oval(1.9,2)[br]}
%\put(10.5,1.5){\oval(.5,2)[tl]}
%\put(10.5,1.5){\oval(.4,2)[tr]}
%\put(11.73,1.5){\oval(2.05,2)[bl]}
%%\put(9.3,1){\oval(1.4,1)[bl]}
%%\put(8.3,2.5){\line(1,-5){.3}}
%%\put(8,2.5){\oval(.59,.4)[tr]}

\put(8,2.19){\oval(.36,.4)[tr]}
\put(8.18,2.3){\line(1,-4){.33}}
\put(9.03,1){\oval(1,1)[bl]}

\end{picture}
\end{center}
\caption{The profile of the probability density function over
$K_{RGB}$}\label{profile}
\end{figure}

Denote by
${\ell_{GB}}$, the set of equivalence classes of configurations $\sigma_\lambda = \left(\lambda
n,\frac{(1-\lambda)n}{2}, \frac{(1-\lambda)n} {2}\right)$, for
$\frac{1}{3} \le \lambda \le 1$ i.e., the subset of $K_{RGB}$ with partitions that have an equal number of blue and green vertices (see Figure \ref{profile}).
There exists a constant
$\lambda_{min}$ (which can be found by differentiating the appropriate function), a value of $\lambda$ between the ordered and
disordered modes where $\pi_{\beta^*}(\sigma_{\lambda})$ is
minimized along the line $\ell_{GB}$. Let
$\Omega_{\lambda_{min}}$ be the corresponding set of spin configurations.
%\begin{theorem}\label{thm:lowerconductance}
%Let $\beta_c  = \frac{4\log(2)}{n} <  \beta^* < \frac{3}{2n}$ and let
%$\alpha =\ln\left(\frac{\pi_{\beta^*}
%(\sigma_{\lambda_{min}})}{\pi_{\beta^*}(\sigma_{{\frac{1}{3}}})}\right)$.
%  Then, $\alpha < 0$ and there exists
%$\delta >0$ such that $\Phi_{RGB} \le e^{(\alpha-\delta) n +
%o(n)}.$
%\end{theorem}
Let $\beta_M = \beta^*=\frac{\mu}{n}$ where $\mu$ is a constant such
that $\mu > 4\log(2)$.
Let $A \subseteq \Omega_{RGB}$ be the set of configurations $x$ with
$x_1 \leq \lambda_{min}n$. Let $S =\{(x,i) \
| \ x \in A, \ \beta_0 \le \beta_i \le \beta_c \}$.
Let $B = \{x \in A \ |
\ \exists \ x' \in \Omega_{RGB} \setminus A, \ P(x,x') > 0 \}$ be the boundary of
$A$.
Then, as in \eqref{eq:cond1}, we can bound the conductance of the set $S$ for the tempering chain as
follows.
\begin{eqnarray}\label{eq:conductance}
\Phi_S & \leq &
\frac{\displaystyle\sum_{i = 0}^{M} \displaystyle\sum_{x \in
    B}\pi_{\beta_i}(x)}
{\displaystyle\sum_{i = 0}^{M} \displaystyle\sum_{x \in
    A}\pi_{\beta_i}(x)}
\leq  O(n) \frac{\displaystyle\sum_{i = 0}^{M}
  \pi_{\beta_i}(\Omega_{\lambda_{min}})}
{\displaystyle\sum_{i = 0}^{M} \pi_{\beta_i}(\Omega_{n/3})}
\end{eqnarray}
The second inequality above follows from the fact that the
distribution when restricted to $B$ at every temperature is unimodal and is maximized at
$\Omega_{\lambda_{min}}$.
\begin{lemma}
\label{diff2} Let $\beta^* = \frac{\mu}{n}$ where $\mu > 4\log(2)$.
For $n$ sufficiently large, the continuous function
$\overline{\pi}_{\beta_i}(x) =
\pi_{\beta_i}\left(\lambda_{min}n,(1-\lambda_{min}-x)n,xn
\right)$ has a unique maximum in the range $0 \leq x \leq
1-\lambda_{min}$ at $x=\frac{1-\lambda_{min}}{2}$ for all
$i \in \{1,\ldots,M\}$.
\end{lemma}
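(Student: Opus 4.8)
The plan is to follow the same calculus argument used for Lemma~\ref{diff}. Writing $\overline{\pi}_{\beta_i}(x)$ with the Gamma-function interpolation of the multinomial coefficient and dropping every factor independent of $x$ --- that is, $\Gamma(n)$, $\Gamma(\lambda_{min}n)$, $\overline Z(\beta_i)$, and the $\lambda_{min}^2$ term of the exponent --- maximizing $\overline{\pi}_{\beta_i}$ over $x\in(0,1-\lambda_{min})$ is the same as maximizing
\[
\frac{f(x)}{g(x)}\;=\;\frac{e^{\overline{\beta}_i n^2\left(x^2+(1-\lambda_{min}-x)^2\right)}}{\Gamma(xn)\,\Gamma((1-\lambda_{min}-x)n)}.
\]
Both $f$ and $g$ are smooth and strictly positive on $(0,1-\lambda_{min})$, and $f/g$ is invariant under the reflection $x\mapsto 1-\lambda_{min}-x$, so $x_0 := \tfrac{1-\lambda_{min}}{2}$ is automatically a critical point. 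The whole argument then reduces to showing that $x_0$ is the \emph{only} critical point in the open interval: since $g(x)\to\infty$, hence $f/g\to 0$, as $x$ approaches either endpoint, this forces $x_0$ to be the global maximum of the continuous function.

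To count critical points I would compare $f'/f$ with $g'/g$ (legitimate since $f,g>0$). Directly, $\frac{f'}{f}=2\overline{\beta}_i n^2\big(2x-1+\lambda_{min}\big)$ is affine in $x$, vanishes exactly at $x_0$, and has constant derivative $\big(\tfrac{f'}{f}\big)'=4\overline{\beta}_i n^2\le 4\overline{\beta}^{*}n^2=2\mu n$. Using the series expansion of $\psi=\Gamma'/\Gamma$ (the digamma function), exactly as in the proof of Lemma~\ref{diff}, $\frac{g'}{g}=n\big(\psi(xn)-\psi((1-\lambda_{min}-x)n)\big)$ is strictly increasing, vanishes at $x_0$, and runs from $-\infty$ (at $x\to0^+$) to $+\infty$ (at $x\to(1-\lambda_{min})^-$). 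Differentiating once more and using $\psi'(z)=\sum_{k\ge0}(z+k)^{-2}>\int_0^{\infty}(z+t)^{-2}\,dt=1/z$ together with the AM--HM inequality gives
\[
\Big(\frac{g'}{g}\Big)'=n^2\big(\psi'(xn)+\psi'((1-\lambda_{min}-x)n)\big)>n\Big(\tfrac1x+\tfrac1{1-\lambda_{min}-x}\Big)\ge\frac{4n}{1-\lambda_{min}}.
\]

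To conclude I would use two structural facts about $\lambda_{min}$. First, it lies strictly between the disordered mode at $\lambda=\tfrac13$ and the ordered mode along $\ell_{GB}$, so $1-\lambda_{min}<\tfrac23$. Second, by hypothesis $\beta^{*}$ sits just above the critical value $\beta_c=\tfrac{4\log2}{n}$ --- in the window $4\log2<\mu<3$ in which both modes genuinely persist, which is exactly what makes ``$\lambda_{min}$ between the two modes'' meaningful --- so $\tfrac23<\tfrac2\mu$, whence $1-\lambda_{min}<\tfrac2\mu$. Combining the two displays, $\big(\tfrac{g'}{g}\big)'>\tfrac{4n}{1-\lambda_{min}}>2\mu n\ge\big(\tfrac{f'}{f}\big)'$ for every $x\in(0,1-\lambda_{min})$, so $\tfrac{g'}{g}-\tfrac{f'}{f}$ is strictly increasing there; having a zero at $x_0$, it has no other, i.e.\ $(f/g)'$ is positive to the left of $x_0$ and negative to its right, so $x_0$ is the unique interior critical point and the maximum. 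This holds for every $i\in\{1,\dots,M\}$ since only $\overline{\beta}_i\le\overline{\beta}^{*}$ was used, and the hypothesis ``$n$ sufficiently large'' is needed only to absorb the lower-order Stirling/digamma error terms. The one step I expect to be delicate is this second-derivative comparison: it goes through precisely because $1-\lambda_{min}$ is bounded above by $\tfrac23$ (and $\mu<3$), so that the growing quantity $\big(\tfrac{g'}{g}\big)'$ dominates the constant $\big(\tfrac{f'}{f}\big)'$; everything else is a routine transcription of the digamma estimates from the proof of Lemma~\ref{diff}.
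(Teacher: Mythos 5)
Your proof is correct and follows essentially the same route as the paper's: reduce to the ratio $f/g$, note the critical point at $x_0=\tfrac{1-\lambda_{\min}}{2}$, and establish uniqueness by comparing $(f'/f)'$ with $(g'/g)'$ through the digamma/trigamma series. The one substantive difference is the final comparison, and there your version is actually the more careful one. Keeping the correct scaling you get $(f'/f)'=4\overline{\beta}_i n^2\le 2\mu n$, so the trigamma bound $(g'/g)'>4n/(1-\lambda_{\min})$ dominates only if $1-\lambda_{\min}<2/\mu$, which you supply from $\lambda_{\min}>1/3$ together with $\mu<3$; the paper's proof instead writes $(f'/f)'=4\overline{\beta}_i n$ (dropping a factor of $n$ relative to its own definition of $f$) and treats it as a constant beaten by $4n/(1-\lambda_{\min})$ for large $n$. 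Your extra step is therefore not cosmetic: if one had $1-\lambda_{\min}>2/\mu$, the exponential rate $\tfrac{\mu}{2}\left(x^2+(1-\lambda_{\min}-x)^2\right)-x\ln x-(1-\lambda_{\min}-x)\ln(1-\lambda_{\min}-x)$ would have positive second derivative at $x_0$, making the midpoint a local \emph{minimum} of the interpolated density, so some inequality of this form is genuinely needed. The hypothesis $\mu<3$ is not stated in Lemma~\ref{diff2}, but it is implicit in the setting: for $\mu\ge 3$ the disordered mode at $\lambda=1/3$ is no longer a local maximum along $\ell_{GB}$, so $\lambda_{\min}$ (a minimizer strictly between the two modes) is not defined, and in the regime where the lemma is meaningful one indeed has $\lambda_{\min}>1/3>1-2/\mu$. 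One small quibble: in your argument the comparison $4n/(1-\lambda_{\min})>2\mu n$ holds for every $n$, with no Stirling or digamma error terms to absorb, so the ``$n$ sufficiently large'' hypothesis plays no role in your version (it is only the paper's constant-versus-$\Omega(n)$ phrasing that invokes it).
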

The proof of the result above appears at the end of this subsection.
Rewriting the last expression in (\ref{eq:conductance}), we have

\begin{eqnarray}\label{eq:conductance1}
\Phi_S & \leq & O(n)
\frac{\pi_{\beta_M}(\Omega_{\lambda_{min}})}{\pi_{\beta_M}(\Omega_{n/3})}
\frac{\left( \left(
\frac{\pi_{\beta_{M}}(\Omega_{\lambda_{min}})}
{\pi_{\beta_M}(\Omega_{\lambda_{min}})} \right)+
\left(
\frac{\pi_{\beta_{M-1}}
  (\Omega_{\lambda_{min}})}{\pi_{\beta_M}(\Omega_{\lambda_{min}})}
\right)
+ \cdots+
\left(
\frac{\pi_{\beta_{0}}(\Omega_{\lambda_{min}})}
{\pi_{\beta_M}(\Omega_{\lambda_{min}})}
\right)
\right)}
{
\left( \left( \frac{\pi_{\beta_{M}}(\Omega_{n/3})}{\pi_{\beta_M}(\Omega_{n/3})}
\right)
+
\left( \frac{\pi_{\beta_{M-1}}(\Omega_{n/3})}{\pi_{\beta_M}(\Omega_{n/3})}
\right)
+ \cdots+
\left( \frac{\pi_{\beta_{0}}(\Omega_{n/3})}{\pi_{\beta_M}(\Omega_{n/3})}
\right)
\right) }.
\end{eqnarray}

We use the following properties of the stationary distribution to
bound the conductance.
The first fact is that the stationary weight of the
disordered mode conditioned on being at a particular temperature is
non-decreasing as we decrease $\beta$.
\begin{lemma}\label{inc}
For $i \in \{1,\ldots, M\}$, we have that for some $C>1$,
$\pi_{\beta_{i-1}}(\Omega_{n/3}) > C \pi_{\beta_{i}}(\Omega_{n/3})$.
\end{lemma}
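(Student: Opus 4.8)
The plan is to write $\pi_{\beta_{i-1}}(\Omega_{n/3})/\pi_{\beta_i}(\Omega_{n/3})$ out explicitly and reduce the claim to a comparison between $\overline H$ at the disordered point $(n/3,n/3,n/3)$ — which realizes the global minimum $n^2/3$ of $\overline H$ on the simplex — and the typical value of $\overline H$ under the Gibbs measure at inverse temperature $\beta_{i-1}$. First I would cancel the common multinomial coefficient ${n\choose n/3,n/3,n/3}$ and use $\overline H(n/3,n/3,n/3)=n^2/3$ to obtain
\[
\frac{\pi_{\beta_{i-1}}(\Omega_{n/3})}{\pi_{\beta_i}(\Omega_{n/3})}
= e^{(\overline\beta_{i-1}-\overline\beta_i)\,n^2/3}\cdot\frac{\overline Z_{RGB}(\beta_i)}{\overline Z_{RGB}(\beta_{i-1})}.
\]
Since $\beta_{i-1}<\beta_i$ the first factor is $<1$, so the whole content of the lemma is that the partition function ratio beats it by a constant factor.

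To control that ratio, set $t=\overline\beta_i-\overline\beta_{i-1}>0$; then $\overline Z_{RGB}(\beta_i)/\overline Z_{RGB}(\beta_{i-1})=\E_{\pi_{\beta_{i-1}}}\!\big[e^{t\overline H}\big]\ge e^{t\,\E_{\pi_{\beta_{i-1}}}[\overline H]}$ by Jensen's inequality, so, combining with the display,
\[
\frac{\pi_{\beta_{i-1}}(\Omega_{n/3})}{\pi_{\beta_i}(\Omega_{n/3})}\ \ge\ \exp\!\Big(t\,\big(\E_{\pi_{\beta_{i-1}}}[\overline H]-n^2/3\big)\Big).
\]
Because $\overline H(\sigma)\ge n^2/3$ for every $\sigma$, the exponent is already $\ge0$, which proves the monotone statement $\pi_{\beta_{i-1}}(\Omega_{n/3})\ge\pi_{\beta_i}(\Omega_{n/3})$ (equivalently $\tfrac{d}{d\beta}\log\pi_\beta(\Omega_{n/3})=\tfrac{1}{2}\big(n^2/3-\E_{\pi_\beta}[\overline H]\big)\le0$). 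To upgrade this to a fixed constant $C>1$ I would use two facts: (i) consecutive inverse temperatures satisfy $t=\overline\beta_i-\overline\beta_{i-1}=\Omega(1/n^2)$, since $\beta_M=\beta^*=\Theta(1/n)$ and $M=\Theta(n)$; and (ii) $\E_{\pi_{\beta_{i-1}}}[\overline H]-n^2/3=\Omega(n^2)$ for the inverse temperatures that matter, which makes the exponent $\Omega(1)$ and $C:=e^{\Omega(1)}>1$.

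The only substantive step is (ii), and this is where I expect the real work. The plan is to invoke the simplex free-energy optimization already carried out for Lemma~\ref{lem:exp}: for $\beta$ at least a constant amount above $\beta_c$, the mass of $\pi_\beta$ concentrates, up to an $e^{-\Omega(n)}$ fraction, on the ordered mode $\sigma_{\lambda_{max}}$ with $\lambda_{max}\ge 2/3$, where $\overline H(\sigma_{\lambda_{max}})=n^2\big(\lambda_{max}^2+(1-\lambda_{max})^2/2\big)\ge n^2/2$; hence $\E_{\pi_\beta}[\overline H]\ge n^2/3+\Omega(n^2)$ throughout that range. This is exactly where the first-order coexistence picture is genuinely needed, since near $\beta=0$ the gap $\E_{\pi_\beta}[\overline H]-n^2/3$ is only $\Theta(n)$ and the per-step factor degrades to $1+o(1)$. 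Thus the clean constant-factor estimate really holds for the $\Theta(n)$ inverse temperatures between $\beta_c$ and $\beta^*$ — which is all that is used downstream: telescoping over just those steps gives $\sum_i\pi_{\beta_i}(\Omega_{n/3})/\pi_{\beta_M}(\Omega_{n/3})\ge \pi_{\beta_0}(\Omega_{n/3})/\pi_{\beta_M}(\Omega_{n/3})\ge C^{\Omega(n)}=e^{\Omega(n)}$, which is the estimate \eqref{eq:conductance1} relies on.
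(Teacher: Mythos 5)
Your route is sound, and it is genuinely different from the paper's. The paper compares the two partition sums term by term: it writes the ratio as $\sum_{\sigma}\binom{n}{\sigma_1\,\sigma_2\,\sigma_3}e^{\overline\beta_i(\overline H(\sigma)-\overline H(\sigma_{1/3}))}\big/\sum_{\sigma}\binom{n}{\sigma_1\,\sigma_2\,\sigma_3}e^{\overline\beta_{i-1}(\overline H(\sigma)-\overline H(\sigma_{1/3}))}$ and asserts that $\overline H(\sigma)-\overline H(\sigma_{1/3})\ge cn^2$ for \emph{every} $\sigma\ne\sigma_{1/3}$, so that each term grows by a fixed factor $K>1$; that pointwise bound is false near the centre of the simplex (e.g.\ $\sigma=(n/3+1,n/3,n/3-1)$ gives a difference of $2$), which is why the paper can claim the constant for all $i$. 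You instead use the exact identity (ratio $=e^{-t n^2/3}\,\E_{\pi_{\beta_{i-1}}}[e^{t\overline H}]$ with $t=\overline\beta_i-\overline\beta_{i-1}$) plus Jensen, which gives monotonicity for every $i$ and reduces the constant-factor claim to $\E_{\pi_{\beta_{i-1}}}[\overline H]-n^2/3=\Omega(n^2)$. Your diagnosis that this fails for subcritical $\beta_{i-1}$ is correct: there $\overline H-n^2/3$ is typically $\Theta(n)$ and the supercritical tail has mass $e^{-\Omega(n)}$, so the per-step ratio is $1+\Theta(1/n)$, and with $\Theta(1/n^2)$ spacing the lemma as literally stated cannot hold for small $i$. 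So what you prove is deliberately weaker --- monotonicity everywhere, a per-step constant only on the supercritical steps --- but your check that this suffices is right: the terms $\pi_{\beta_i}(\Omega_{n/3})/\pi_{\beta_M}(\Omega_{n/3})$ never decrease as $i$ decreases, the $\Theta(n)$ steps between $\beta^*$ and $\beta_c$ already make them $e^{\Omega(n)}$, and combined with Lemma \ref{geo} this yields the $e^{-\Omega(n)}$ factor that \eqref{eq:conductance1} requires. In this sense your argument is the defensible version of the lemma, whereas the paper's justification of the uniform-in-$i$ constant does not stand as written.

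Two points to tighten. First, right at $\beta_c$ the ordered mode is only guaranteed a $1/\mathrm{poly}(n)$ fraction of the mass (Lemma \ref{lem:exp} compares single equivalence classes), so the ``up to an $e^{-\Omega(n)}$ fraction'' concentration you invoke should be claimed only for, say, $\beta_i\ge\beta_c+\epsilon/n$; this discards at most a constant fraction of the supercritical steps and leaves $\Theta(n)$ good ones, so the telescoped $e^{\Omega(n)}$ bound is unaffected, but the statement should be restricted accordingly. Second, your step (i) quietly assumes the $\beta_i$ are roughly evenly spaced; you do not need any per-step lower bound on $t$, since telescoping gives $\exp\bigl(\sum_i t_i\,(\E_{\pi_{\beta_{i-1}}}[\overline H]-n^2/3)\bigr)$ and the supercritical steps together span $\overline\beta^*-\overline\beta_c=\Theta(1/n)$, which already produces the $e^{\Omega(n)}$ growth used downstream; only Lemma \ref{geo} needs the upper bound $t_i=O(1/(nM))$ that the paper assumes.
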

\begin{proof}
We have
\begin{align*}
\frac{\pi_{\beta_{i-1}}(\Omega_{n/3})}{\pi_{\beta_{i}}(\Omega_{n/3})} = \frac{\overline{Z}_{RGB}(\beta_i) /\pi_{\beta_{i}}(\Omega_{n/3}) }{\overline{Z}_{RGB}(\beta_{i-1}) / \pi_{\beta_{i-1}}(\Omega_{n/3})} = \frac{\displaystyle\sum_{\sigma \in K_{RGB}} \binom{n}{\sigma_1 \ \sigma_2 \ \sigma_3} e^{\overline  \beta_i (\overline H(\sigma)-\overline H(\sigma_{1/3}))}}{\displaystyle\sum_{\sigma \in K_{RGB}} \binom{n}{\sigma_1 \ \sigma_2 \ \sigma_3} e^{\overline  \beta_{i-1}( \overline H(\sigma)-\overline H(\sigma_{1/3}))}} > C
\end{align*}
for some $C>1$. We obtain the last inequality by arguing as follows.
Since $\overline H(\sigma)$ is minimized at $\sigma_1 = \sigma_2 = \sigma_3 = 1/3$, for each $\sigma \in K_{RGB}$, $\overline H(\sigma) \ge \overline H(\sigma_{1/3})$. In fact, for each $\sigma \ne \sigma_{1/3}$ it is the case that $\overline H(\sigma) - \overline H(\sigma_{1/3}) \ge c n^2$ for some constant $c$, and therefore for $\sigma \ne \sigma_{1/3}$, the ratios $e^{(\overline  \beta_i - \overline  \beta_{i-1})( \overline H(\sigma)-\overline H(\sigma_{1/3}))} > K$ for some constant $K>1$. Moreover, since each of the terms $e^{\beta_i( \overline H(\sigma)-\overline H(\sigma_{1/3}))} > 1$ and $|K_{RGB}| = \Omega(n^2)$, one can find a constant $1< C< K$ such that the inequality above holds.
\end{proof}

%
%\begin{align*}
%\frac{\overline{Z}_{RGB}(\beta_i)}{\overline{Z}_{RGB}(\beta_{i-1})} = \frac{\displaystyle\sum_{\sigma \in K_{RGB}} \binom{n}{\sigma_1 \ \sigma_2 \ \sigma_3} e^{\overline  \beta_i \overline H(\sigma)}}{\displaystyle\sum_{\sigma \in K_{RGB}} \binom{n}{\sigma_1 \ \sigma_2 \ \sigma_3} e^{\overline  \beta_{i-1} \overline H(\sigma)}} 
%>
% \frac{e^{\overline \beta_{i} \overline H(\sigma_{1/3})}}{e^{\overline \beta_{i-1} \overline H(\sigma_{1/3})}}.
%\end{align*}
%Above, the inequality follows from the fact that $\overline H(\sigma)$ is
%minimized at $\sigma_1 = \sigma_2 = \sigma_3 = 1/3$.
%\end{proof}

%\eat{
%Let $\lambda_{max}$ be the value of $\lambda$ for which the function
%$\pi_{\beta^*} (\sigma_{\lambda})$ is maximized and let $\lambda_{*}$ be
%the value where it is minimized, in the range $1/3 \le \lambda <
%\lambda_{max}$. It can be verified by calculation that there are
%constants $\lambda_{max},\lambda_{min}$ for $\mu < \frac{3}{2}$ such that
%$\lambda_{min} > \frac{1}{3}$ and
%$\alpha=\ln(\pi_{\beta_*}(\sigma_{\lambda_{min}})/\pi_{\beta_*}(\sigma_{1/3}))
%<0$. }

\noindent Next, we observe that the height of the disordered mode increases
faster than the
height at $\Omega_{\lambda_{min}}$.

\begin{lemma}\label{geo}There is a constant $d>1$ such that $
\frac{\pi_{\beta_{i-1}}(\Omega_{n/3})}
{\pi_{\beta_{i}}(\Omega_{n/3})}
>
d \cdot \frac{\pi_{\beta_{i-1}}(\Omega_{\lambda_{min}})}
{\pi_{\beta_{i}}(\Omega_{\lambda_{min}})}.$
\end{lemma}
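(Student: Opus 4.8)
The plan is a short explicit computation. Both $\Omega_{n/3}$ and $\Omega_{\lambda_{min}}$ are single equivalence classes $\Omega_\sigma$, so using $\pi_{\beta_i}(\Omega_\sigma) = \binom{n}{\sigma_1 \ \sigma_2 \ \sigma_3}\, e^{\overline\beta_i \overline H(\sigma)}/\overline Z_{RGB}(\beta_i)$ with $\overline H(\sigma)=\sigma_1^2+\sigma_2^2+\sigma_3^2$, I would first write each of the two one-step ratios appearing in the statement as
\[
\frac{\pi_{\beta_{i-1}}(\Omega_\sigma)}{\pi_{\beta_i}(\Omega_\sigma)} \;=\; e^{(\overline\beta_{i-1}-\overline\beta_i)\overline H(\sigma)}\cdot\frac{\overline Z_{RGB}(\beta_i)}{\overline Z_{RGB}(\beta_{i-1})},
\]
exactly as in the proof of Lemma~\ref{inc}. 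Dividing the version for $\sigma=\sigma_{1/3}$ by the version for $\sigma=\sigma_{\lambda_{min}}$, the multinomial coefficients and the partition functions cancel, and what is left is
\[
\frac{\pi_{\beta_{i-1}}(\Omega_{n/3})\,/\,\pi_{\beta_i}(\Omega_{n/3})}{\pi_{\beta_{i-1}}(\Omega_{\lambda_{min}})\,/\,\pi_{\beta_i}(\Omega_{\lambda_{min}})} \;=\; \exp\Big((\overline\beta_i-\overline\beta_{i-1})\big(\overline H(\sigma_{\lambda_{min}})-\overline H(\sigma_{1/3})\big)\Big).
\]

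The remaining task is to check that this exponent is bounded below by a fixed positive constant, uniformly in $i$. Since $\sigma_{1/3}=(n/3,n/3,n/3)$ is the unique minimizer of $\overline H$ over the probability simplex, and $\lambda_{min}$ is a constant different from $\tfrac13$, one has $\overline H(\sigma_{\lambda_{min}})-\overline H(\sigma_{1/3}) = n^2\big(\lambda_{min}^2+\tfrac12(1-\lambda_{min})^2-\tfrac13\big) = c\,n^2$ with $c>0$. For the schedule $\beta_i=i\beta^*/M$ used here ($\beta^*=\mu/n$, $M=\Theta(n)$) we have $\overline\beta_i-\overline\beta_{i-1}=\tfrac{\mu}{2nM}>0$, so the exponent equals $\tfrac{\mu c n}{2M}$, which is bounded below by a positive constant independent of $i$ and $n$; taking $d$ to be the exponential of that constant proves the lemma.

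I do not anticipate a real obstacle. The only points requiring a sentence of justification are (i) that $\sigma_{1/3}$ strictly minimizes $\overline H$ over the simplex, which is elementary since $\lambda\mapsto\lambda^2+\tfrac12(1-\lambda)^2$ is strictly convex with minimum at $\tfrac13$; and (ii) that $\lambda_{min}$ is a constant bounded away from $\tfrac13$ for $\mu>4\log 2$, which is the same fact about the profile of $\pi_{\beta^*}$ along $\ell_{GB}$ already used in Lemma~\ref{diff2} and in the definition of $\lambda_{min}$, and it is needed precisely so that $c$ is an $n$-independent positive constant rather than an $o(1)$ quantity.
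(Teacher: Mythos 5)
Your proposal is correct and follows essentially the same route as the paper: cancel the multinomial and partition-function factors so the double ratio collapses to $\exp\bigl((\overline\beta_i-\overline\beta_{i-1})(\overline H(\sigma_{\lambda_{min}})-\overline H(\sigma_{1/3}))\bigr)$, then note the Hamiltonian gap is $\Theta(n^2)$ (since $\lambda_{min}$ is a constant away from $1/3$) while the temperature spacing is $\Theta(1/(nM))$ with $M=\Theta(n)$, so the exponent is bounded below by a positive constant. Your explicit remarks on the uniform schedule and on $\lambda_{min}$ being bounded away from $1/3$ only make explicit what the paper states in passing.
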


\begin{proof}
Expanding the terms shows that
\begin{eqnarray*}
\frac
  {
    \pi_{\beta_{i-1}}(\Omega_{n/3}) /
    \pi_{\beta_{i}}(\Omega_{n/3})}
  {
    \pi_{\beta_{i-1}}(\Omega_{\lambda_{min}}) /
    \pi_{\beta_{i}}(\Omega_{\lambda_{min}})} =
\frac
  {
    \pi_{\beta_{i-1}(\Omega_{n/3})} /
    \pi_{\beta_{i-1}(\Omega_{\lambda_{min}})}}
  {
    \pi_{\beta_{i}(\Omega_{n/3})} /
    \pi_{\beta_{i}(\Omega_{\lambda_{min}})}} =
e^{(\overline  \beta_i-\overline  \beta_{i-1})(\overline H(\Omega_{\lambda_{min}})-\overline H(\Omega_{n/3}))}
\end{eqnarray*}

\noindent Recall that
$\overline  \beta_i-\overline  \beta_{i-1} = O(\frac{1}{nM})$ while
$H(\sigma_{\lambda_{min}})-H(\sigma_{1/3})=\Omega(n^2),$ since $\lambda_{min}$
is a constant. The claim follows since $M = \Theta(n)$.
 \end{proof}
 
%The following easy claim can be checked by calculation.
%\begin{lemma}\label{expo}For $\beta^* > \frac{4\log(2)}{n}$, 
%$\pi_{\beta^*}(\Omega_{n/3})$ is exponentially smaller
%than $\pi_{\beta^*}(\Omega_{2n/3})$
%\end{lemma}

%\begin{proof} The claim follows by the setting $\beta^* > \frac{4\log(2)}{n}$
%\begin{eqnarray*}
%&& \frac{\pi_{\beta^*}(\Omega_{n/3})}
%     {\pi_{\beta^*}(\Omega_{2n/3})} \ = \
%\frac{e^{\beta^*\frac{n^2}{3}-n\ln\frac{n}{3}}}
%     { e^{\beta^*\frac{n^2}{2}-\left( \frac{2n}{3}\ln\frac{2n}{3} +
%     \frac{n}{3} \ln\frac{n}{6} \right)}} \ = \
%e^{-\beta^*\frac{n^2}{6} + \frac{n}{3}\ln(2)}.
%\end{eqnarray*}
% \end{proof}

%A corollary of Lemma \ref{expo} is that at the inverse temperature
%$\beta^*$ the weight of the set $\Omega_{n/3}$ is an
%exponentially small fraction of the total weight. On the other hand,
%we know that at $\beta_0 = 0$, the weight is at least a polynomial
%fraction of the total weight.
%Therefore, by Lemma \ref{expo}, the sequence of ratios in the denominator of
%(\ref{eq:conductance1}) grows
%from $1$ to at least
%$d_1^n$ for some constant $d_1 > 1$ in $M=O(n)$ terms.
%Let $d_2$ be the smallest constant by which any two consecutive
%terms of the sequence in the denominator differ.
%By Lemma \ref{inc}, and the previous statement, $d_2 > 1$.

By Lemma \ref{geo}, the rate of increase of terms in the series
in the denominator of (4) is at least a constant, $d > 1$, times the rate of
increase of terms in the series in numerator. Combining with Lemma \ref{inc} and using the fact that $M = \Theta(n)$, for some constants $d_3>0$ and $d_2 > 1$, \eqref{eq:conductance1} implies
\begin{eqnarray*}
\Phi_{S}  & \leq  &
O(n)\frac{\pi_{\beta_M}(\Omega_{\lambda_{min}})}{\pi_{\beta_M}(\Omega_{n/3})}
\frac{\left(1+ \left( \frac{d_2}{d}\right) + \ldots + \left(
  \frac{d_2}{d}\right)^{d_3 n}
  \right)}{1+d_2+\cdots+ d_2^{d_3 n}} \\
 & \leq &
O(n)\frac{\pi_{\beta_M}(\Omega_{\lambda_{min}})}
{\pi_{\beta_M}(\Omega_{n/3})}(\min(d_2,d))^{-d_3 n}.
\end{eqnarray*}
Proposition \ref{prop:lowerconductance} follows by setting $\delta =
\frac{d}{3}\ln(\min(d_2,d))$.

\begin{proof}[Proof of Lemma \ref{diff2}]

As in the proof of Lemma \ref{diff}, we define 
$$
\overline{\pi}_{i} (x) =    \frac{\Gamma(n)}{\Gamma(\lambda_{\min}n)\Gamma(xn)\Gamma\left( \left(1-\lambda_{\min} - x\right)n\right)} 
\frac{e^{\overline \beta_i n^2 \left( \lambda_{\min}^2+x^2+ \left(
1-\lambda_{\min}-x \right)^2  \right) }}{\overline Z(\beta_i)}, \\ \ \ \ \ \ \ \  \ \ x \in (0,1-\lambda_{\min}).
$$

\noindent Neglecting the factors that are not dependent on $x$ we can write the function that we would like to maximize as
\begin{eqnarray*}
& \frac{f(x)}{g(x)} = & \frac{e^{\overline \beta_i n^2 \left( x^2+ \left(
1-\lambda_{\min}-x \right)^2  \right) }}{\Gamma(xn)\Gamma\left( \left(1-\lambda_{\min} - x\right)n\right)} 
\end{eqnarray*}
and show that the unique maximum is at $x = (1-\lambda_{\min})/2$.
To test the sign of the derivative $\left( \frac{f(x)}{g(x)} \right)'$, we
compare the quantities $\frac{f'}{f}$ and
$\frac{g'}{g}$ since both $f$ and $g$ are positive in the interval $(0,1-\lambda_{\min})$. It can be verified that $$\frac{f'}{f} = \overline \beta_in(4x-2(1-\lambda_{\min}))$$ and $$\frac{g'}{g} =n \left( -\sum_{k=1}^\infty \frac{1}{k+xn-1} + \sum_{k=1}^\infty \frac{1}{k+ (1-\lambda_{\min}- x)n - 1} \right).$$
Thus, there is a stationary point at $x=\frac{1-\lambda_{\min}}{2}$ since $\frac{f'}{f} = 0 = \frac{g'}{g}$.
We will argue that this is the unique stationary point in $(0,1-\lambda_{\min})$.
Using the integral approximation bound $\sum_{k=cn}^{\infty} \frac{1}{k^2} \ge 1/cn$, we have
\begin{align*}
\left( \frac{f'}{f}\right)' & = 4 \overline \beta_i n,  \ \ \ \ \ \ \ \mathrm{and } \\
\left( \frac{g'}{g}\right)' & = n^2 \left( \sum_{k=1}^\infty  \frac{1}{(k+xn-1)^2} +  \sum_{k=1}^\infty  \frac{1}{(k+(1-\lambda_{min}- x)n-1)^2} \right) \\ 
& \ge n^2 \left( \frac{1}{xn} + \frac{1}{(1-\lambda_{min} - x)n}\right) = \frac{n(1-\lambda_{min})}{x(1-\lambda_{min} - x)}.
\end{align*}
Therefore, for $n$ large enough,
\begin{align*}
\left( \frac{g'}{g}\right)' \ge \frac{4n}{1-\lambda_{min}} > 4 \overline \beta^* n \ge 4 \overline \beta_i n = \left( \frac{f'}{f}\right)'
\end{align*}
since $\overline \beta^* = \frac{\mu}{n}$, a constant.
Since the derivative of $\frac{g'}{g}$ is greater than the derivative of $\frac{f'}{f}$ at each point in $((1-\lambda{min})/2,1-\lambda{min})$, there are no stationary points in that interval. A similar argument by symmetry shows there are no stationary points in $(0,(1-\lambda{min})/2)$. \qedhere

\end{proof}

\subsection{Upper bound for the Metropolis Algorithm on $\Omega_{RGB}$. }
The Metropolis Markov chain on $\Omega$ is known to have exponential
mixing time and the same argument also holds on $\Omega_{RGB}$.
We would now like to derive a good upper bound on this mixing time so that we
can compare it to the bound obtained for the tempering chain.
However, bounding the conductance and applying Theorem
\ref{thm:mixing-conductance} will
not be sufficient as the square of the conductance gives too weak
a bound.
Instead, to obtain the best possible lower bound on the spectral gap of the
Metropolis chain, we appeal to the comparison theorem \cite{DS-C}.  We use
this technique to obtain a tight exponential upper bound for the mixing time.
Let $P$ be the Metropolis chain on $\Omega_{RGB}$ with stationary
distribution $\pi=\pi_{\beta^*}$. Then, the second part of Theorem
\ref{thm:slowermixing} is as follows.
\begin{proposition}
\label{prop:metropolis-rapid}
Let $\beta^* = \frac{\mu}{n}$ where $\mu > 4\log(2)$ and let
$\alpha=\ln\left(\pi_{\beta^*}(\Omega_{\lambda_{min}})/\pi_{\beta^*}
(\Omega_{{1/3}})\right) < 0 $. The Markov chain $P$ mixes in
time $O(e^{-\alpha n+o(1)})$.
\end{proposition}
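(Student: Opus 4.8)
The plan is to lower bound the spectral gap $Gap(P)$ directly --- rather than through Cheeger's inequality, whose quadratic dependence on the conductance would only give the useless estimate $O(e^{-2\alpha n})$ --- by decomposing $\Omega_{RGB}$ across its single bottleneck at $\Omega_{\lambda_{min}}$ and applying the decomposition theorem (Theorem~\ref{dcmp}) together with the comparison theorem (Theorem~\ref{thm:refined-compthm}), in the spirit of Sections~\ref{fast} and~\ref{newswap}. Concretely, I would partition $\Omega_{RGB}=\Omega_D\cup\Omega_O$, where $\Omega_D=\{x\in\Omega_{RGB}:\sigma_1(x)\le\lambda_{min}n\}$ is the disordered side (containing $\Omega_{n/3}$) and $\Omega_O=\{x\in\Omega_{RGB}:\sigma_1(x)>\lambda_{min}n\}$ is the ordered side (containing the red mode near $\sigma_{\lambda_{max}}$). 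Single‑spin flips connect the two pieces, so Theorem~\ref{dcmp} gives $Gap(P)\ge\tfrac12\,Gap(\overline P)\,\min(Gap(P_D),Gap(P_O))$, where $\overline P$ is the projection onto $\{D,O\}$ and $P_D,P_O$ are the restrictions.

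The gain over Cheeger comes from the projection step: $\overline P$ is a two‑state reversible chain, hence $Gap(\overline P)=\overline P(D,O)+\overline P(O,D)\ge\overline P(D,O)=F_{\Omega_D}/\pi_{\beta^*}(\Omega_D)$. For the numerator I would keep only the crossing moves out of the single class $\Omega_{\lambda_{min}}$ on $\ell_{GB}$: recoloring one of the $\Theta(n)$ non‑red vertices to red is proposed with probability $\tfrac1{3n}$ and accepted with probability $\Omega(1)$, since a single flip changes $\overline H$ by $O(n)$ and hence $\pi_{\beta^*}$ by a factor $e^{\overline\beta_i\, O(n)}=e^{O(\mu)}=\Theta(1)$; thus $F_{\Omega_D}\ge\Omega(1)\cdot\pi_{\beta^*}(\Omega_{\lambda_{min}})$. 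For the denominator, $\Omega_D$ is a union of $O(n^2)$ classes $\Omega_\sigma$ on which --- by the unimodality established in Lemmas~\ref{diff} and~\ref{diff2}, with the cross‑sections perpendicular to $\ell_{GB}$ unimodal and the restriction to $\ell_{GB}$ increasing from $\sigma_{\lambda_{min}}$ to $\sigma_{1/3}$ --- the density is maximized at $\sigma_{1/3}$, so $\pi_{\beta^*}(\Omega_D)\le n^{O(1)}\pi_{\beta^*}(\Omega_{n/3})$. Combining, $Gap(\overline P)\ge n^{-O(1)}\,\pi_{\beta^*}(\Omega_{\lambda_{min}})/\pi_{\beta^*}(\Omega_{n/3})$.

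Next I would show $\min(Gap(P_D),Gap(P_O))\ge n^{-O(1)}$: each restriction is a Metropolis chain confined to a region carrying a single mode of the Gibbs density ($\sigma_{1/3}$ for $\Omega_D$, the red mode for $\Omega_O$), and one can treat it by the same nested decomposition--comparison used for the restriction chains elsewhere in the paper. Decompose each restriction further into the classes $\Omega_\sigma$; inside a fixed class the chain is a slowdown of the interchange process on the complete graph on $n$ vertices, which mixes in $O(n\log n)$ (cf.\ the proof of Theorem~\ref{thm:fast-tempering}); the induced projection moves on a two‑dimensional region of $\sigma$‑values on which the density is unimodal (Lemmas~\ref{diff},~\ref{diff2}), so a canonical‑paths comparison against the chain that resamples each $\sigma$‑coordinate at stationarity --- as in Lemmas~\ref{lem:poly-mixing-fixed-trace} and~\ref{rapid} --- shows it has polynomial conductance and hence, by Theorem~\ref{thm:mixing-conductance}, polynomial gap; one more application of Theorem~\ref{dcmp} inside each piece yields $Gap(P_D),Gap(P_O)=n^{-O(1)}$.

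Putting the three estimates together gives $Gap(P)\ge n^{-O(1)}\,\pi_{\beta^*}(\Omega_{\lambda_{min}})/\pi_{\beta^*}(\Omega_{n/3})$, and since $\pi^*:=\min_x\pi_{\beta^*}(x)\ge e^{-O(n)}$, Theorem~\ref{thm:spec_gap_thm} gives $\tau(\varepsilon)\le\frac{1}{Gap(P)}\log\frac1{\pi^*\varepsilon}\le n^{O(1)}\,\frac{\pi_{\beta^*}(\Omega_{n/3})}{\pi_{\beta^*}(\Omega_{\lambda_{min}})}\bigl(O(n)+\log(1/\varepsilon)\bigr)$, which for fixed $\varepsilon$ is the claimed $O(e^{-\alpha n+o(1)})$ bound once the polynomial corrections are absorbed. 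The main obstacle is the restriction estimate $\min(Gap(P_D),Gap(P_O))\ge n^{-O(1)}$: it requires upgrading the one‑dimensional unimodality of Lemmas~\ref{diff} and~\ref{diff2} along $\ell_{GB}$ to genuine unimodality of the interpolated Gibbs density on each side of the cut in the full two‑dimensional simplex, so that the canonical paths can be routed ``uphill toward the mode'' with only polynomial congestion; everything else is routine bookkeeping of polynomial factors.
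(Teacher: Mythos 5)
Your proposal is correct in spirit but follows a genuinely different route from the paper. The paper never decomposes $P$ at the saddle: it introduces the flattened measure $\widetilde{\pi}$, which replaces the entire disordered well by a plateau at height $\pi_{\beta^*}(\Omega_{\lambda_{min}})$, proves that the Metropolis chain $\widetilde{P}$ for $\widetilde{\pi}$ mixes in polynomial time (Theorem \ref{flatdist}, via decomposition into the classes $\Omega_\sigma$, the interchange process inside a class, and canonical paths over $K_{RGB}$ whose unimodality is Lemma \ref{lem:unimodal-canpaths} through Claims \ref{d1}--\ref{d2}), and then applies the refined comparison theorem (Theorem \ref{thm:refined-compthm}) with identical kernels and length-one paths, so the exponential cost enters exactly once through $a=\min_x \widetilde{\pi}(x)/\pi_{\beta^*}(x)$, which is the saddle-to-disordered ratio up to polynomials. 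Your route instead splits $\Omega_{RGB}$ at $\sigma_1=\lambda_{min}n$ and uses the fact that the two-state projection in Theorem \ref{dcmp} has gap equal to the bottleneck ratio rather than its square; this is an equally legitimate way to dodge the quadratic Cheeger loss, and your estimates $F_{\Omega_D}\ge \Omega(1)\,\pi_{\beta^*}(\Omega_{\lambda_{min}})$ and $\pi_{\beta^*}(\Omega_D)\le n^{O(1)}\pi_{\beta^*}(\Omega_{n/3})$ produce the same exponential factor linearly. What your approach buys is that no auxiliary measure is needed and the source of the exponential factor is transparent; what it costs is that the hard step is doubled and not quite ``routine'': you must establish polynomial mixing of the true Gibbs dynamics restricted to each side of the cut, which is precisely the work the paper does once for $\widetilde{\pi}$ (two-dimensional unimodality along the canonical paths, the parity issue handled by passing to the two-step chain, Claims \ref{d1}--\ref{d2}), and in addition you need the two-dimensional statement that within $\{\sigma_1\le\lambda_{min}n\}$ the class weight is maximized (up to polynomial factors) at $\sigma_{1/3}$ for $\beta^*>\beta_c$ in the metastable regime --- a fact morally equivalent to, but not literally contained in, Lemmas \ref{diff}, \ref{diff2} and \ref{lem:exp}, and which must be proved by the same kind of computation. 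With those ingredients supplied, your assembly via Theorems \ref{dcmp} and \ref{thm:spec_gap_thm} does yield the claimed $O(e^{-\alpha n+o(n)})$ bound.
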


The idea behind showing the mixing time claimed in Proposition
\ref{prop:metropolis-rapid} is to
define a new distribution $\widetilde{\pi}$ on $\Omega_{RGB}$ by effectively
eliminating the disordered mode. The Metropolis chain $\widetilde{P}$ is
defined on $\Omega_{RGB}$ with stationary distribution
$\widetilde{\pi}$. We will show that the mixing time of
$\widetilde{P}$ is at most a polynomial. The comparison theorem then
gives the required upper bound on
the mixing time of the Metropolis chain $P$ in Proposition
\ref{prop:metropolis-rapid}.
Let $$K := \{\sigma \in \Omega_{RGB} : \sigma_1 < \lambda_{min}n~\hbox{\ and \ }
\pi_{\beta^*}(\Omega_{\sigma}) \ge \pi_{\beta^*}(\Omega_{\lambda_{min}})\}.$$
For $\sigma \in K_{RGB}$ define

\begin{eqnarray*}
\widetilde{\pi}(\Omega_\sigma)  = \left\{
\begin{array}{ll}
\pi_{\beta^*}(\Omega_{\lambda_{min}})(\beta^*)/\widetilde{Z}
 &~\hbox{\ if \ } \sigma \in K  \\
\pi_{\beta^*}(\Omega_\sigma)(\beta^*)/\widetilde{Z} & \hbox{\ otherwise},
\end{array}
\right.\
\end{eqnarray*}
where
$$\widetilde{Z} =  \displaystyle\sum_{\sigma \in
  K}\pi_{\beta^*}(\Omega_{\lambda_{min}}) +
  \displaystyle\sum_{\sigma \in \Omega_{RGB} \setminus K} \pi_{\beta^*}(\Omega_\sigma)$$
  is the normalizing partition function.

%Let $x_*$ be any configuration in
%$\Omega_{\lambda_{min}}$.  Then, for any $x \in \Omega_{RGB},$ define
%\begin{eqnarray*}
%\widetilde{\pi}(x)= \left\{
%\begin{array}{ll}
%\pi_{\beta^*}(x_{*})/Z
% &~\hbox{\ if \ }~\sigma_1 < t_{*}n~\hbox{\ and \ }
%~\pi_{\beta^*}(\Omega_{\sigma}) \ge \pi_{\beta^*}(\Omega_{\lambda_{min}}),  \\
%\pi_{\beta^*}(x) & \hbox{\ otherwise},
%\end{array}
%\right.\
%\end{eqnarray*}

For a configuration $x \in \Omega_{RGB}$, we define
$\widetilde{\pi}(x)$ to be uniform
over all the configurations in the same equivalence class, i.e., if
$x$ is in the equivalence class $\sigma$
$$
\widetilde{\pi}(x) = {n \choose \sigma_1 \sigma_2 \sigma_3}^{-1}
\widetilde{\pi}(\Omega_\sigma).
$$

The first step is to show that $\widetilde{P}$, the Metropolis chain on
the flattened distribution, mixes in polynomial time.  This will follow
from an application of the decomposition theorem
\cite{MarR} (see below).
The second step will be to use this bound and the
comparison theorem to bound the mixing time of the chain on the original
unflattened space.
This mixing time of $\widetilde{P}$ will be a lower order term when we
compare it to
the mixing time of $P$, which is exponential. Thus, any polynomial
bound on the mixing rate of $\widetilde{P}$ will suffice.

\begin{theorem}\label{flatdist}
The Markov chain $\widetilde{P}$ with stationary distribution
$\widetilde{\pi}$ mixes in polynomial time.
\end{theorem}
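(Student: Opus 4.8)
The plan is to run the decomposition theorem, Theorem~\ref{dcmp}, over the partition of $\Omega_{RGB}$ into the equivalence classes $\{\Omega_\sigma : \sigma\in K_{RGB}\}$, exactly as in the proof of Theorem~\ref{thm:fast-tempering}. A single Metropolis step changes $\sigma$, so each class is not internally connected by $\widetilde P$; as there, I would pass to the two-step chain $\widetilde P^2$, whose restriction to $\Omega_\sigma$ contains the move ``pick two vertices uniformly and exchange their spins''. The ratio of $\widetilde\pi$-weights of neighbouring classes is $n^{O(1)}$, so all the Metropolis acceptance probabilities involved are bounded below by an inverse polynomial and this restriction chain is, up to a polynomial slowdown, the interchange process on $K_n$, hence mixes in time $n^{O(1)}$ (see \cite[Chapter~14]{AldFil}). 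Since $\mathrm{Gap}(\widetilde P^2) = (1-|\lambda_1|)(1+|\lambda_1|)\le 2\,\mathrm{Gap}(\widetilde P)$, a polynomial lower bound on $\mathrm{Gap}(\widetilde P^2)$ gives one on $\mathrm{Gap}(\widetilde P)$, and then Theorem~\ref{thm:spec_gap_thm} (using $\log(1/\widetilde\pi_{\min}) = n^{O(1)}$, since $\overline{\beta^*}n = O(1)$) yields the claimed polynomial mixing time. So everything reduces to bounding the spectral gap of the projection chain $\overline{\widetilde P^2}$ on the polynomial-size state space $K_{RGB}$, whose stationary measure is $\sigma\mapsto\widetilde\pi(\Omega_\sigma)$.

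The whole point of the flattening is that this projection has no density valley. Capping every class of $K$ at the common value $\pi_{\beta^*}(\Omega_{\lambda_{min}})$ turns the disordered basin into a flat plateau, while outside $K$ the weight equals the Gibbs weight, proportional to $\binom{n}{\sigma_1\ \sigma_2\ \sigma_3}e^{\overline{\beta^*}(\sigma_1^2+\sigma_2^2+\sigma_3^2)}$, which by the unimodality estimates (Lemma~\ref{diff2}, together with the two-dimensional version of the stationary-point computation in the proof of Lemma~\ref{lem:exp}(iii), now with global maximum at the ordered class $\sigma_{\lambda_{max}}$ rather than at $\sigma_{1/3}$) is increasing along lattice moves that push $\sigma$ toward $\sigma_{\lambda_{max}}$. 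I would use this to show: from every $\sigma\in K_{RGB}$ there is a lattice path in $K_{RGB}$ to $\sigma_{\lambda_{max}}$, of length $O(n)$, along which $\widetilde\pi(\Omega_\cdot)$ is non-decreasing --- constant on the portion inside the plateau, and from the moment it leaves $K$ already at least $\pi_{\beta^*}(\Omega_{\lambda_{min}})$ and increasing thereafter.

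Equipped with these monotone ascending paths I would bound $\mathrm{Gap}(\overline{\widetilde P^2})$ by a canonical-path congestion argument of the same type used for Theorem~\ref{thm:fast-tempering}: route the flow between $\sigma$ and $\sigma'$ along the monotone path from $\sigma$ up to $\sigma_{\lambda_{max}}$ concatenated with the reverse of the monotone path from $\sigma'$ up to $\sigma_{\lambda_{max}}$. On an uphill edge $(z,w)$ one has $\overline{\widetilde P^2}(z,w)\ge n^{-O(1)}$ (Metropolis acceptance $1$, proposal probability at least inverse polynomial); every endpoint $x$ routed through $(z,w)$ sits below $z$ on its own monotone path, so $\widetilde\pi(\Omega_x)\le\widetilde\pi(\Omega_z)$; paths have length $O(n)$; and only $O(n^4)$ ordered pairs use a given edge. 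Hence the congestion through $(z,w)$ is at most $n^{O(1)}\,\widetilde\pi(\Omega_z)\bigl(\max_x\widetilde\pi(\Omega_x)\bigr)\big/\bigl(\widetilde\pi(\Omega_z)\,\overline{\widetilde P^2}(z,w)\bigr) = n^{O(1)}$, giving $\mathrm{Gap}(\overline{\widetilde P^2})\ge n^{-O(1)}$. Feeding this bound and the restriction bound into Theorem~\ref{dcmp} completes the argument.

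The main obstacle is the second step --- verifying that $\widetilde\pi(\Omega_\sigma)$ genuinely is unimodal-with-plateau over the two-dimensional region $K_{RGB}$, so that monotone ascending lattice paths to $\sigma_{\lambda_{max}}$ exist from every $\sigma$. One-dimensional slices are exactly Lemmas~\ref{diff} and \ref{diff2}, but the two-dimensional statement --- that the Gibbs density restricted to $\{\sigma_1\ge\sigma_2\ge\sigma_3\}\cap\{\sigma_1\ge\lambda_{min}n\}$ has no interior stationary point other than $\sigma_{\lambda_{max}}$, and that the cap value $\pi_{\beta^*}(\Omega_{\lambda_{min}})$ is nowhere exceeded on the ascending paths emerging from the plateau --- requires a careful but routine analysis of the partial derivatives of $\binom{n}{\sigma_1\ \sigma_2\ \sigma_3}e^{\overline{\beta^*}(\sigma_1^2+\sigma_2^2+\sigma_3^2)}$, along the lines of the proof of Lemma~\ref{diff}.
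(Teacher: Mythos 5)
Your overall architecture is the same as the paper's: partition $\Omega_{RGB}$ into the classes $\Omega_\sigma$, $\sigma\in K_{RGB}$; pass to the two-step chain $\widetilde P^2$ because single Metropolis steps always change $\sigma$; handle the restriction chains by comparison with the interchange process on the complete graph; and reduce everything to a canonical-path bound for the projection chain on $K_{RGB}$. The bookkeeping you do around this (relating $\mathrm{Gap}(\widetilde P^2)$ to $\mathrm{Gap}(\widetilde P)$, $\log(1/\widetilde\pi_{\min})=O(n)$, inverse-polynomial transition probabilities of the projection, $n^{O(1)}$ pairs per edge) is fine.

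The divergence, and the gap, is in how the projection is analyzed. The paper does \emph{not} prove, or need, a two-dimensional ``unimodal with plateau'' statement for $\widetilde\pi$ on $K_{RGB}$. It routes each pair $(\sigma,\tau)$ along an explicit three-leg path: along the line $\ell_{t_1}$ to the equal-blue-green line $\ell_{GB}$, along $\ell_{GB}$, then along $\ell_{t_2}$; the only monotonicity inputs are two one-dimensional claims (maximality at $\sigma_2=\sigma_3$ on each line $\ell_t$, proved as in Lemma~\ref{diff2}, and unimodality of the \emph{flattened} measure along $\ell_{GB}$), from which unimodality of $\widetilde\pi$ along each canonical path follows, and that alone bounds the congestion. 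Your route instead sends every class uphill to $\sigma_{\lambda_{max}}$, which requires exactly the global statement you flag as ``the main obstacle'': that from every $\sigma$ there is an $O(n)$-length lattice path along which $\widetilde\pi(\Omega_\cdot)$ is non-decreasing. That claim is the real content of the theorem and is not routine as stated: you need (a) lattice-connectivity of the plateau $K$ with polynomial diameter; (b) the observation that, since the cap value $\pi_{\beta^*}(\Omega_{\lambda_{min}})$ is the \emph{maximum} of the Gibbs weight on the whole line $\sigma_1=\lambda_{min}n$, every non-decreasing exit from the plateau must funnel essentially through the single crossing point of $\ell_{GB}$, so ascent paths exist only if the plateau routes to that point; (c) absence of any other discrete local maximum in the ordered region; and (d) the parity/diagonal issues of the triangular lattice (the paper devotes a separate argument to diagonal moves precisely because paths along $\ell_{GB}$ change $\sigma_1$ in steps of two). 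None of this is supplied in your proposal. It is plausible the claim can be pieced together from the same one-dimensional slicing the paper uses, but as written the proof of the crucial step is missing --- and the paper's axis-aligned path construction is best understood as the device that makes the two-dimensional analysis unnecessary.
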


To apply the decomposition theorem here, we partition the space
$\Omega_{RGB}$ according to the equivalence
classes of configurations, i.e. into the space $K_{RGB}$. Informally,
%by the decomposition theorem,
%the mixing rate of a Markov chain on $\Omega$ is at most on the order
%of the product of the mixing rate of the
%chain restricted to $\Omega_{\sigma}$ (the {\it restrictions}) and the
%mixing rate of the chain on the quadratic sized set $K_{RGB}$ (the
%{\it projection}).
It will be simpler to bound the mixing time of
$Q=\widetilde{P}^2$, the two step transition matrix that allows
moves of length 0, 1 or 2. We can then infer the
polynomial mixing of $\widetilde{P}$ from the polynomial mixing of
$Q$. It is easy to see that $Q$ is polynomially mixing when restricted to
$\Omega_\sigma$, for any $\sigma$, because two-step moves permute
the colors on the vertices without changing the total number of
each and the mixing time can be bounded by that of an interchange process \cite[Chapter 14]{AldFil}. Hence, we focus on showing the bound on projection Markov chain
$\overline{Q}$. We will use the canonical path method.

\begin{theorem}\label{thm:Q-projection-rapid}
The Markov chain $\overline{Q}$ on $K_{RGB}$ is
polynomially mixing.
\end{theorem}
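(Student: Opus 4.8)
The plan is to prove polynomial mixing of $\overline Q$ by a canonical‑path/congestion argument, comparing it to the trivial ``independence'' chain $R$ on $K_{RGB}$ defined by $R(\sigma,\sigma')=\widetilde\pi(\Omega_{\sigma'})$, which reaches $\widetilde\pi$ in a single step and has $Gap(R)=1$. I apply Theorem~\ref{thm:refined-compthm} with $P_1=\overline Q$ and $P_2=R$; since the two chains share the stationary distribution $\widetilde\pi$ the constant $a$ equals $1$, and $E(R)$ is the complete graph on $K_{RGB}$, so it suffices to route, for every ordered pair of classes $(\sigma,\sigma')$, a canonical path in $\overline Q$ and bound the resulting congestion parameter $A$ by a polynomial; then $Gap(\overline Q)\ge Gap(R)/A=n^{-O(1)}$ and Theorem~\ref{thm:spec_gap_thm} finishes the proof.

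The structural fact making this work is that capping the disordered mode at the valley height $p_{\min}:=\pi_{\beta^*}(\Omega_{\lambda_{\min}})$ destroys the only exponential bottleneck: the marginal $\widetilde\pi(\Omega_\sigma)$ is, up to a fixed $n^{O(1)}$ factor, \emph{unimodal} on $K_{RGB}$ with peak at the red‑ordered class $\sigma^*:=\sigma_{\lambda_{\max}}$, nondecreasing as one moves from the disordered region up through $\sigma_{\lambda_{\min}}$ toward $\sigma^*$. Making this precise needs the following inputs, all obtained by the $\Gamma$‑function / stationary‑point computations already used in Lemmas~\ref{diff}--\ref{diff2}: (a) $\sigma^*$ maximizes $\widetilde\pi(\Omega_\sigma)$ over all of $K_{RGB}$ — not merely along $\ell_{GB}$ — and along each line of fixed $\sigma_1$ the density is unimodal, maximized on the boundary $\sigma_2=\sigma_3$; (b) for lattice‑adjacent classes $\sigma,\sigma'$ one has $\widetilde\pi(\Omega_\sigma)/\widetilde\pi(\Omega_{\sigma'})=n^{O(1)}$, which is immediate for the uncapped values, trivial inside the capped plateau, and at the boundary of $K$ harmless since there the uncapped value is within a constant of $p_{\min}$; and (c) $\overline Q(\sigma,\sigma')\ge n^{-O(1)}\min\bigl(1,\widetilde\pi(\Omega_{\sigma'})/\widetilde\pi(\Omega_\sigma)\bigr)$ for adjacent classes, because a two‑step move of $\widetilde P$ interchanging the colours of two suitably chosen vertices realizes this transition with probability $\Theta(n^{-2})$ ahead of the Metropolis factor.

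For the path from $\sigma$ to $\sigma'$ I take the concatenation $\gamma_{\sigma\to\sigma^*}\circ\overline{\gamma_{\sigma'\to\sigma^*}}$, where $\gamma_{\tau\to\sigma^*}$ is a fixed monotone lattice path of length $O(n)$ from $\tau$ to $\sigma^*$: first transfer mass between the $\sigma_2$ and $\sigma_3$ coordinates (at fixed $\sigma_1$) to reach $\ell_{GB}$, then drive $\sigma_1$ monotonically to $\lambda_{\max}n$, zig‑zagging just above $\ell_{GB}$ so as to stay inside $K_{RGB}$ and keeping $\sigma_2,\sigma_3$ balanced. By input (a) the value $\widetilde\pi(\Omega_\cdot)$ is nondecreasing along this path up to the universal $n^{O(1)}$ slack of (b); the only place a genuine decrease could occur is inside the flattened plateau, where it is constant. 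To bound $A$, fix a directed edge $e=(\tau,\tau')$ of $\overline Q$ and split the pairs routed through $e$ according to whether $e$ is traversed on an ``uphill'' ($\widetilde\pi$ increasing) or ``downhill'' segment; treating the uphill case (the other is symmetric, using the opposite direction of $e$), every such pair has an endpoint lying in the set $U_e$ of classes whose path to $\sigma^*$ uses $e$, so by monotonicity $\widetilde\pi(\Omega_\upsilon)\le n^{O(1)}\widetilde\pi(\Omega_\tau)$ for all $\upsilon\in U_e$, and $|U_e|\le|K_{RGB}|=O(n^2)$. Hence, with $\sum_{\sigma'}\widetilde\pi(\Omega_{\sigma'})=1$,
\[
\sum_{(\sigma,\sigma'):\,e\in\gamma_{\sigma\sigma'}}|\gamma_{\sigma\sigma'}|\,\widetilde\pi(\Omega_\sigma)\widetilde\pi(\Omega_{\sigma'})\;\le\;O(n)\cdot 2\Bigl(\sum_{\upsilon\in U_e}\widetilde\pi(\Omega_\upsilon)\Bigr)\Bigl(\sum_{\sigma'}\widetilde\pi(\Omega_{\sigma'})\Bigr)\;\le\;n^{O(1)}\,\widetilde\pi(\Omega_\tau),
\]
while $\widetilde\pi(\Omega_\tau)\overline Q(\tau,\tau')\ge n^{-O(1)}\widetilde\pi(\Omega_\tau)$ by (c) and the uphill assumption, giving $A=n^{O(1)}$.

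The main obstacle is precisely input (a): establishing that, once the disordered mode has been flattened, the density on the full two‑dimensional region $K_{RGB}$ has a single peak at $\sigma_{\lambda_{\max}}$ with no residual secondary barrier off the line $\ell_{GB}$. This is where one needs the curvature estimates on the digamma‑type sums, carried out exactly as in the proofs of Lemmas~\ref{diff} and~\ref{diff2}; everything else is the routine bookkeeping of canonical paths on an $O(n^2)$‑vertex state space.
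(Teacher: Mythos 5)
Your proposal is correct and is essentially the paper's own argument: a canonical-path/congestion bound for the projection chain on the $O(n^2)$ classes of $K_{RGB}$, with the congestion controlled by unimodality of the flattened measure $\widetilde{\pi}$ along the lines $\ell_t$ (maximized at $\sigma_2=\sigma_3$) and along $\ell_{GB}$ (single peak at $\sigma_{\lambda_{\max}}$, constant on the capped plateau), which is exactly the content of the paper's Claims~\ref{d1} and~\ref{d2} and Lemma~\ref{lem:unimodal-canpaths}, including your handling of the parity/diagonal steps. The only cosmetic differences are that you route every pair through the peak $\sigma_{\lambda_{\max}}$ and compare against the independence chain, whereas the paper uses direct $\sigma\to\tau$ paths and unimodality along each path; your global-maximum claim in input (a) is slightly more than is needed, since the leg-wise monotonicity you actually use already follows from the line-wise statements.
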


\begin{proof}
For $\sigma \in \Omega_{\sigma}$ and $\tau \in \Omega_\tau$, define the canonical path  $\gamma_{\sigma \tau}$ as follows: Let $\sigma=(t_1, b_1 , g_1 )$ and
$ \tau=(t_2, b_2 , g_2 )$. Assume that $t_1 \ge t_2$. If not, the path from $\sigma$
to $ \tau$ consists of the same vertices as the path from $ \tau$ to $\sigma$
but with all edges directed
oppositely.

We define the canonical path
for $t_1$ odd and $t_2$, the other case only needs a minor technical
modification due to parity issues. Assume (without loss of generality
by the symmetry of the colors blue and green) $b_1 \le g_1$ and
$b_2 \ge g_2$.
The path $\gamma_{\sigma \tau}$ is defined to be
$(t_1,b_1,g_1),(t_1,b_1+1,g_1-1),\ldots,(t_1,\frac{n-t_1-1}{2},
\frac{n-t_1+1}{2}),(t_1-1,\frac{n-t_1+1}{2},\frac{n-t_1+1}{2}),
(t_1-3,\frac{n-t_1+3}{2},\frac{n-t_1+3}{2}),  \ldots
(t_2,\frac{n-t_2}{2},\frac{n-t_2}{2}),\ldots,(t_2,b_2-1,g_2+1),
(t_2,b_2,g_2)$. It can be shown that along the path,
the values of the distribution are unimodal, i.e.,

\begin{lemma}\label{lem:unimodal-canpaths} For each $\sigma =
  (t_1,b_1,g_1), \tau = (t_2,b_2,g_2)
  \in K_{RGB}$, the distribution
  $\widetilde{\pi}$ attains a unique maximum on the path
  $\gamma_{\sigma,\tau}$.
\end{lemma}
We defer the proof till the end of this argument.
Assuming the lemma, the congestion of the paths can be bounded as follows.

\begin{eqnarray*}
A \ & = & \ \max_{(\alpha,\beta)\in E(\overline{Q})}
\left\{{{1}\over{\widetilde{\pi}(\Omega_\alpha)
      {P}^2(\Omega_\alpha,\Omega_\beta)}}
\sum_{\Gamma(\alpha,\beta)} |\gamma_{\sigma \tau}|
   \min(\widetilde{\pi}(\Omega_\sigma),\widetilde{\pi}(\Omega_\tau))
 \right\} \\
& = & \max_{(\alpha,\beta)\in E(\overline{Q})} \left\{{{1}\over{\min(
\widetilde{\pi}(\Omega_\alpha),\widetilde{\pi}(\Omega_\beta))}}
\sum_{\Gamma(\alpha,\beta)}
|\gamma_{\sigma \tau}| \min(\widetilde{\pi}(\Omega_\sigma),\widetilde{\pi}(
\Omega_\tau)) \right\}
\end{eqnarray*}
Since along every canonical path the distribution is unimodal, and the
length of any path
is at most linear in $n$, and there are at most polynomially many
paths $\Gamma(\alpha,\beta)$ using the edge $(\alpha,\beta)$, $A$ is at most a
polynomial in $n$.
\end{proof}

\begin{corollary}The Markov chain $\widetilde{P}$ on $K_{RGB}$ is
  polynomially mixing.
\end{corollary}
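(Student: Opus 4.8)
The plan is to deduce the corollary by feeding Theorem~\ref{thm:Q-projection-rapid} into the decomposition theorem, Theorem~\ref{dcmp}, and then transferring the resulting bound from $Q=\widetilde{P}^2$ back to $\widetilde{P}$. Since $\widetilde P$ is reversible and (being a Metropolis chain with positive holding probability) aperiodic, its eigenvalues are real with $1=\lambda_0>|\lambda_1|$, and the eigenvalues of $Q$ are the $\lambda_i^2\ge 0$; hence the second-largest eigenvalue of $Q$ is $|\lambda_1|^2$ and
\[
Gap(Q)=1-|\lambda_1|^2=(1-|\lambda_1|)(1+|\lambda_1|)\le 2\,Gap(\widetilde P),
\]
so an inverse-polynomial lower bound on $Gap(Q)$ gives one on $Gap(\widetilde P)$, and Theorem~\ref{thm:spec_gap_thm} then yields the polynomial mixing time (the factor $\log(1/\pi^\ast)$ is $O(n)$ since $1/\widetilde\pi(x)\le 3^n/\widetilde\pi_{\min}(\Omega_\sigma)$ and $\widetilde\pi_{\min}(\Omega_\sigma)\ge e^{-O(n)}$). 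So it suffices to bound $Gap(Q)^{-1}$ by a polynomial.

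I would apply Theorem~\ref{dcmp} to $Q$ with the partition of $\Omega_{RGB}$ into the equivalence classes $\{\Omega_\sigma:\sigma\in K_{RGB}\}$, so that the projection chain is exactly $\overline Q$. For the restriction chains: a single step of $\widetilde P$ recolors one vertex and therefore leaves $\Omega_\sigma$, which is precisely why one squares the chain. A two‑step move of $\widetilde P$ that remains in $\Omega_\sigma$ recolors some vertex $u$ from color $a$ to $b$ and some vertex $v$ from $b$ back to $a$, i.e.\ it transposes the colors of $u$ and $v$; since $\widetilde\pi$ is constant on each $\Omega_\sigma$, every such move is accepted with probability $1$. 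Thus $Q$ restricted to $\Omega_\sigma$ is a lazy version of the interchange process on the complete graph $K_n$, whose relaxation and mixing times are $O(n\log n)$ (see e.g.\ \cite[Chapter~14]{AldFil}), so $\min_{\sigma\in K_{RGB}} Gap(Q|_{\Omega_\sigma})^{-1}=n^{O(1)}$.

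Combining this with Theorem~\ref{thm:Q-projection-rapid}, which gives $Gap(\overline Q)^{-1}=n^{O(1)}$, Theorem~\ref{dcmp} yields
\[
Gap(Q)\ \ge\ \tfrac12\,Gap(\overline Q)\,\min_{\sigma\in K_{RGB}} Gap(Q|_{\Omega_\sigma})\ =\ n^{-O(1)},
\]
and the first paragraph completes the argument; this also finishes the proof of Theorem~\ref{flatdist}.

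I do not expect a genuine obstacle here, since the substantive work is already contained in Theorem~\ref{thm:Q-projection-rapid} and the unimodality estimate Lemma~\ref{lem:unimodal-canpaths}, which we are assuming. The only points needing a little care are (i) noting that the one-step restrictions are trivial, so that passing to $Q=\widetilde P^2$ is genuinely necessary, and (ii) checking that the comparison of $Q|_{\Omega_\sigma}$ with the interchange process on $K_n$ is exact — both of which reduce to the observation that $\widetilde\pi$ is constant on each equivalence class, so that no Metropolis weights intervene within a block.
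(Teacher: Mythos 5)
Your argument is correct and is essentially the paper's own (largely implicit) proof: decompose $Q=\widetilde P^2$ over the classes $\Omega_\sigma$, bound the restrictions by the interchange process on the complete graph and the projection by Theorem~\ref{thm:Q-projection-rapid}, then transfer the gap bound from $Q$ back to $\widetilde P$. The only difference is that you spell out the spectral step $Gap(Q)\le 2\,Gap(\widetilde P)$ and the $\log(1/\pi^*)=O(n)$ estimate, which the paper leaves implicit.
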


\noindent{\bf Proof of Lemma \ref{lem:unimodal-canpaths}: } Let
$\ell_{t}$ denote the set of $\sigma \in K_{RGB}$ such that $\sigma_1 =
t$. Let $\ell_{b=g}$ denote the set consisting of configurations where
the number of green and blue vertices are equal. Since the
space is discrete, because of parity considerations, the
  canonical paths cannot simply go along the line $\ell_{t_1}$, then
  along the line $\ell_{GB}$ and finally along $\ell_{t_2}$, except
  in the case that $t_1$ and $t_2$ are both even. For this case, it is
  sufficient to show that firstly, for all $1/3 \leq t \leq 1$, along
  the lines $\ell_t$, the maximum is at
  the intersection with $\ell_{GB}$ and secondly, along the line
  $\ell_{GB}$, the distribution is unimodal. The observation is that
  the second fact implies that on the portion of the canonical path along
  $\ell_{GB}$, the distribution is either
\begin{enumerate}[i)]
\item non-increasing
\item non-decreasing
\item non-decreasing and then
  non-increasing
\end{enumerate}
but not decreasing and then increasing. Then in any of the three cases
above, it can be verified that there is a unique local maximum along
the path.

In the other cases, when either both $t_1$ and $t_2$ are odd, or one
is odd and the other even, the canonical path makes a ``diagonal''
move to switch parity and we have to argue that the property of being
unimodal is not violated. It turns out that this is implied by the
unimodality of the continuous function $\widetilde{\pi}$ on the lines
$\ell_t$ and $\ell_{GB}$.
We first show that along the lines $\ell_t$ and
$\ell_{\lambda}$ the distribution $\widetilde{\pi}$ is unimodal.

\begin{claim}
\label{d1} Let $\beta^* = \frac{\mu}{n}$ where $\mu > 4\log(2)$ and
let $\ell_t=\{\sigma \in \Omega_{RGB}: \sigma_1=t\}$. Then there
exists a constant $n_0$, such that $\forall n \ge n_0$ the
function $\widetilde{\pi}(\sigma)$ when restricted to $L_t$ is
maximized at $\sigma_2=\sigma_3=\frac{n-t}{2}$ and is
non-increasing as $\sigma_3$ decreases, for all $\lambda_{min} \le t \le 1$.
\end{claim}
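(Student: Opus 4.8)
The plan is to reduce Claim~\ref{d1} to a one–variable unimodality computation along the line $\ell_t$, in the spirit of Lemmas~\ref{diff} and~\ref{diff2}. The first observation is that the hypothesis $\sigma_1=t\ge\lambda_{min}n$ forces $\sigma\notin K$ for every $\sigma$ on $\ell_t$, so that $\widetilde{\pi}(\Omega_\sigma)=\pi_{\beta^*}(\Omega_\sigma)/\widetilde{Z}$ throughout $\ell_t$: on this part of $\Omega_{RGB}$ the flattening of the disordered mode is inactive, and it therefore suffices to prove the statement with $\widetilde{\pi}$ replaced by $\pi_{\beta^*}(\Omega_\sigma)\propto\binom{n}{\sigma_1\;\sigma_2\;\sigma_3}e^{\overline{\beta^*}(\sigma_1^2+\sigma_2^2+\sigma_3^2)}$.

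I would then parametrize $\ell_t\cap\Omega_{RGB}$ by $\sigma_3=x$, $\sigma_2=n-t-x$, with $x$ an integer in $[0,(n-t)/2]$, and isolate the $x$–dependent factor
\[
F(x)\ =\ \frac{1}{(n-t-x)!\;x!}\;\exp\!\Big(\overline{\beta^*}\big((n-t-x)^2+x^2\big)\Big).
\]
Writing $(n-t-x)^2+x^2=\tfrac{(n-t)^2}{2}+2\big(x-\tfrac{n-t}{2}\big)^2$ makes the competition explicit: the balanced point $x=(n-t)/2$ simultaneously maximizes the multinomial factor and minimizes the energy exponent, and the claim asserts that the multinomial wins. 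To prove this I would pass, exactly as in Lemma~\ref{diff2}, to the $\Gamma$–function interpolant $\overline{\pi}_{\beta_i}$, set $t=\tau n$, and compare the logarithmic derivatives $f'/f$ (from the energy factor, affine in the third–colour fraction and vanishing at the balanced point) and $g'/g$ (from the reciprocal $\Gamma$ factors, a difference of digamma–type sums, also vanishing there). The integral bound $\sum_{k\ge cn}k^{-2}\ge 1/cn$ used there shows that $(g'/g)'$ exceeds $(f'/f)'$ at every point of the interval as soon as $\overline{\beta^*}$ is small enough — concretely, as soon as $\mu(1-\tau)<2$ — so $f'/f$ and $g'/g$ cross exactly once; hence $\overline{\pi}_{\beta_i}$ is unimodal along $\ell_t$ with peak at the balanced point, and on the side $\sigma_2\ge\sigma_3$, which is the part lying in $\Omega_{RGB}$, it is on its decreasing branch, i.e.\ non-increasing as $\sigma_3$ decreases. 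Passing from $F$ back to this continuous interpolant costs only $1+O(1/n)$ multiplicative factors, and that is what pins down the threshold $n_0$. (One can equivalently argue entirely discretely, checking $F(x+1)/F(x)=\tfrac{n-t-x}{x+1}\,e^{2\overline{\beta^*}(2x+1-(n-t))}\le 1$ on $[0,(n-t)/2)$, which reduces to the same inequality on $\mu$.)

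It remains to verify $\mu(1-\tau)<2$ uniformly for $\lambda_{min}n\le t\le n$; since $1-\tau\le 1-\lambda_{min}$ it is enough to check $\mu(1-\lambda_{min})<2$. I would deduce this from the definition of $\lambda_{min}$ as the interior minimizer of $\pi_{\beta^*}(\sigma_\lambda)$ along $\ell_{GB}$: its defining equation $\log\frac{1-\lambda_{min}}{2\lambda_{min}}=\frac{\mu}{2}(1-3\lambda_{min})$ already forces $\lambda_{min}>\tfrac13$, and, using that $\lambda_{min}=\tfrac12$ exactly at $\beta_c$ (where $\mu=4\ln2$) while $\lambda_{min}<\tfrac12$ for $\beta^*>\beta_c$, a short estimate of that equation gives $\mu(1-\lambda_{min})<2$.

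The main obstacle is precisely this energy–versus–entropy balance. The factor $e^{2\overline{\beta^*}(x-(n-t)/2)^2}$ actively drives mass toward the imbalanced endpoints $\sigma_2=n-t$ and $\sigma_3=n-t$, and the claim is true only because $\overline{\beta^*}=\mu/(2n)$ is small enough relative to the curvature of the multinomial over a segment of length $n-t\le(1-\lambda_{min})n$. Extracting the sharp constant — proving $\mu(1-\lambda_{min})<2$, rather than the weaker bound $\tfrac{2}{1-\lambda_{min}}<3\mu$ that comes for free from the second–derivative test at $\lambda_{min}$ — is where the actual work lies; everything else is the Stirling–and–digamma bookkeeping already performed in Lemmas~\ref{diff} and~\ref{diff2}.
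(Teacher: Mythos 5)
Your route is essentially the paper's: the paper proves Claim \ref{d1} by simply invoking the calculation of Lemma \ref{diff2}, and you carry out exactly that log-derivative/digamma comparison along $\ell_t$ for general $t\ge\lambda_{min}n$. Two things you add are genuinely worth keeping. First, the observation that no $\sigma$ with $\sigma_1=t\ge\lambda_{min}n$ lies in $K$, so $\widetilde{\pi}\propto\pi_{\beta^*}$ on $\ell_t$ and the flattening is irrelevant there; the paper never says this. Second, you make the constant in the second-derivative comparison explicit: with the correct scaling $(f'/f)'=4\overline\beta_i n^2$ (the displayed $f'/f$ in the proof of Lemma \ref{diff2} drops a factor of $n$, which makes the comparison look like a ``large $n$'' statement), the inequality $(g'/g)'>(f'/f)'$ is precisely the condition $\mu(1-\tau)<2$, needed uniformly for $\tau\in[\lambda_{min},1]$, i.e.\ $\mu(1-\lambda_{min})<2$. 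This constant is not an artifact of the crude bound: if $\mu(1-\tau)>2$ the balanced point is a strict local minimum of the density along $\ell_t$, so the claim would fail there. (The analogous constant check in Lemma \ref{diff} is the inequality $8>8\ln 2$.)

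The one step you assert but do not actually prove is $\mu(1-\lambda_{min})<2$, and the hint you give is not sufficient as stated: knowing $\lambda_{min}=1/2$ at $\mu=4\ln 2$ and that $\lambda_{min}$ decreases in $\mu$ does not bound the product, since $\mu(1-\lambda_{min}(\mu))$ is increasing and tends to $2$ as $\mu\uparrow 3$ (where $\lambda_{min}\downarrow 1/3$). The inequality is nevertheless true and follows from the stationarity equation you wrote down: substituting $\mu=\frac{2}{3\lambda-1}\ln\frac{2\lambda}{1-\lambda}$ at $\lambda=\lambda_{min}$, the claim $\mu(1-\lambda_{min})<2$ is equivalent to $u(\lambda):=3\lambda-1-(1-\lambda)\ln\frac{2\lambda}{1-\lambda}>0$, and $u(1/3)=u'(1/3)=0$ with $u''(\lambda)=\frac{1}{\lambda(1-\lambda)}+\frac{1}{\lambda^2}>0$, so $u>0$ on $(1/3,1)$. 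Note also that this argument (and the claim itself) implicitly requires $\lambda_{min}>1/3$, i.e.\ $\mu<3$; for $\mu\ge 3$ the disordered point is no longer a local maximum along $\ell_{GB}$ and there is no interior minimizer $\lambda_{min}$ --- a restriction that is implicit in the paper's definition of $\lambda_{min}$ as well, so it is a caveat on the statement rather than a defect of your proof.
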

The proof follows by the same calculations made in the proof of Lemma \ref{diff2}.

\begin{claim}
\label{d2} Let $\beta^* = \frac{\mu}{n}$ where $\mu > 4\log(2)$. For $n$ sufficiently large, 
$\widetilde{\pi}_{\beta^*}(\Omega_{{\lambda}})$
has a unique maximum $\lambda_{max}$ in $(1/3,\lambda_{min}]$ and is non-increasing on either
side of it.
\end{claim}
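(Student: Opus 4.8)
\emph{Proof plan.} The plan is to reduce the claim, exactly as in Lemmas~\ref{diff} and~\ref{diff2}, to the analysis of a one-dimensional continuous profile along $\ell_{GB}$, and then to read off the shape from the convexity structure forced by the hypothesis $\mu>4\log 2$. Parametrize $\ell_{GB}$ by $\lambda\in[1/3,1]$ through $\sigma_\lambda=\bigl(\lambda n,\tfrac{(1-\lambda)n}{2},\tfrac{(1-\lambda)n}{2}\bigr)$. For $\lambda\ge\lambda_{min}$ the class $\sigma_\lambda$ is not flattened, so $\widetilde\pi(\Omega_{\sigma_\lambda})=\pi_{\beta^*}(\Omega_{\sigma_\lambda})/\widetilde Z$, and Stirling gives $\tfrac1n\log\pi_{\beta^*}(\Omega_{\sigma_\lambda})=\phi(\lambda)-\tfrac1n\log\overline Z_{RGB}(\beta^*)+O(\tfrac{\log n}{n})$ with
\[
\phi(\lambda)\;=\;-\lambda\log\lambda-(1-\lambda)\log\tfrac{1-\lambda}{2}+\tfrac{\mu}{2}\Bigl(\lambda^2+\tfrac{(1-\lambda)^2}{2}\Bigr).
\]
Equivalently one can work with the exact ratio of consecutive values along the diagonal moves of the canonical path, which equals $\tfrac{(1-\lambda)^2}{4\lambda^2}e^{\mu(3\lambda-1)}(1+o(1))$ and has the sign of $\phi'(\lambda)$; the $O(\log n/n)$ corrections will be negligible because, as shown below, $\phi'$ is bounded away from $0$ except near its unique interior zero and $\phi$ is strictly concave there.

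First I would compute $\phi'(\lambda)=\log\tfrac{1-\lambda}{2\lambda}+\tfrac{\mu}{2}(3\lambda-1)$ and $\phi''(\lambda)=\tfrac{3\mu}{2}-\tfrac{1}{\lambda(1-\lambda)}$, noting $\phi'(1/3)=0$ (blue--green symmetry) and $\phi'(\lambda)\to-\infty$ as $\lambda\to1$. A short computation shows that, since $\mu>4\log2$ exceeds the threshold at which $\lambda(1-\lambda)=2/(3\mu)$ acquires two solutions in $(0,1)$, the function $\phi''$ has exactly two sign changes: $\phi$ is strictly convex on one interval $(\lambda_-,\lambda_+)$ around $1/2$ and strictly concave on $(0,\lambda_-)\cup(\lambda_+,1)$. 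This single ``convex bump'', together with $\phi'(1/3)=0$, $\phi'(1^-)=-\infty$, and the fact that $\beta^*>\beta_c$ forces $\phi$ to have a local maximum in the ordered region (established by comparing $\phi$ at its interior critical point with $\phi(1/3)$, or taken from the coexistence picture), pins down the sign pattern of $\phi'$ on $[\lambda_{min},1]$: there is a unique zero $\lambda_{max}\in(\lambda_+,1)$ with $\phi'>0$ on $(\lambda_{min},\lambda_{max})$ and $\phi'<0$ on $(\lambda_{max},1)$, so $\phi$ is strictly increasing then strictly decreasing on $[\lambda_{min},1]$. The one wrinkle is whether $\lambda=1/3$ is a local maximum of $\phi$ (when $\mu$ is close to $4\log2$, so $\phi''(1/3)<0$) or a local minimum (for larger $\mu$); in the first case $\lambda_{min}\in(\lambda_-,\lambda_+)$ is a genuine interior minimum, in the second $\lambda_{min}=1/3$, but the conclusion on $[\lambda_{min},1]$ is unchanged.

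It remains to glue on the flattened segment and pass to the discrete setting. For $\lambda\in[1/3,\lambda_{min})$ one has $\phi(\lambda)>\phi(\lambda_{min})$ (that interval is empty when $\lambda_{min}=1/3$, and $\phi$ decreases on it otherwise), hence $\sigma_\lambda\in K$ and $\widetilde\pi(\Omega_{\sigma_\lambda})=\pi_{\beta^*}(\Omega_{\lambda_{min}})/\widetilde Z$ is constant there and matches the value at $\lambda_{min}$. Thus along $\ell_{GB}$ the sequence $\widetilde\pi(\Omega_{\sigma_\lambda})$ is constant on $[1/3,\lambda_{min}]$, strictly increasing on $[\lambda_{min},\lambda_{max}]$, and strictly decreasing on $[\lambda_{max},1]$; since the plateau height is strictly below the value at $\sigma_{\lambda_{max}}$ and below every value on $(\lambda_{min},\lambda_{max}]$, the maximizer is the single class $\sigma_{\lambda_{max}}$ and $\widetilde\pi$ is non-increasing as one moves away from it in either direction, which is the claim. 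Replacing $\phi$ by the true discrete values perturbs $\tfrac1n\log\widetilde\pi$ by only $O(\log n/n)$, which for $n$ large cannot create a spurious second local maximum, given the definite sign of $\phi'$ away from $\lambda_{max}$ and the strict concavity of $\phi$ near $\lambda_{max}$. I expect the main obstacle to be organizational rather than conceptual: carrying the two sub-cases $\mu\lessgtr3$ through uniformly and, in particular, verifying rigorously that the ordered-region critical value of $\phi$ dominates the valley value so that $\lambda_{max}$ is the \emph{global} maximizer --- this is exactly where the hypothesis $\mu>4\log2$ (i.e.\ $\beta^*>\beta_c$) enters.
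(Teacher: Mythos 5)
Your proof is correct and follows essentially the same route as the paper's: reduce to the continuous profile of $\pi_{\beta^*}$ along $\ell_{GB}$, locate its unique stationary point $\lambda_{max}$ in the ordered region by differentiating and checking the second derivative, and then observe that the flattening on $K$ makes $\widetilde{\pi}$ constant at the valley height below $\lambda_{min}$, hence non-increasing on either side of $\lambda_{max}$. You in fact give more detail than the paper does (the two sign changes of $\phi''$, the $\mu \lessgtr 3$ dichotomy, and the discrete $O(\log n/n)$ corrections); your placement of $\lambda_{max}$ beyond $\lambda_{min}$ agrees with the paper's own proof and figure, the interval ``$(1/3,\lambda_{min}]$'' in the claim statement being evidently a typo.
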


\begin{proof}
We examine the continuous extension
$\overline{\pi}$ of the original distribution $\pi$.
\begin{eqnarray*}
\overline{\pi}_{\beta^*} \left(\lambda n,
\frac{(1-\lambda)n}{2},\frac{(1-\lambda)n}{2} \right) =
{n \choose \lambda n,\frac{(1-\lambda)n}{2},\frac{(1-\lambda)n}{2}}
  \frac{e^{\beta^* n^2 \left(\lambda^2+2\left( \frac{1-\lambda}{2}
\right)^2 \right) }}{Z(\beta^*)}
\end{eqnarray*}
Neglecting factors not explicitly dependent on $\lambda$,
asymptotically, we obtain the function
\begin{eqnarray*}
%& \frac{f(\lambda)}{g(\lambda)} = &
%\frac{
e^{\frac{\beta^*n^2}{2}(3\lambda^2-2\lambda+1)
-\lambda n \ln(\lambda)-(1-\lambda)n \ln\left(\frac{1-\lambda}{2}\right)}
%}
%{\left (\lambda \left( \frac{1-\lambda}{2}\right)^2
%\right)^\frac{1}{2n}\lambda^\lambda
%\left(\frac{1-\lambda}{2}\right)^{(1-\lambda)}}
\end{eqnarray*}
The claim can be verified by differentiating it, solving for the
stationary point $\lambda_{max}$, and checking the second derivative.
%\eat{By comparing the rates of growth of the functions $\frac{f'}{f}$
%and $\frac{g'}{g}$, it can be verified that
%$x_{\max}=0.676811+o(1)$ for $n \ge 100$. } 
By construction,
$\widetilde{\pi}_{\beta^*}$ is non-increasing on either side of $\lambda_{max}$
for $\frac{1}{3}~\le~\lambda~\le~1$.

\end{proof}

%\begin{proof}We examine the continuous interpolation
%$\overline{\pi}_{\beta^*}$ of the original discrete distribution $\pi_{\beta^*}$.
%%\begin{eqnarray*}
%%\overline{\pi}_{\beta^*} \left(\lambda n,
%%\frac{(1-\lambda)n}{2},\frac{(1-\lambda)n}{2} \right) =
%%{n \choose \lambda n,\frac{(1-\lambda)n}{2},\frac{(1-\lambda)n}{2}}
%%  \frac{e^{\beta^* n^2 \left(\lambda^2+2\left( \frac{1-\lambda}{2}
%%\right)^2 \right) }}{Z(\beta^*)}
%%\end{eqnarray*}
%\begin{align*}
%\overline{\pi}_{\beta^*} \left(\lambda n,
%\frac{(1-\lambda)n}{2},\frac{(1-\lambda)n}{2} \right) =
%\frac{\Gamma(n)}{\Gamma(\lambda n) \left(\Gamma\left(\frac{(1-\lambda)n}{2}\right) \right)^2}
%  \frac{e^{\overline \beta^* n^2 \left(\lambda^2+2\left( \frac{1-\lambda}{2}
%\right)^2 \right) }}{\overline Z(\beta^*)}
%\end{align*}
%
%Differentiating, we obtain that a stationary point must satisfy
%\begin{align*}
%\overline \beta^*  (3\lambda - 1) + \sum_{k=1}^\infty \frac{(1-3\lambda)}{2(k+\lambda n-1)(k + \left( \frac{1-\lambda}{2}\right)n-1)} = 0
%\end{align*}
%Clearly, there is a stationary point at $\lambda=1/3$.
%
%
%The claim can be verified by differentiating it, solving for the
%stationary point $\lambda_{max}$, and checking the second derivative.
%By construction,
%$\widetilde{\pi}_{\beta^*}$ is non-increasing on either side of $\lambda_{max}$
%for $\frac{1}{3}~\le~\lambda~\le~1$.
%\end{proof}

Finally, along the ``diagonal'' portions of the
path the change in the value of the distribution will be the net
change if we were to move
in a continuous fashion horizontally and then vertically. Since along
both these segments the change will be of the same sign if the
segments on either end are of the same type (increasing or
decreasing), by the two claims above, the net change will be positive
or negative as required by unimodality.\hfill

The Metropolis chain at $\beta^*$ mixes
torpidly, and by the above lemmas we can bound the mixing time. Note
that the proof uses a stronger version of the Comparison Theorem.

To use the comparison theorem to infer a bound on the mixing time of $P$ from
that of $\widetilde{P}$ we need good bounds on the parameters $A$ and $a$.
It turns out that $A$ is the insignificant factor in the mixing time,
rather, $a$ determines
the mixing time of $P$.  In contrast, most previous applications of the
comparison theorem consider chains with identical stationary distributions, so
typically the parameter $a=1$. \\

\noindent{\bf Proof of Proposition \ref{prop:metropolis-rapid}.}  We will use the
refined comparison theorem of Diaconis and Saloff-Coste, Theorem
\ref{thm:refined-compthm}. Note
that the two Markov kernels are
identical,
but their stationary distributions are very different near the
disordered state.  Since the kernels are identical, we can simply define
trivial canonical paths, i.e., when we decompose a step in the unknown
chain $\overline{Q}$ with stationary distribution $\pi_{\beta^*}$ into a
path using steps from the known chain $\overline{Q}$ with distribution
$\widetilde{\pi}$, these paths
all have length 1.  It can be verified that the
Metropolis transition probabilities on the two chains are always
within a polynomial factor of each other and $\max_x (\widetilde{\pi}(x)
/ \pi(x))$ is at most a polynomial since flattening the distribution
has a negligible effect on the partition function.
\begin{claim}Let $\beta^* = \frac{\mu}{n}$ where $\mu > 4\log(2)$. Then,
  $$Z_{RGB}(\beta^*)/n^{O(1)} \leq \widetilde{Z} \leq
  Z_{RBG}(\beta^*).$$
\end{claim}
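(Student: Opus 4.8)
The plan is a term-by-term comparison of $\widetilde{Z}$ with $Z_{RGB}(\beta^*)=\sum_{\sigma\in K_{RGB}}\pi_{\beta^*}(\Omega_\sigma)$, the analogue of $\widetilde Z$ built from the un-flattened weights. For the upper bound, observe that $\widetilde{Z}$ is obtained from $Z_{RGB}(\beta^*)$ by keeping the contribution $\pi_{\beta^*}(\Omega_\sigma)$ of each $\sigma\in\Omega_{RGB}\setminus K$ and replacing that of each $\sigma\in K$ by $\pi_{\beta^*}(\Omega_{\lambda_{min}})$. By the very definition of $K$ every $\sigma\in K$ satisfies $\pi_{\beta^*}(\Omega_\sigma)\ge\pi_{\beta^*}(\Omega_{\lambda_{min}})$, so each replaced term only shrinks and summing gives $\widetilde{Z}\le Z_{RGB}(\beta^*)$ at once.

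For the lower bound I would discard the first (nonnegative) sum defining $\widetilde{Z}$ and keep only $\widetilde{Z}\ge\pi_{\beta^*}(\Omega_{RGB}\setminus K)$, so that it suffices to bound this below by $n^{-O(1)}$. Since $\lambda_{min}$ is the location of the minimum of $\pi_{\beta^*}$ along $\ell_{GB}$ and therefore lies strictly between the disordered class $\Omega_{n/3}$ and the ordered class $\Omega_{\lambda_{max}}$, the ordered class has first coordinate $\lambda_{max}n\ge\lambda_{min}n$ and hence lies in $\Omega_{RGB}\setminus K$; thus $\widetilde{Z}\ge\pi_{\beta^*}(\Omega_{\lambda_{max}})$. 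It then remains to check $\pi_{\beta^*}(\Omega_{\lambda_{max}})\ge n^{-O(1)}$. For this I would argue that at $\beta^*>\beta_c$ the ordered mode is the global maximizer of $\sigma\mapsto\pi_{\beta^*}(\Omega_\sigma)$ over $K_{RGB}$: the density has only the two local maxima $\Omega_{n/3}$ and $\Omega_{\lambda_{max}}$ (the stationary-point analysis of the mean-field Potts density, exactly as in the proofs of Lemma~\ref{diff} and Lemma~\ref{lem:exp}(iii)), and the ratio $\pi_{\beta}(\Omega_{\lambda_{max}})/\pi_{\beta}(\Omega_{n/3})$ is a fixed ratio of multinomial coefficients times $e^{\overline{\beta}\,(\overline{H}(\Omega_{\lambda_{max}})-\overline{H}(\Omega_{n/3}))}$, which is strictly increasing in $\beta$ because $\overline{H}(\Omega_{\lambda_{max}})>\overline{H}(\Omega_{n/3})$ (the same Stirling computation as in Lemma~\ref{8}); since by Lemma~\ref{lem:exp}(i) this ratio is $1+o(1)$ at $\beta_c$, it exceeds $1$ for every $\beta^*>\beta_c$, so $\Omega_{\lambda_{max}}$ dominates $\Omega_{n/3}$ and is the global maximizer. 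Consequently $\pi_{\beta^*}(\Omega_{\lambda_{max}})\ge 1/|K_{RGB}|=\Omega(n^{-2})$ by averaging over the $O(n^2)$ equivalence classes, whence $\widetilde{Z}\ge\Omega(n^{-2})$, which is the claimed bound.

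The routine parts are the term-by-term upper bound and the final pigeonhole estimate $\pi_{\beta^*}(\Omega_{\lambda_{max}})\ge 1/|K_{RGB}|$. The step that needs real care — and the one I would flag as the main obstacle — is establishing that for $\beta^*>\beta_c$ the ordered mode, rather than the disordered one, is the global maximizer of the density on $K_{RGB}$ (equivalently, that $\pi_{\beta^*}(\Omega_{RGB}\setminus K)$ is not exponentially small): this is precisely the statement that $\beta_c$ is the coexistence temperature, and it is obtained by rerunning the Stirling asymptotics of Lemma~\ref{lem:exp} and Lemma~\ref{8} at $\beta^*$ in place of $\beta_c$, together with the unimodality of the density away from the two modes (Lemma~\ref{diff}, Claim~\ref{d1}). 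No genuinely new estimates beyond those already in the section are required.
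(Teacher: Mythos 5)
Your proof is correct and takes essentially the same route as the paper: the upper bound is the same term-by-term observation that flattening only decreases the weights on $K$, and the lower bound likewise discards the flattened sum and reduces to showing that the unflattened region carries at least an inverse-polynomial fraction of the measure, i.e.\ $\pi_{\beta^*}(\Omega_{RGB}\setminus K)\ge n^{-O(1)}$. The paper simply asserts this last fact (``for $\beta^*>\beta_c$ the stationary probability on $K_2$ is at least $1/n^{O(1)}$ of the total measure''), whereas you supply the natural justification implicit there --- the ordered mode lies outside $K$ since $\lambda_{max}\ge\lambda_{min}$, it dominates the disordered mode once $\beta^*>\beta_c$ by monotonicity of the ordered/disordered ratio in $\beta$ together with the coexistence computation of Lemma \ref{lem:exp}, and a pigeonhole over the $O(n^2)$ equivalence classes finishes the bound --- so your write-up is a correct filling-in of the same argument.
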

\begin{proof}The upper bound is easy to see by the definition of
  $\widetilde{Z}$. By the construction of the flattened distribution,
  $\widetilde{Z} \leq Z_{RGB}(\beta^*)$. For the lower bound, we have
\begin{eqnarray*}
\widetilde{Z} & = & Z_{RGB}(\beta^*) \left(\displaystyle\sum_{\sigma \in
  K_1}\pi_{\beta^*}(\Omega_{\lambda_{min}}) +
  \displaystyle\sum_{\sigma \in K_2} \pi_{\beta^*}(\Omega_\sigma)\right) \\
& \geq & Z_{RGB}(\beta^*) \left(
  \displaystyle\sum_{\sigma \in K_2} \pi_{\beta^*}(\Omega_\sigma)\right) \\
& \geq  & Z_{RGB}(\beta^*)/n^{O(1)}
\end{eqnarray*}
The last inequality follows because for $\beta^* > \beta_c$, the stationary
probability on $K_2$ is at least $1/n^{O(1)}$ of the total measure.
\end{proof}
Hence the parameter $A$ is bounded by a
polynomial. Finally, we can compare the largest variation in the distributions
$\pi$ and $\widetilde{\pi}$ to bound $a$. Let $x$ be any configuration
in $\sigma_{1/3}$, any $x_*$ a
configuration in $\sigma_{\lambda_{min}}$ we
have
\begin{eqnarray*}
a \ = \ \frac{\widetilde{\pi}_{\beta^*}(x)}{\pi_{\beta^*}(x)}
=
 \frac{\pi_{\beta^*}(\Omega_{\lambda_{min}}) Z_{RGB}/ \widetilde{Z}}
 {\pi_{\beta^*}(\Omega_{{1/3}})} \geq
\frac{\pi_{\beta^*}(\Omega_{\lambda_{min}})}
 {\pi_{\beta^*}(\Omega_{{1/3}})} \frac{1}{n^{O(1)}}
\end{eqnarray*}
Plugging these bounds into the comparison theorem (Theorem
\ref{thm:refined-compthm}) then implies Proposition \ref{prop:metropolis-rapid}.

\bibliographystyle{plain}
\bibliography{bibliography}

\end{document}